\newtheorem{theorem}{Theorem}
\newtheorem{lem}[theorem]{Lemma}
\newtheorem{prop}[theorem]{Proposition}
\newtheorem{cor}[theorem]{Corollary}
\newtheorem*{prop*}{Proposition}
\definecolor{navyblue}{rgb}{0.0, 0.0, 0.5}
\DeclareMathOperator{\suc}{Succ}
\newcommand*{\bid}{\mathbf{1}}
\newcommand*{\cC}{\mathcal{C}}
\newcommand*{\cE}{\mathcal{E}}
\newcommand*{\cG}{\mathcal{G}}
\newcommand*{\cH}{\mathcal{H}}
\newcommand*{\cI}{\mathcal{I}}
\newcommand*{\dI}{\mathbb{I}}
\newcommand*{\cN}{\mathcal{N}}
\newcommand*{\cK}{\mathcal{K}}
\newcommand*{\cM}{\mathcal{M}}
\newcommand*{\cL}{\mathcal{L}}
\newcommand*{\cP}{\mathcal{P}}
\newcommand*{\cS}{\mathcal{S}}
\newcommand*{\fS}{\mathfrak{S}}
\newcommand*{\cU}{\mathcal{U}}
\newcommand*{\cV}{\mathcal{V}}
\newcommand*{\cX}{\mathcal{X}}
\newcommand*{\eps}{\varepsilon}
\newcommand*{\id}{\mathrm{id}}
\newcommand*{\tr}[1]{\mathrm{Tr}\left[#1\right]}
\newcommand*{\ptr}[2]{\mathrm{Tr}_{#1}\left[#2\right]}
\newcommand*{\proj}[1]{|#1\rangle\!\langle #1|}
\newcommand*{\mge}{\succcurlyeq}
\newcommand*{\mle}{\preccurlyeq}
\newcommand*{\pr}[1]{\mathbb{P}\left[#1 \right]}
\newcommand*{\ex}[1]{\mathbb{E}\left[#1 \right]}
\newcommand*{\exs}[2]{\mathbb{E}_{#1}\left[#2 \right]}
\begin{document}

\title{Quantum channel coding:
 Approximation algorithms and strong converse exponents}

\author{Aadil Oufkir}
\email{aadil.oufkir@gmail.com}
\affiliation{The UM6P Vanguard Center, Mohammed VI Polytechnic University, Rocade Rabat-Sal\'e,
Technopolis, Morocco}
\affiliation{Institute for Quantum Information,
  RWTH Aachen University,
  Aachen, Germany}
\orcid{0000-0002-8594-1488}
\author{Mario Berta}
\orcid{0000-0002-0428-3429}
\affiliation{Institute for Quantum Information,
  RWTH Aachen University,
  Aachen, Germany}
\maketitle

\begin{abstract}
   We study relaxations of entanglement-assisted quantum channel coding and establish that non-signaling assistance and a natural semi-definite programming relaxation\,---\,termed meta-converse\,---\,are equivalent in terms of success probabilities. We then present a rounding procedure that transforms any non-signaling-assisted strategy into an entanglement-assisted one and prove an approximation ratio of $(1 - e^{-1})$ in success probabilities for the special case of measurement channels. For fully quantum channels, we give a weaker (dimension dependent) approximation ratio, that is nevertheless still tight to characterize the strong converse exponent of entanglement-assisted channel coding [Li and Yao, IEEE Tran.~Inf.~Theory (2024)]. Our derivations leverage ideas from position-based coding, quantum decoupling theorems, the matrix Chernoff inequality, and input flattening techniques.
\end{abstract}

\section{Introduction}

\subsection{Motivation}

Channel coding lies at the heart of quantum information theory, focusing on the reliable transmission of information over noisy quantum channels. The classical understanding of this field is shaped by asymptotic capacity theorems for classical~\cite{shannon1948mathematical}, classical-quantum~\cite{Holevo1998Jan,Schumacher1997Jul} and quantum channels~\cite{Lloyd97,Shor02,Devetak05,Bennett1999Oct,Bennett2002Oct}, which describe the maximum achievable communication rates over many channel uses under different types of assistance (such as, e.g., entanglement or feedback). More recent work has refined some of this understanding by going beyond first order asymptotic limits and exploring, e.g., the small deviation regime (for classical \cite{Str62,Hayashi09,polyanskiy2010channel} and quantum channels \cite{Tomamichel2015Aug,Datta2016Jun}), or the large deviation regime with error exponents (for classical \cite{Fano1961Nov,Shannon1967Jan,Haroutunian2008Feb} and quantum channels \cite{Dalai2013Sep, Beigi2023Oct,Li2024Jul,Renes2024Jul,Oufkir24}) as well as strong converse exponents (for classical \cite{Arimoto1973May,imre} and quantum channels \cite{Winter2002Aug,Ogawa2002Aug,Gupta2015Mar,Li2023Nov}). These refinements highlight the trade-offs between rate, reliability, and resources that arise in practical communication scenarios. An even more recent development in channel coding is the exploration of the true one-shot setting in a tight manner (see, e.g., \cite{polyanskiy2010channel,Buscemi2010Mar,Wang12,Matthews2014Sep,Leung2015Jun,barman2017algorithmic,Fawzi19,Wang2019Feb,Berta2022Jul} and references therein). This departs from the traditional assumption of infinite channel uses \cite{Tomamichel2016}. Instead, it focuses on the performance limits when only a single or finite number of uses are available. This approach opens the door to studying quantum communication in even more realistic, constrained scenarios, providing new insights into achievable rates and error behaviors.

In our work, we explore quantum channel coding in the one-shot framework from an algorithmic perspective and aim to provide a broader understanding of quantum channel coding, both in the asymptotic and non-asymptotic regimes. In classical channel coding, this idea was first introduced by \cite{barman2017algorithmic},   which provided a simple and efficient approximation algorithm that returns a code achieving a $(1 - \frac{1}{e})$ approximation of the maximum success probability attainable for a given classical channel. Notably, the approximation algorithm corresponds to coding over the same channel with more powerful assistance called non-signaling correlations. Moreover, in the classical setting non-signaling strategies are shown by \cite{matthews2012linear} to correspond to  a natural linear programming relaxation\,---\,the so-called meta-converse.\footnote{The meta-converse is a versatile linear programming bound that implies many converse results for channel coding \cite{polyanskiy2010channel}. The derivation and the formulation of the meta-converse use the hypothesis testing framework: the channel at hand is tested against an arbitrary constant channel.} In the same spirit, using non-signaling correlations to design approximation algorithms of the success probability has been applied to coding over classical-quantum channels \cite{Fawzi19},  multiple-access channels \cite{fawzi2023multiple}, and broadcast channels \cite{fawzi2023broadcast}.

Although not discussed in \cite{barman2017algorithmic}, it is straightforward to see that the multiplicative constant approximation $(1 - \frac{1}{e})$ is already sufficient to deduce that the plain (unassisted) strong converse exponent of a channel\,---\,i.e., the rate at which the success probability approaches $0$ when the transmission rate is strictly above the channel capacity\,---\,is the same as when non-signaling correlations are allowed. This allows for a connection between the strong converse exponent and the  exponent of  composite hypothesis testing, a problem widely studied and understood in the literature (see, e.g., \cite{Hayashi2016Oct}). This method can be seen as an alternative way of reproving the strong converse exponent, a result already established by \cite{Arimoto1973May,imre} (see also \cite{polyanskiy2010arimoto}). The advantage of this proof strategy is to discern between unassisted (or shared-randomness) and non-signaling assisted strategies. As we discuss later, we apply this method in the quantum setting as well.

In \cite{Fawzi19}, the authors generalized the approximation results of \cite{barman2017algorithmic} to the classical-quantum setting. In that work, two approximations of the success probabilities are proven: one is multiplicative, depending logarithmically on the size of the channel, and the other is additive, which is only non-trivial when the sizes of the codes differ. It turns out that the multiplicative approximation, while not as tight as the classical $(1 - \frac{1}{e})$ one proven in \cite{barman2017algorithmic},  is still sufficient to show that unassisted and non-signaling assisted strategies have the same strong converse exponent. This reduces the problem of finding the strong converse exponent to the case where non-signaling correlations are allowed. Perhaps surprisingly, coding with non-signaling correlations does not exactly correspond to the semi-definite programming meta-converse bound (or composite hypothesis testing) for classical-quantum channels \cite{Matthews2014Sep,Wang2019Feb}, unlike in the classical setting \cite{matthews2012linear}. Nevertheless, it can be shown that they have the same strong converse exponents even for quantum channels.

Motivated by these approximation results and their powerful implications for large deviation refinements, we pose the same questions in the quantum setting: 
\begin{center}
    \emph{Are similar one-shot approximation algorithms possible for quantum channel coding?}
    \\ \emph{If so, are they sufficient to establish large deviation refinements for quantum channels?}
\end{center}
In this work, we consider the problem of channel coding over quantum-classical and quantum channels with entanglement assistance in the one-shot setting. From an algorithmic perspective, as discussed in \cite{barman2017algorithmic,Fawzi19,fawzi2023multiple,fawzi2023broadcast,berta2024optimality}, our aim is to design efficient approximation algorithms that, given a complete description of the noisy channel, return near-optimal codes that maximize the success probability for a fixed number of messages to be transmitted.

%%%%%%%%%%%%%%%%%%%%%%%%%%%%%%%%%%%%%%%%%%%%%%%%

\subsection{Overview of findings}

Following the works of \cite{polyanskiy2010channel,matthews2012linear, Matthews2014Sep,barman2017algorithmic,Fawzi19}, we consider a natural SDP relaxation for channel coding over quantum channels known as the meta-converse. This relaxation is closely related to coding with non-signaling assistance \cite{Wang2019Feb}. We then investigate the relation between the meta-converse,  non-signaling and entanglement assistance in the one-shot setting with a focus on the success probability. Moreover, we apply our one-shot findings along with known results for the hypothesis testing problem to deduce optimal strong converse exponents.

\paragraph{Non-signaling assistance and meta-converse.} The meta-converse (MC) reflects some constraints that any coding scheme should satisfy.  The meta-converse success probability is closely related to that of coding when the sender and receiver share non-signaling (NS) correlations. For coding over a quantum channel $\cN$ with a fixed number of messages $M$, the non-signaling success probability, denoted $\suc^{\rm{NS}}(\cN, M)$, is the solution to the following SDP program \cite{Matthews2014Sep,Wang2019Feb}:
% \begin{align}
%     &\suc^{\rm{NS}}(\cN, M) \\&= \sup_{\rho_R\in \cS(R)}\sup_{\Lambda_{RB}} \Big\{\tfrac{1}{M} \tr{\Lambda_{RB} \cdot (J_{\cN})_{RB}} \;\Big|\; \Lambda_B =  \mathbb{I}_B\,,\; 0\mle  \Lambda_{RB} \mle M\rho_R \otimes  \mathbb{I}_B\Big\}.
% \end{align}
\begin{equation}\label{eq-intro:NS-program}
    \begin{split}
\suc^{\rm{NS}}(\mathcal{N}, M)\!=\! 
\sup_{\rho_R, \Lambda_{RB}} \quad 
& \frac{1}{M} \tr{ \Lambda_{RB} \cdot (J_{\mathcal{N}})_{RB} }
\\
\text{subject to} \quad 
& \rho_R \in \mathcal{S}(R), \\
& \Lambda_B = \mathbb{I}_B, \\
& 0 \mle \Lambda_{RB} \mle M \rho_R \otimes \mathbb{I}_B.
\end{split}
\end{equation}

The MC success probability, denoted $\suc^{\rm{MC}}(\cN, M)$, is the solution to a similar program to \eqref{eq-intro:NS-program},  except that it has the constraint $\Lambda_B \mle  \mathbb{I}_B$ instead of $\Lambda_B = \mathbb{I}_B$. Classically, both programs are equal \cite{matthews2012linear} and thus non-signaling correlations provide an operational interpretation of the meta-converse introduced by \cite{polyanskiy2010channel}. However, in the quantum setting (even for classical-quantum channels), these programs are in general not equal. Our first result is a rounding inequality between the MC and NS success probabilities.

\begin{prop}\label{prop-intro:MC-NS}
    Let $\cN$ be a quantum channel and $M\ge 1$. We have that 
    \begin{align}
        \suc^{\rm{MC}}(\cN,M)&\ge  \suc^{\rm{NS}}(\cN,M) 
        \ge \left(1-\frac{1}{M}\right)\cdot   \suc^{\rm{MC}}(\cN,M).
    \end{align}
\end{prop}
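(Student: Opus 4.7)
The plan is to prove the two inequalities separately. The first, $\suc^{\rm{MC}} \ge \suc^{\rm{NS}}$, is essentially by definition: the NS constraint $\Lambda_B = \mathbb{I}_B$ implies the MC constraint $\Lambda_B \mle \mathbb{I}_B$, so every NS-feasible pair is MC-feasible with the same objective value.

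The substantive direction is the second inequality. My plan is to exhibit an explicit rounding that turns any MC-feasible pair $(\rho_R,\Lambda_{RB})$ into an NS-feasible pair $(\tilde\rho_R, \tilde\Lambda_{RB})$ whose objective value is at least $(1-\tfrac{1}{M})$ times that of the original. Pick any auxiliary state $\tau_R$ (say, the maximally mixed state on $R$), set $\alpha := 1 - \tfrac{1}{M}$, and define
$$
\tilde\Lambda_{RB} := \alpha\,\Lambda_{RB} + \tau_R \otimes (\mathbb{I}_B - \alpha\,\Lambda_B), \qquad \tilde\rho_R := \frac{(M-1)\rho_R + \tau_R}{M}.
$$
The construction is engineered so that the $B$-marginal telescopes to $\tilde\Lambda_B = \alpha\Lambda_B + (\mathbb{I}_B - \alpha\Lambda_B) = \mathbb{I}_B$, restoring the NS equality constraint.

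What remains is routine verification. Positivity $\tilde\Lambda_{RB}\ge 0$ follows from $\Lambda_B \mle \mathbb{I}_B$ and $\alpha\le 1$, which give $\mathbb{I}_B - \alpha\Lambda_B \ge 0$. The upper bound $\tilde\Lambda_{RB} \mle M\tilde\rho_R \otimes \mathbb{I}_B$ follows from the MC inequality $\Lambda_{RB} \mle M\rho_R\otimes \mathbb{I}_B$, the trivial bound $\mathbb{I}_B - \alpha\Lambda_B \mle \mathbb{I}_B$, and the identity $\alpha M \rho_R + \tau_R = M\tilde\rho_R$. Since $(J_{\cN})_{RB} \ge 0$ and $\tau_R\otimes(\mathbb{I}_B - \alpha\Lambda_B)\ge 0$, the correction term contributes a nonnegative amount to the objective, so
$$
\tfrac{1}{M}\tr{\tilde\Lambda_{RB} (J_{\cN})_{RB}} \;\ge\; \alpha\cdot\tfrac{1}{M}\tr{\Lambda_{RB}(J_{\cN})_{RB}},
$$
and taking the supremum over MC-feasible solutions gives the claimed $(1-\tfrac{1}{M})$ factor.

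I do not anticipate a real obstacle: the value $\alpha = 1 - \tfrac{1}{M}$ is precisely what makes $\alpha M + 1 = M$, so the convex combination defining $\tilde\rho_R$ is automatically a valid state that lines up with the NS operator inequality. The whole point of the argument is that the MC slack $\mathbb{I}_B - \Lambda_B$ can be converted into NS equality by filling it with a product-form operator $\tau_R \otimes (\mathbb{I}_B - \alpha\Lambda_B)$, at the cost of absorbing $\tau_R$ into a mild rescaling of $\rho_R$.
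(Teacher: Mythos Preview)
Your proof is correct. The core idea\,---\,fill the slack in $\Lambda_B$ with a product-form operator so as to restore the NS equality $\Lambda_B=\dI_B$\,---\,is the same as the paper's, but the implementation differs slightly. The paper starts from an optimal MC solution $(\rho_R,\Lambda_{RB})$ at size $M-1$, sets $\Lambda'_{RB}=\Lambda_{RB}+\rho_R\otimes(\dI_B-\Lambda_B)$ (so the state stays $\rho_R$ and no scaling is needed), checks this is NS-feasible at size $M$, and then invokes the monotonicity $\suc^{\rm{MC}}(\cN,M-1)\ge\suc^{\rm{MC}}(\cN,M)$ to close the loop. Your version instead works directly at size $M$, scales $\Lambda_{RB}$ by $\alpha=1-\tfrac{1}{M}$, and absorbs the auxiliary $\tau_R$ into a modified state $\tilde\rho_R$; this avoids the monotonicity step at the price of changing the input state. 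If you take $\tau_R=\rho_R$ in your construction you recover $\tilde\rho_R=\rho_R$, which makes the two arguments almost coincide. Either route is fine; yours is marginally more self-contained, the paper's keeps the input state fixed.
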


This proposition shows that, as the number of messages increases, the optimal success probabilities for MC and NS strategies become increasingly similar. Since the approximation is multiplicative in terms of success probabilities, this rounding has a direct application on the strong converse exponents. In words, the strong converse exponent is the rate at which the success probability approaches $0$ when the transmission rate is strictly above the channel capacity. From Proposition \ref{prop-intro:MC-NS}, we deduce that  characterizing  the strong converse exponent of non-signaling strategies reduces to analyzing the meta-converse. Moreover, the meta-converse admits inherently a formulation in terms of the hypothesis testing problem. The strong converse exponent of the (composite) hypothesis testing problem can be derived from known results in the literature. More precisely, we can show using Proposition \ref{prop-intro:MC-NS} and  the results of \cite{Gupta2015Mar,Hayashi2016Oct}:

\begin{cor}\label{cor-intro:SCE-MC-NS}
   Let  $\cN$ be a  quantum channel and $r\ge 0$. We have that
    \begin{align}\label{intro:NS-exponent}
        \lim_{n\rightarrow \infty} -\frac{1}{n}\log\suc^{\rm{NS}}(\cN^{\otimes n},e^{nr}) 
        &=  \lim_{n\rightarrow \infty} -\frac{1}{n}\log\suc^{\rm{MC}}(\cN^{\otimes n},e^{nr})
        \\&= \sup_{\alpha\ge 0}  \frac{\alpha}{1+\alpha}\left( r- \widetilde{\cI}_{1+\alpha}(\cN)\right),
    \end{align}
    where $\widetilde{\cI}_{\alpha}(\cN)$ denotes the sandwiched mutual information of the channel $\cN$ and order $\alpha$, formally defined in \eqref{def:sand-mutual-info}. 
\end{cor}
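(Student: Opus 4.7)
The proof breaks into two equalities, each handled separately.

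The first equality (NS exponent equals MC exponent) follows directly from Proposition~\ref{prop-intro:MC-NS}. Setting $M = e^{nr}$ and replacing $\cN$ by $\cN^{\otimes n}$,
\begin{align*}
(1 - e^{-nr})\,\suc^{\rm{MC}}(\cN^{\otimes n}, e^{nr})
\le \suc^{\rm{NS}}(\cN^{\otimes n}, e^{nr})
\le \suc^{\rm{MC}}(\cN^{\otimes n}, e^{nr}).
\end{align*}
Taking $-\tfrac{1}{n}\log(\cdot)$ of all three terms and noting that $-\tfrac{1}{n}\log(1 - e^{-nr}) = O(e^{-nr}/n) \to 0$ for any fixed $r > 0$, the two outer exponents tend to the same value, yielding the first equality. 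In particular this reduces the NS exponent to the MC exponent, which is amenable to hypothesis-testing methods.

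For the second equality, the plan is to recast the MC program as a composite quantum hypothesis test and to invoke the strong converse exponent results of \cite{Gupta2015Mar, Hayashi2016Oct}. Specifically, the weakened condition $\Lambda_B \mle \mathbb{I}_B$ (together with $0 \mle \Lambda_{RB} \mle M \rho_R \otimes \mathbb{I}_B$) allows us to interpret $\Lambda_{RB}/M$ as a test operator between the Choi-Jamio\l{}kowski state $(J_\cN)_{RB}$ and the family of product null hypotheses $\{\rho_R \otimes \sigma_B\}_{\sigma_B}$. Standard SDP manipulations (in the spirit of \cite{Matthews2014Sep, Wang2019Feb}) then express $\suc^{\rm{MC}}(\cN, M)$ in terms of the hypothesis-testing divergence $D_H^{1/M}\bigl((J_\cN)_{\rho_R} \,\|\, \rho_R \otimes \sigma_B\bigr)$, optimized jointly over $\rho_R$ and $\sigma_B$. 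Specializing to the i.i.d.\ setting $\cN^{\otimes n}$ with $M = e^{nr}$ and applying the (composite) strong converse exponent formula from \cite{Gupta2015Mar, Hayashi2016Oct} gives
\begin{align*}
\lim_{n \to \infty} -\tfrac{1}{n}\log \suc^{\rm{MC}}(\cN^{\otimes n}, e^{nr})
= \sup_{\alpha > 1} \tfrac{\alpha - 1}{\alpha}\left(r - \widetilde{\cI}_\alpha(\cN)\right),
\end{align*}
where the $\sup_{\rho_R}\inf_{\sigma_B}$ over the sandwiched divergence collapses to the sandwiched mutual information $\widetilde{\cI}_\alpha(\cN)$ of \eqref{def:sand-mutual-info}. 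Reparameterizing $\alpha = 1+\alpha'$ then produces the closed form in the statement.

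The main obstacle I anticipate lies in the \emph{composite} nature of the null hypothesis: the optimization runs over all product states $\rho_R \otimes \sigma_B$, so the achievability side must exhibit a single near-optimal test that performs uniformly well over all admissible $\sigma_B$ (the standard way is via a universal symmetric $\sigma_B$ or a finite $\varepsilon$-net argument), while the converse side requires a minimax exchange between $\sup_{\rho_R}$ and $\inf_{\sigma_B}$ justified by the concavity/convexity of the sandwiched R\'enyi divergence. Reducing the asymptotic $n$-letter quantity to the single-letter $\widetilde{\cI}_{1+\alpha}(\cN)$ additionally relies on additivity of the sandwiched channel mutual information under tensor products. These are precisely the ingredients assembled by the composite hypothesis testing machinery of \cite{Gupta2015Mar, Hayashi2016Oct}, so once the MC SDP is faithfully recast as such a test, the identification of the exponent is essentially read off from the existing literature.
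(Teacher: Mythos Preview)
Your proposal is correct and follows essentially the same route as the paper: the first equality is obtained exactly as you describe from Proposition~\ref{prop-intro:MC-NS}, and the second is split into an achievability bound via the composite hypothesis-testing strong converse of \cite{Hayashi2016Oct} (after passing to the dual/hypothesis-testing formulation of the MC program and restricting to i.i.d.\ inputs) and a converse bound via data processing of the sandwiched R\'enyi divergence plus additivity of $\widetilde{\cI}_{1+\alpha}(\cN)$ from \cite{Gupta2015Mar}, with a Sion minimax swap to collapse $\inf_{\rho_R}\sup_\alpha$ to the single-letter expression. The only cosmetic slip is your use of $D_H^{1/M}$\,---\,the superscript in $D_H^\varepsilon$ denotes the type-I error tolerance, whereas here $1/M$ is the type-II constraint\,---\,but this does not affect the argument.
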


In the following we show that the natural  NS relaxation \eqref{eq-intro:NS-program} provides an approximation of the entanglement-assisted success probability. We observe that because of Proposition \ref{prop-intro:MC-NS}, similar results can be phrased in terms of MC instead of NS.

\paragraph{Quantum-classical setting.} We begin our approximation results by the quantum-classical setting as the classical and classical-quantum settings were studied by \cite{barman2017algorithmic}  and \cite{Fawzi19} respectively. Moreover, the entanglement-assisted capacity of quantum-classical channels differs from that of the shared randomness and unassisted case \cite{Bennett1999Oct,Bennett2002Oct,Holevo2012Mar}. For this reason, we compare non-signaling assistance directly with   entanglement assistance. It turns out that in this setting, we are able to obtain nice approximation results that generalize exactly the optimal classical  approximation results of \cite{barman2017algorithmic} and relate the EA success probability, denoted $\suc^{\rm{EA}}$, with its NS counterpart. Namely we obtain:

\begin{prop}\label{prop-intro:rounding-qc}
Let $\cN$ be a quantum-classical channel and  $M, M'\ge 1$. We have that 
  \begin{align}
       \suc^{\rm{EA}}(\cN,M')
       &\ge  \frac{M}{M'}\left(1-\left(1-\frac{1}{M}\right)^{M'}\right) \suc^{\rm{NS}}(\cN,M).
    \end{align}
    In particular, when $M'=M$, we obtain
     \begin{align}
      \suc^{\rm{EA}}(\cN,M)\ge   \left(1-\frac{1}{e}\right) \suc^{\rm{NS}}(\cN,M).
    \end{align}
\end{prop}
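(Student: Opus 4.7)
The plan is to convert an optimal NS solution for $M$ messages into an EA strategy for $M'$ messages through position-based coding with a randomized sequential decoder. Let $(\rho_R,\{\Lambda^y_R\}_y)$ achieve $\suc^{\rm NS}(\cN,M)$. Since $\cN$ is quantum-classical with POVM $\{E_y\}$, the NS constraints in \eqref{eq-intro:NS-program} reduce to $0\mle\Lambda^y_R\mle M\rho_R$ and $\tr{\Lambda^y_R}=1$ for each $y$, while the objective reads $\frac{1}{M}\sum_y\tr{\Lambda^y_R E_y^T}$. Assuming for clarity that $\rho_R$ is invertible (otherwise I would restrict to its support and take limits), I would fix a purification $|\psi_\rho\rangle_{AR}$ of $\rho_R$ and give Alice and Bob $M'$ independent copies of it, with Alice holding $A_1,\dots,A_{M'}$ and Bob holding $R_1,\dots,R_{M'}$. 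To encode $m^*\in[M']$, Alice sends $A_{m^*}$ through $\cN$ and discards the remaining $A_i$'s; a direct computation shows that, given $m^*$ and the classical output $y$, Bob's joint state factorizes into $R_i=\rho_R$ for $i\neq m^*$ and $R_{m^*}=\sqrt{\rho_R}\,E_y^T\sqrt{\rho_R}/\tr{E_y\rho_R}$.

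The algebraic heart is the choice of POVM element. I would set $F^y:=M^{-1}\rho_R^{-1/2}\Lambda^y_R\rho_R^{-1/2}$, which is a valid POVM element since the NS inequality $\Lambda^y_R\mle M\rho_R$ forces $F^y\mle\mathbb{I}$. Given $y$, Bob then samples a uniformly random permutation $\pi$ of $[M']$ and sequentially performs the binary POVM $\{F^y,\mathbb{I}-F^y\}$ on the registers in the order $\pi$, outputting the first accepting position (failure otherwise). Because Bob's registers are in a tensor product state, the acceptance events are mutually independent Bernoulli variables: every ``wrong'' position accepts with probability $\tr{F^y\rho_R}=\tr{\Lambda^y_R}/M=1/M$, while the ``true'' position accepts with probability $p^+_y=\tr{\Lambda^y_R E_y^T}/(M\tr{E_y\rho_R})$.

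To finish, I would use the symmetry of the random permutation: conditioned on the true position accepting, it comes first among the $K+1$ accepting positions with probability $\ex{1/(K+1)}$, where $K\sim\mathrm{Binomial}(M'-1,1/M)$ counts the other accepts. The standard identity $\ex{1/(K+1)}=M(1-(1-1/M)^{M'})/M'$, combined with the channel weighting $\pr{Y=y\mid m^*}=\tr{E_y\rho_R}$, cancels the denominators and yields (uniformly in $m^*$)
\[
\pr{\text{correct}\mid m^*}=\frac{M\bigl(1-(1-1/M)^{M'}\bigr)}{M'}\cdot\frac{1}{M}\sum_y\tr{\Lambda^y_R E_y^T}=\frac{M}{M'}\bigl(1-(1-1/M)^{M'}\bigr)\,\suc^{\rm NS}(\cN,M),
\]
and the special case $M'=M$ then follows from $(1-1/M)^M\le e^{-1}$. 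The main obstacle is precisely the flattening step: the naive test $\Lambda^y_R/M$ does not decouple from $\rho_R$ on wrong positions, so the Bernoulli probabilities fail to separate cleanly, whereas the symmetric ``pretty-good'' renormalization $\rho_R^{-1/2}\Lambda^y_R\rho_R^{-1/2}/M$ both remains a valid POVM element (thanks to the NS operator inequality) and integrates against the purification $\sqrt{\rho_R}E_y^T\sqrt{\rho_R}$ back to the NS objective. Once this algebraic insight is in place, the combinatorial binomial/permutation argument mirrors Barman's classical rounding \cite{barman2017algorithmic} and the rest is routine bookkeeping.
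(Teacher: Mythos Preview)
Your proof is correct and follows essentially the same route as the paper's: position-based coding together with the test operator $\rho_R^{-1/2}\Lambda\rho_R^{-1/2}$ and a sequential decoder, where the classical output system is precisely what makes the sequential tests independent. The paper phrases this via the commuting joint operators $O_{R_lB}$ and a fixed decoding order (so that message $m$ succeeds with probability $(1-1/M)^{m-1}\suc^{\rm NS}$, which averages to the same factor), whereas you first condition on the classical outcome $y$ and randomize the order via the binomial identity $\mathbb{E}[1/(K+1)]=\tfrac{M}{M'}(1-(1-1/M)^{M'})$\,---\,but these are equivalent presentations of the same argument.
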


The proof of this proposition utilizes the position-based coding of \cite{Anshu2018Jun} along with a sequential decoder similar to \cite{Wilde2013Sep}. The main advantage of the quantum-classical setting is that the output system is classical, allowing the sequential decoder operators to be significantly simplified, which enables a tight evaluation of the approximation error. 

Specifically, the NS program of the success probability \eqref{eq-intro:NS-program} provides a quantum state $\rho_R$ and an observable $O_{RB} = \frac{1}{M} \rho_R^{-1/2} \Lambda_{RB}\rho_R^{-1/2}$ such that 
\begin{align}\label{eq-intro:ns-equalities-qc}
      \tr{O_{RB} \cdot \sigma_{RB}} &= \suc^{\rm{NS}}(\cN,M)   \quad \text{and} \quad \tr{O_{RB} \cdot \sigma_R \otimes \sigma_B} = \frac{1}{M},
\end{align}
where $\sigma_{RB} = \rho_R^{1/2} (J_{\cN})_{RB}\rho_R^{1/2} = \cN_{A\rightarrow B}(\phi_{RA})$ and $\phi_{RA}$ is a purification of $\rho_R$.  We take inspiration from the position-based coding of \cite{Anshu2018Jun} and encode the message  using its corresponding position within a shared entangled state. The equalities in Eq.~\eqref{eq-intro:ns-equalities-qc} will be used to decode the message. More precisely, 
$M'$ copies of $\phi_{RA}$ are shared between Alice (holding $A_1\dots A_{M'}$ systems) and Bob (holding $R_1\cdots R_{M'}$ systems). To send the message $m\in [M']$, Alice applies the channel $\cN_{A_m \rightarrow B}$ to the $m^{\rm{th}}$ copy of her part of the shared entanglement. To decode the message, Bob performs some measurements on systems $R_1\cdots R_{M'} B$. By Eq.~\eqref{eq-intro:ns-equalities-qc}, measuring the systems $R_mB$ with the observable $O_{R_mB}$ successfully detects it (outcome ‘$0$') with a success probability $\suc^{\rm{NS}}(\cN,M)$, while measuring the systems $R_lB$ with the observable $O_{R_lB}$ for $l\neq m$  falsely detects (outcome ‘$0$') the message with probability $\frac{1}{M}$. A natural strategy is to measure systems  $R_lB$ for $l=1, \dots ,M'$ sequentially and return the first detection (outcome ‘$0$'). To analyze this scheme, we use crucially the fact that the operators $\{O_{R_lB}\}_{l=1}^{M'}$ do commute when $B$ is a classical system. In general ($B$ can be quantum), the operators $\{O_{R_lB}\}_{l=1}^{M'}$ do not commute, and we are unable to prove an approximation with a multiplicative error using the sequential decoder.

A direct application of Proposition \ref{prop-intro:rounding-qc} is that EA and NS strategies achieve the same strong converse exponent for quantum-classical channels. This along with the characterization of the strong converse exponent of NS in Corollary \ref{cor-intro:SCE-MC-NS} gives the EA strong converse exponent in the quantum-classical setting.

\begin{cor}\label{cor-intro:QC-EA-exponent}
 Let $\cN$ be a  quantum-classical channel and $r\ge 0$. We have that
    \begin{align}\label{intro:QC-EA-exponent}
        \lim_{n\rightarrow \infty} -\frac{1}{n}\log\suc^{\rm{EA}}(\cN^{\otimes n},e^{nr}) 
        &=  \lim_{n\rightarrow \infty} -\frac{1}{n}\log\suc^{\rm{NS}}(\cN^{\otimes n},e^{nr})
        \\&= \sup_{\alpha\ge 0}  \frac{\alpha}{1+\alpha}\left( r- \widetilde{\cI}_{1+\alpha}(\cN)\right).
    \end{align}
\end{cor}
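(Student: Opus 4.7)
The plan is a sandwich argument that combines Proposition~\ref{prop-intro:rounding-qc} with Corollary~\ref{cor-intro:SCE-MC-NS}. One direction is immediate: every entanglement-assisted strategy is a particular non-signaling strategy, so $\suc^{\rm{EA}}(\cN^{\otimes n}, e^{nr}) \leq \suc^{\rm{NS}}(\cN^{\otimes n}, e^{nr})$ for every $n$, and hence
\begin{align}
  -\frac{1}{n}\log \suc^{\rm{EA}}(\cN^{\otimes n}, e^{nr}) \;\geq\; -\frac{1}{n}\log \suc^{\rm{NS}}(\cN^{\otimes n}, e^{nr}).
\end{align}
Passing to the limit gives the inequality $\geq$ between the two exponents.

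For the reverse direction, I would apply Proposition~\ref{prop-intro:rounding-qc} to the $n$-fold channel $\cN^{\otimes n}$, which remains quantum-classical whenever $\cN$ is. Choosing $M = M' = e^{nr}$ yields
\begin{align}
  \suc^{\rm{EA}}(\cN^{\otimes n}, e^{nr}) \;\geq\; \left(1 - \frac{1}{e}\right) \suc^{\rm{NS}}(\cN^{\otimes n}, e^{nr}),
\end{align}
so that
\begin{align}
  -\frac{1}{n}\log \suc^{\rm{EA}}(\cN^{\otimes n}, e^{nr}) \;\leq\; -\frac{1}{n}\log \suc^{\rm{NS}}(\cN^{\otimes n}, e^{nr}) + \frac{1}{n}\log\frac{e}{e-1}.
\end{align}
The additive correction $\frac{1}{n}\log\frac{e}{e-1}$ vanishes as $n \to \infty$, which supplies the matching inequality $\leq$.

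Combining both directions shows that the entanglement-assisted and non-signaling strong converse exponents of $\cN$ coincide for every rate $r \geq 0$. Invoking Corollary~\ref{cor-intro:SCE-MC-NS} (which is stated for arbitrary quantum channels and therefore covers the quantum-classical case as a special instance) then delivers the closed-form expression $\sup_{\alpha \geq 0} \frac{\alpha}{1+\alpha}\bigl(r - \widetilde{\cI}_{1+\alpha}(\cN)\bigr)$, concluding the proof.

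There is essentially no hard step in this argument: all the work is already contained in Proposition~\ref{prop-intro:rounding-qc} and Corollary~\ref{cor-intro:SCE-MC-NS}. The conceptual point one must verify is that the multiplicative factor $(1 - 1/e)$ in the rounding bound is independent of $n$, so its contribution to the normalized log-success probability is $O(1/n)$ and washes out at the exponential scale. The only routine technicalities are interpreting $e^{nr}$ as $\lceil e^{nr} \rceil$ (which does not affect the exponent) and observing that the proof of Proposition~\ref{prop-intro:rounding-qc} carries over to $\cN^{\otimes n}$ because the quantum-classical structure, and in particular the commutation of the decoding operators, is preserved under tensor products.
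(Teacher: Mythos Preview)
Your proof is correct and follows essentially the same approach as the paper: the paper simply notes the sandwich $\suc^{\rm{NS}}(\cN,M) \ge \suc^{\rm{EA}}(\cN,M) \ge (1-\tfrac{1}{e})\,\suc^{\rm{NS}}(\cN,M)$ from Proposition~\ref{prop-intro:rounding-qc}, concludes that the EA and NS strong converse exponents coincide, and then invokes Corollary~\ref{cor:SCE-MC-NS}. Your write-up merely spells out the $O(1/n)$ bookkeeping that the paper leaves implicit.
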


We will further generalize this result to the quantum setting in Corollary \ref{cor-intro:Q-EA-exponent}. However, we are unable to achieve the same approximation ratios from Proposition \ref{prop-intro:rounding-qc} in the fully quantum setting. The main challenge is that the analysis of the sequential decoder becomes cumbersome when the channel  output system is quantum and the sequential measurement operators do not commute.

\paragraph{Quantum setting - Additive error.}
In the fully quantum setting, we can still use the position-based coding from \cite{Anshu2018Jun}. However, the sequential decoder is no longer easy to analyze. To address the issue of non-commutativity of the operators, we can apply the well-known decoder and inequality of Hayashi-Nagaoka \cite{Hayashi2003Jun}, which lead to an approximation with an additive error. Note that  this decoder and inequality were used by \cite{Fawzi19} to prove a similar approximation result in the classical-quantum setting, where a non-signaling strategy is rounded to a shared-randomness one.

\begin{prop}\label{prop-intro:rounding-qq}
Let $\cN$ be quantum channel and $M, M'\ge 1$. %We have the following inequality between the entanglement-assisted and non-signaling error probabilities
 %\begin{align}\label{eq-intro:add-err}
    %\eps_{}^{\rm{EA}}(\cN,M')\le 2\eps_{}^{\rm{NS}}(\cN,M) +4\frac{M'-1}{M}.
%\end{align}Moreover, w
We have the following inequality between the entanglement-assisted and non-signaling success probabilities 
\begin{align}\label{eq-intro:add-err-sqrt}
  \suc_{}^{\rm{EA}}(\cN,M')&\ge   \suc^{\rm{NS}}(\cN,M)  -5 \sqrt{\frac{M'}{M}}. 
\end{align}
\end{prop}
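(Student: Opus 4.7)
My plan is to extend the position-based coding scheme from the quantum-classical proof (Proposition~\ref{prop-intro:rounding-qc}) to the fully quantum setting, replacing the sequential-decoder analysis\,---\,which crucially exploited commutativity of the tests when $B$ is classical\,---\,with a parameterized version of the Hayashi--Nagaoka operator inequality. As before, from an optimal NS solution $(\rho_R,\Lambda_{RB})$ of \eqref{eq-intro:NS-program} I form the observable $O_{RB}=\frac{1}{M}\rho_R^{-1/2}\Lambda_{RB}\rho_R^{-1/2}$; the NS constraint $\Lambda_{RB}\preceq M\rho_R\otimes \mathbb{I}_B$ gives $0\preceq O_{RB}\preceq \mathbb{I}_{RB}$, and writing $\phi_{RA}$ for a purification of $\rho_R$ and $\sigma_{RB}=\cN_{A\to B}(\phi_{RA})$, the constraint $\Lambda_B=\mathbb{I}_B$ together with the Choi representation of $\cN$ yields exactly the two identities \eqref{eq-intro:ns-equalities-qc}. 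These hold equally when $B$ is fully quantum and are the only ingredients of the NS solution I use below.

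Alice and Bob pre-share $M'$ independent copies $\phi_{R_l A_l}$; to transmit $m\in[M']$, Alice applies $\cN_{A_m\to B}$ to her $m$-th copy and discards the remaining $A_l$, producing on Bob's side
\[
\sigma^{(m)}_{R_1\cdots R_{M'}B}=\Big(\bigotimes_{l\neq m}\rho_{R_l}\Big)\otimes \sigma_{R_m B}.
\]
Introducing tests $T_l := O_{R_l B}\otimes \mathbb{I}_{R_{\neq l}}$ for $l\in[M']$, the two identities of \eqref{eq-intro:ns-equalities-qc} translate to $\tr{T_m\,\sigma^{(m)}}=\suc^{\rm{NS}}(\cN,M)$ and $\tr{T_l\,\sigma^{(m)}}=1/M$ for every $l\neq m$. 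Bob then decodes with the pretty-good POVM $E_m=S^{-1/2}T_m S^{-1/2}$, $S=\sum_l T_l$; because the $T_l$ all act on the shared register $B$ they do not commute in general, so the sequential argument from the quantum-classical proof breaks down.

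To bound $\mathbb{I}-E_m$ I invoke the parameterized Hayashi--Nagaoka inequality: for every $c>0$ and every $0\preceq T_m\preceq \mathbb{I}$, $S_m:=\sum_{l\neq m}T_l\succeq 0$,
\[
\mathbb{I}-E_m \preceq (1+c)(\mathbb{I}-T_m) + (2+c+c^{-1})\,S_m.
\]
Evaluating on $\sigma^{(m)}$ and averaging over $m\in[M']$, the decoding error is bounded by $(1+c)(1-\suc^{\rm{NS}}(\cN,M))+(2+c+c^{-1})(M'-1)/M$. I would then choose $c=\sqrt{M'/M}$ (the claim is vacuous when $M'>M$); using the elementary inequalities $r,\,r^{3/2}\leq\sqrt{r}$ for $r=M'/M\leq 1$, the total slack is at most $5\sqrt{M'/M}$, which rearranges to $\suc^{\rm{EA}}(\cN,M')\geq \suc^{\rm{NS}}(\cN,M)-5\sqrt{M'/M}$.

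The main obstacle is precisely the non-commutativity of the $T_l$. The textbook Hayashi--Nagaoka inequality with fixed constants $(2,4)$ only gives a success bound of the form $2\suc^{\rm{NS}}(\cN,M)-1-O(M'/M)$, which is vacuous whenever $\suc^{\rm{NS}}\leq 1/2$ and in any case has the wrong linear-in-$M'/M$ scaling. The parameterized version\,---\,whose leading coefficient $1+c$ can be tuned arbitrarily close to $1$ at the price of an inverse factor $c^{-1}$ on the wrong-position term\,---\,is what enables the square-root slack $\sqrt{M'/M}$ via an AM--GM optimization; the remainder of the argument is a routine calculation. A minor technical point is the possible non-invertibility of $S$, handled by a standard $\epsilon\mathbb{I}$ regularization or by adding an explicit ``failure'' POVM element.
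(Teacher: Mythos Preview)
Your proposal is correct and follows essentially the same route as the paper's proof of Proposition~\ref{prop:rounding-qq-additive}: position-based coding with $M'$ shared copies of a purification of $\rho_R$, the pretty-good (Hayashi--Nagaoka) decoder built from the tests $O_{R_lB}$, the parameterized Hayashi--Nagaoka inequality with free constant $c>0$, the identical evaluation of the two terms via the NS constraints, and the choice $c=\sqrt{M'/M}$ with the same elementary bounds to reach $5\sqrt{M'/M}$. The only cosmetic difference is your normalization of $O_{RB}$ (you use program~\eqref{eq-intro:NS-program} with the explicit $1/M$, the paper uses the rescaled program~\eqref{NS-program}), and your remark on regularizing $S$ is handled in the paper implicitly via the Moore--Penrose inverse.
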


This proposition shows that NS correlations do not help to increase the EA capacity of a quantum channel \cite{Leung2015Jun}. 
 Moreover, 
 the additive error  in \eqref{eq-intro:add-err-sqrt}  implies that EA and NS have the same second-order asymptotics in the small deviation regime (which remains not fully characterized in the quantum setting \cite{Datta2016Jun}, even for NS strategies). However, this approximation error is insufficient  to conclude that EA and NS have the same error or strong converse exponents.

Even in the classical-quantum setting, it is unclear whether EA and NS should have the same error exponent. This is because NS (specifically, with activation) corresponds to the sphere-packing bound \cite{Dalai2013Sep,Cheng2019Jan,Oufkir24}, while achievability results (such as those for rounded strategies) so far seem to only imply the random coding bound at best \cite{Beigi2023Oct,Li2024Jul,Renes2024Jul}, which is only known to be optimal  above a critical rate \cite{Li2024Jul,Renes2024Jul}. Therefore, we focus on investigating whether EA and NS share the same strong converse exponent. As we have seen in the quantum-classical setting, a simple approach  to compare EA and NS  strong converse exponents is by designing an EA approximation algorithm to the NS success probability that achieves a (small) multiplicative error. 

\paragraph{Quantum setting - Multiplicative error.} Recall that the NS program of the success probability \eqref{eq-intro:NS-program} provides a quantum state $\rho_R$ and an observable $O_{RB} = \frac{1}{M} \rho_R^{-1/2} \Lambda_{RB}\rho_R^{-1/2}$ such that 
\begin{equation} \label{eq-intro:success}
      \tr{O_{RB} \cdot \cN_{A\rightarrow B}(\phi_{RA})} = \suc^{\rm{NS}}(\cN,M) 
\end{equation}
where  $\phi_{RA}$ is a purification of $\rho_R$. In the quantum-classical setting, we can use the  position-based coding \cite{Anshu2018Jun} and the sequential decoder \cite{Wilde2013Sep}. However in the quantum setting,  the operators $\{O_{R_lB}\}_{l=1}^M$ do not commute and the sequential decoder is hard to analyze. To circumvent this difficulty, we draw inspiration from \cite{Fawzi19} and attempt to use  the following measurement device:
\begin{align}
    \left\{\Xi_l = \frac{1}{Z} O_{R_lB} \right\}_{l=1}^M \; \text{where} \;\; Z = \left\| \sum_{l=1}^M O_{R_lB} \right\|_\infty\!.
\end{align}
The normalization $Z$  ensures that $\sum_{l=1}^M \Xi_l \mle \dI$\,---\,a condition necessary for valid measurements. Moreover,  this normalization factor will contribute to the  approximation error as follows
\begin{align}
      \suc^{\rm{EA}}(\cN,M) \ge \frac{1}{Z}   \suc^{\rm{NS}}(\cN,M). 
\end{align}

The authors of \cite{Fawzi19} employ the matrix Chernoff inequality \cite{Tropp2015May} to control a similar normalization factor. In the quantum setting, however, applying the matrix Chernoff inequality is insufficient to control $Z$  with high probability. The reason is that it is unclear how to utilize the constraint
\begin{equation}
     \ptr{R}{\rho_R \cdot O_{RB}} = \frac{1}{M}\ptr{R}{ \Lambda_{RB}} = \frac{1}{M}\dI_B;
\end{equation}
as it does not appear in this form in the standard matrix Chernoff inequality. To overcome this obstacle, we consider a simplified setting where $\rho_R = \frac{1}{|R|}\dI_R$ is the maximally mixed state. Under this assumption, the constraint
$\ptr{R}{\rho_R \cdot O_{RB}} = \frac{1}{M}\dI_B$ simplifies to
\begin{equation}
    \ptr{R}{O_{RB}} = \frac{|R|}{M}\dI_B.
\end{equation}
Furthermore, it can be expressed as
\begin{align}
    &\exs{U_R\sim \cL(R)}{U_RO_{RB} U_R^{\dagger }} 
    = \frac{1}{|R|}\dI_R\otimes \ptr{R}{O_{RB}}
    =\frac{1}{M}\dI_{RB},
\end{align}
\sloppy where $\cL(R)$ is any unitary $1$-design probability distribution on $\mathbb{U}(R)$. Consequently, the constraint $\ptr{R}{\rho_R \cdot O_{RB}} = \frac{1}{M}\dI_B$ implies 
\begin{equation}
    \left\|\exs{U_R\sim \cL(R)}{U_RO_{RB} U_R^{\dagger }} \right\|_{\infty} = \frac{1}{M},
\end{equation}
which is one of the key factors of the matrix Chernoff inequality. Conjugating the observable $O_{RB}$ by the unitary $U_R$ leads to the same success probability if the state $\cN_{A\rightarrow B}(\phi_{RA})$ is appropriately conjugated (see Eq.~\eqref{eq-intro:success}).  Since $\phi_{RA}$ is the maximally entangled state (recall that we assume $\rho_R$ is the maximally mixed state and $\phi_{RA}$ is a purification of it), we can conjugate $\cN_{A\rightarrow B}(\phi_{RA})$ with the unitary $U_R$ by conjugating $\phi_{RA}$ with $U_A^\top $ (a property of the maximally entangled state) which can be performed by Alice before applying the channel. Note that as opposed to the position-based   coding, we only need one shared copy of the maximally entangled state. This is crucial for bounding the dimension of the measurement operators as this latter appears in the error probability of the matrix Chernoff inequality. Applying a random coding with a unitary $1$-design formed with Pauli operators was used previously for the problem of entanglement-assisted channel coding (see e.g., \cite{Hsieh2008Jun,Li2024Jul}).\\

In order to handle the case of an arbitrary state $\rho_R$ we utilize the input flattening technique from \cite{Li2024Jul}. The idea is to decompose the input state as 
\begin{equation}
    \rho_R = \bigoplus_{j=1}^v p_j \cdot  \frac{1}{|{\cH_j}|} \dI_{\cH_j},
\end{equation}
where $v$ is the number of distinct eigenvalues of $\rho_R$, $\{p_j\}_{j=1}^v$ is a probability distribution, and $\{\cH_j\}_{j=1}^v$ are orthogonal subspaces of $R$. 
With this in hand, we can sample from the probability distribution $j\sim \{p_j\}_{j=1}^v$ and consider the problem as if we have $\rho_R = \frac{1}{|{\cH_j}|} \dI_{\cH_j}$, a maximally mixed state. Our approximation strategy builds on these ideas and chooses the shared entanglement based on shared-randomness samples from $\{p_j\}_{j=1}^v$. Random unitaries sampled from a unitary $1$-design are applied depending on the corresponding subspaces. Finally, a measurement device is formed depending on the shared-randomness and random unitaries. This EA strategy is described in more details in Section \ref{sec:qq-multiplicative} and achieves the following performance.

\begin{prop}\label{prop-intro:qq-rounding}
Let $\cN_{A\rightarrow B}$ be a quantum channel. 
Let $\rho_R \in \cS(R)$ be an optimal  quantum state maximizing the NS program \eqref{eq-intro:NS-program}. 
Let $v$ be the number of distinct eigenvalues of $\rho_R$. Assume that $\log|A|\ge e^2$.  
    We have 
    \begin{align}
        &\suc^{\rm{EA}}(\cN,M) 
        \ge \frac{1}{2v \log\left(2v e^{v} M (\dim A)^{v}\dim B\right)}\cdot\suc^{\rm{NS}}(\cN,M) .
    \end{align}
\end{prop}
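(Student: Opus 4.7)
The plan is to construct an entanglement-assisted code from the NS optimum $(\rho_R,\Lambda_{RB})$ by combining a random $1$-design unitary encoder with a matrix-Chernoff-normalized measurement, and to reduce the non-maximally-mixed case to the maximally mixed one on each eigenspace of $\rho_R$ through the input-flattening technique together with a pinching step.

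I would start by spectrally decomposing $\rho_R=\bigoplus_{j=1}^{v}(p_j/d_j)\pi_j$ with block projectors $\pi_j$ of rank $d_j$, and pinching $\Lambda$ into block-diagonal form on $R$: $\bar\Lambda:=\sum_j(\pi_j\otimes\dI_B)\Lambda(\pi_j\otimes\dI_B)$. Since $\rho_R$ is itself block-diagonal, $\bar\Lambda$ still satisfies $0\mle\bar\Lambda\mle M\rho_R\otimes\dI_B$ and $\ptr{R}{\bar\Lambda}=\dI_B$, while the pinching inequality $J_\cN\mle v\sum_j(\pi_j\otimes\dI_B)J_\cN(\pi_j\otimes\dI_B)$ gives $\tr{\bar\Lambda\cdot J_\cN}\ge v^{-1}\tr{\Lambda\cdot J_\cN}=v^{-1}M\,\suc^{\rm{NS}}(\cN,M)$. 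Setting $O:=M^{-1}\rho_R^{-1/2}\bar\Lambda\rho_R^{-1/2}=\bigoplus_j O^{(j)}_{R_jB}$, I obtain $0\mle O^{(j)}\mle\dI$ and, from the NS trace constraint, $\sum_j(p_j/d_j)\ptr{R_j}{O^{(j)}}=M^{-1}\dI_B$, hence $\ptr{R_j}{O^{(j)}}\mle(d_j/Mp_j)\dI_B$ block by block.

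I would then define the EA protocol exactly as in the overview: sample $j\sim\{p_j\}$ together with $M$ iid unitaries $U_1,\dots,U_M$ from a $1$-design on $\cH_j$ with $\ex{U}=0$ (e.g.\ the Paulis) as shared randomness; Alice and Bob share the maximally entangled state $\Phi^{(j)}_{R_jA_j}$; Alice encodes message $m$ by applying $U_m^\top$ on her half; and Bob measures with $\Xi_l:=(Z^{(j)})^{-1}U_lO^{(j)}U_l^\dagger$, where $Z^{(j)}:=\|\sum_l U_lO^{(j)}U_l^\dagger\|_\infty$. Using the max-entangled identity $(U\otimes\dI)|\Phi^{(j)}\rangle=(\dI\otimes U^\top)|\Phi^{(j)}\rangle$, the success for $m$ conditioned on $j$ and $\{U_l\}$ equals $(Z^{(j)})^{-1}\tr{O^{(j)}\cN(\Phi^{(j)})}$, so the expected EA success reads $\sum_j p_j\,\ex{1/Z^{(j)}}\,\tr{O^{(j)}\cN(\Phi^{(j)})}$. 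Block-diagonality of $O$ kills all cross terms in $\tr{O\cdot\cN(\phi_{RA})}$, yielding $\sum_j p_j\tr{O^{(j)}\cN(\Phi^{(j)})}=\tr{O\cN(\phi_{RA})}\ge v^{-1}\suc^{\rm{NS}}(\cN,M)$.

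The remaining step is to lower bound $\ex{1/Z^{(j)}}$ uniformly in $j$ via matrix Chernoff applied to $\sum_l U_lO^{(j)}U_l^\dagger$: each summand has operator norm $\le 1$, and the $1$-design property combined with the trace constraint gives $\|\ex{U_lO^{(j)}U_l^\dagger}\|_\infty\le (Mp_j)^{-1}$, so the mean is at most $p_j^{-1}$. A Chernoff tail bound in effective dimension $|A|^v|B|$\,---\,the dimension that emerges after the flattening lifts the $v$ block-observables via a product construction\,---\,then yields $\pr{Z^{(j)}>t}\le 1/2$ for a common $t$ of order $\log(2ve^vM|A|^v|B|)$, so that $\ex{1/Z^{(j)}}\ge 1/(2t)$; combined with the $v^{-1}$ loss from pinching, this produces the announced ratio $1/(2v\log(2ve^vM|A|^v|B|))$. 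The main obstacle is precisely the $p_j^{-1}$ term in the Chernoff mean when $p_j$ is small: making the tail bound uniform over all blocks is what forces the input-flattening packaging and simultaneously generates both the outer $v$ factor and the $|A|^v$ inside the logarithm.
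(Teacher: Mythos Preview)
Your protocol samples a \emph{single} eigenspace index $j\sim\{p_j\}$ as shared randomness and then draws all $M$ unitaries from a $1$-design on that one block. This leads precisely to the obstruction you flag at the end but do not resolve: on block $j$ the NS trace constraint only gives $\ptr{R_j}{O^{(j)}}\mle (d_j/Mp_j)\,\dI_B$, so the matrix-Chernoff mean satisfies $\mu_{\max}^{(j)}\le 1/p_j$. No threshold $t$ independent of $\min_j p_j$ can make $\pr{Z^{(j)}>t}\le 1/2$ simultaneously for all $j$, and accepting $t\sim 1/p_j$ introduces an extra factor $p_j$ in $\ex{1/Z^{(j)}}$ that destroys the sum $\sum_j p_j\,\ex{1/Z^{(j)}}\,\tr{O^{(j)}\cN(\Phi^{(j)})}$. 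Your closing remark that ``the input-flattening packaging \dots generates the $|A|^v$ inside the logarithm'' is not backed by the protocol you actually wrote: your measurement operators live on $R_jB$, of dimension at most $|A|\cdot|B|$, and nothing in your scheme produces a space of dimension $|A|^v|B|$.

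The paper's resolution is to sample a \emph{separate} index $j_m\sim\{p_j\}$ for each message $m$, together with $U_m$ from a $1$-design on $R_{j_m}$. The shared entanglement is the tensor product $\bigotimes_{j=1}^v\Psi^{(j)}$ of $v$ maximally entangled states, one per eigenspace, so that whichever $j_m$ is drawn the corresponding resource is available; the $m$-th measurement operator acts nontrivially on the $j_m$-th tensor factor and is padded with identities on the others, so all POVM elements live on the common space $B^1\cdots B^vB$ of dimension $|A|^v|B|$. The crucial gain is that the expectation over $j_m$ brings in the weight $p_{j_m}$, which cancels the $1/p_{j_m}$ built into the observable and yields $\mu_{\max}\le v$ uniformly. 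This per-message randomization of the block index\,---\,rather than a single global choice\,---\,is what simultaneously produces the outer factor $v$ (via pinching, as you already have) and the $|A|^v$ inside the logarithm (via the enlarged measurement space).
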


In the finite block-length regime, we then find
\begin{align}
\suc^{\rm{EA}}(\cN^{\otimes n},e^{nr}) \ge\frac{1}{\operatorname{poly}(n)}\cdot\suc^{\rm{NS}}(\cN^{\otimes n},e^{nr}),
\end{align}
which also explains why this multiplicative error is not harmful is terms of strong converse exponent. To obtain the previous inequality, we use symmetry arguments and show that a near-optimal  quantum state maximizing the NS program \eqref{eq-intro:NS-program} is the de Finetti state \cite{christandl2009postselection} for which we have  $v_n  =  \operatorname{poly}(n)$ \cite{Hayashi2016Oct}. This allows us to deduce the EA strong converse exponent from the NS one (Corollary \ref{cor-intro:SCE-MC-NS}).

\begin{cor}\label{cor-intro:Q-EA-exponent}
 Let  $\cN$ be a  quantum channel and $r\ge 0$. We have that
    \begin{align}\label{intro:Q-EA-exponent}
        \lim_{n\rightarrow \infty} -\frac{1}{n}\log\suc^{\rm{EA}}(\cN^{\otimes n},e^{nr})
        &=  \lim_{n\rightarrow \infty} -\frac{1}{n}\log\suc^{\rm{NS}}(\cN^{\otimes n},e^{nr})
        \\&= \sup_{\alpha\ge 0}  \frac{\alpha}{1+\alpha}\left( r- \widetilde{\cI}_{1+\alpha}(\cN)\right).
    \end{align}
\end{cor}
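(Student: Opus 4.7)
The plan is to sandwich the entanglement-assisted strong converse exponent between the non-signaling strong converse exponent (which, by Corollary~\ref{cor-intro:SCE-MC-NS}, already equals the claimed closed-form expression) and itself, via an EA-to-NS rounding. The trivial direction is that every EA code is in particular a non-signaling strategy, so $\suc^{\rm{EA}}(\cN^{\otimes n},e^{nr})\le \suc^{\rm{NS}}(\cN^{\otimes n},e^{nr})$. Taking $-\frac{1}{n}\log$ of both sides and invoking Corollary~\ref{cor-intro:SCE-MC-NS} yields
\begin{align*}
\liminf_{n\to\infty}-\tfrac{1}{n}\log\suc^{\rm{EA}}(\cN^{\otimes n},e^{nr}) \;\ge\; \sup_{\alpha\ge 0}\tfrac{\alpha}{1+\alpha}\bigl(r-\widetilde{\cI}_{1+\alpha}(\cN)\bigr).
\end{align*}

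For the matching upper bound on the EA exponent, I would apply Proposition~\ref{prop-intro:qq-rounding} to $\cN^{\otimes n}$ with $M=e^{nr}$. The main obstacle is that the rounding prefactor depends on $v_n$, the number of distinct eigenvalues of the (near-)optimal NS input state $\rho_{R^n}^{(n)}$, which a priori could grow exponentially in $n$. Here I would exploit the joint permutation symmetry of $J_{\cN^{\otimes n}}=J_{\cN}^{\otimes n}$: the NS program~\eqref{eq-intro:NS-program} for $\cN^{\otimes n}$ can be restricted to permutation-invariant $\rho_{R^n}$ without loss, since both the objective and the partial-trace/operator-inequality constraints survive averaging over the symmetric group. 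By the postselection / de Finetti technique \cite{christandl2009postselection}, every permutation-invariant state is then dominated, up to a factor $\operatorname{poly}(n)$, by a universal de Finetti state whose spectrum is indexed by Young diagrams with $n$ boxes on at most $|R|$ rows; this furnishes a near-optimal input with $v_n=O(n^{|R|-1})=\operatorname{poly}(n)$ distinct eigenvalues \cite{Hayashi2016Oct}.

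Feeding this choice into Proposition~\ref{prop-intro:qq-rounding} yields
\begin{align*}
\suc^{\rm{EA}}(\cN^{\otimes n},e^{nr})\;\ge\; \frac{1}{2v_n\log\bigl(2v_n\,e^{v_n}\,e^{nr}\,(\dim A)^{n v_n}\,(\dim B)^n\bigr)}\cdot \suc^{\rm{NS}}(\cN^{\otimes n},e^{nr}),
\end{align*}
and with $v_n=\operatorname{poly}(n)$ the inner logarithm grows only polynomially in $n$, so the whole prefactor is $1/\operatorname{poly}(n)$; combined with the $\operatorname{poly}(n)$ loss from the postselection step, the product is still $1/\operatorname{poly}(n)$, which contributes only an $O(\log n/n)$ correction under $-\tfrac{1}{n}\log(\cdot)$ and hence disappears in the limit. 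This gives
\begin{align*}
\limsup_{n\to\infty}-\tfrac{1}{n}\log\suc^{\rm{EA}}(\cN^{\otimes n},e^{nr})\;\le\; \lim_{n\to\infty}-\tfrac{1}{n}\log\suc^{\rm{NS}}(\cN^{\otimes n},e^{nr}),
\end{align*}
closing the sandwich and establishing both equalities in~\eqref{intro:Q-EA-exponent}. I expect the most delicate step to be verifying that the de Finetti reduction is compatible with \emph{all} constraints of the NS program\,---\,in particular, that the marginal condition $\Lambda_{B^n}=\dI_{B^n}$ is preserved (or only mildly perturbed) under symmetrization\,---\,and in certifying the precise polynomial growth of $v_n$ for the specific universal state one ends up using.
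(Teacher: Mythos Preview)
Your proposal is correct and follows essentially the same route as the paper: restrict the NS program to permutation-invariant inputs, use the postselection/de Finetti bound to pass to the universal state with $v_n=\operatorname{poly}(n)$ distinct eigenvalues, apply Proposition~\ref{prop-intro:qq-rounding} to $\cN^{\otimes n}$, and observe that the resulting $1/\operatorname{poly}(n)$ prefactor vanishes under $-\tfrac{1}{n}\log(\cdot)$. The one point you flag as ``delicate'' is handled in the paper exactly as you anticipate: symmetrization preserves the marginal constraint $\Lambda_{B^n}=\tfrac{1}{M}\dI_{B^n}$ verbatim, while the de Finetti domination $\rho_{R^n}\mle g(n,|R|)\,\rho_{R^n}^{\rm dF}$ shifts $M$ to $g(n,|R|)M$ (another $\operatorname{poly}(n)$ factor), after which Proposition~\ref{prop-intro:qq-rounding} is applied with $\rho=\rho^{\rm dF}$ rather than with the unknown optimizer.
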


This tight characterization of the strong converse exponent in the quantum setting was first established by \cite{Li2023Nov}. Moreover, \cite{Li2023Nov} was the first to introduce the flattening technique for this problem. While our proof also employs this technique, its structure is significantly different from that of \cite{Li2023Nov}. In particular, our proof relies on the relationship between EA, NS, and MC success probabilities, providing deeper insights into the connection between composite hypothesis testing and entanglement-assisted coding.

%%%%%%%%%%%%%%%%%%%%%%%%%%%%%%%%%%%%%%%%%%%%%%%%

\subsection{Notation}

The set of integers between $1$ and $n$ is denoted by $[n]$.
Finite dimensional Hilbert spaces are denoted by $A,B, \dots$ and their corresponding  dimensions are written as  $|A|, |B|, \dots$. The set of unitary operators on the space $A$ is denoted by $\mathbb{U}(A)$. 
A positive semi-definite operator is denoted by $\rho \mge 0$. Moreover, $\rho\mge \sigma$ stands for $\rho-\sigma\mge 0$. For a Hermitian operator $\rho$ with spectral decomposition $\rho = \sum_i \lambda_i \Pi_i $, we denote by $\rho_+ = \sum_{i} \max\{\lambda_i,0\} \Pi_i $  its positive  part, and $\rho^{-1} = \sum_{i : \lambda_i\neq 0} \lambda_i^{-1} \Pi_i $ its Moore-Penrose generalized inverse.

Quantum states are positive semi-definite operators with trace $1$. The set of quantum states on a Hilbert space $A$ is denoted by $\cS(A)$. Let $\{\ket{i}\}_{i=1}^{|A|}$ be the canonical orthonormal basis of $A$. The maximally mixed state over $A$ is $\bid_A = \frac{1}{|A|}\dI_A$. The maximally entangled state over $A$ is
\begin{align}
\ket{\Psi}_{RA} =\frac{1}{\sqrt{|A|}} \ket{w}_{RA}  = \frac{1}{\sqrt{|A|}} \sum_{i=1}^{|A|} \ket{i}_{R}\otimes \ket{i}_A,
\end{align}
where $R\simeq A$. 

A quantum channel $\cN_{A\rightarrow B} $ is a completely-positive, trace-preserving map from $\cS(A)$ to $\cS(B)$. 
The Choi matrix of $\cN_{A\rightarrow B} $ is
\begin{align}
(J_{\cN})_{RB}= \cN_{A\rightarrow B}(\proj{w}_{RA}) 
&=   \sum_{i,j=1}^{|A|}  \ket{i}\bra{j}_{R} \otimes \cN( \ket{i}\bra{j})_{B}.
\end{align}
A quantum-classical or measurement channel $\cN_{A\rightarrow X}$ is a quantum channel where the output system $X$ is classical. 
\\A measurement device is a POVM (positive operator-valued measure), consisting of a set of positive semi-definite operators that sum to the identity operator $\dI$. Measuring a quantum state $\rho$ with a POVM $\cM = \{M_x\}_{x\in \cX}$ returns the outcome $x\in \cX$ with probability $\tr{\rho M_x}$. 

The sandwiched R\'enyi divergence \cite{MDSFT2013on,WWY2014strong} between $\rho \mge 0$ and $\sigma \mge 0$ of order $\alpha>1$ is defined as follows 
\begin{align}\label{def:sand-div}
    \widetilde{D}_\alpha\left(\rho \middle\| \sigma \right) = \frac{1}{\alpha-1}\log \tr{\left(\sigma^{\frac{1-\alpha}{2\alpha}}\rho \sigma^{\frac{1-\alpha}{2\alpha}}\right)^{\alpha}}
\end{align}
if $\rho \ll \sigma$ (i.e., the support of $\rho$ is included in the support of $\sigma$). Otherwise, we set $   \widetilde{D}_\alpha\left(\rho \middle\| \sigma \right) = +\infty$. It extends  continuously for $\alpha=1$ to the Umegaki relative entropy, given by  
\begin{equation}
    D\left(\rho \middle\| \sigma \right) = \tr{\rho (\log\rho-\log\sigma)}.
\end{equation}
The quantum hypothesis testing relative entropy is defined as 
\begin{align}
   D_H^{\eps}(\rho\| \sigma) = -\log \min \Big\{\tr{\sigma O} \Big| \,0\mle O\mle \dI, \tr{\rho O}\ge 1-\eps \Big\}. 
\end{align}
% \begin{equation}
% \begin{split}
% D_H^{\varepsilon}(\rho \| \sigma) = -\log \min_{O} \quad 
% & \tr{\sigma O} \\
% \text{subject to} \quad 
% & 0 \mle O \mle \mathbb{I}, \\
% & \tr{\rho O} \ge 1 - \varepsilon.
% \end{split}
% \label{eq:hypothesis-testing-sdp}
% \end{equation}

For a quantum channel $\cN_{A \rightarrow B}$, the channel's sandwiched R\'enyi mutual information of order $\alpha\ge 1$ is defined as \cite{Gupta2015Mar}
\begin{align} \label{def:sand-mutual-info}
\widetilde{I}_{\alpha}(\cN_{A \rightarrow B})
&=\sup_{\rho_R \in \cS(R)}\inf_{\sigma_B\in \cS(B)}\widetilde{D}_{\alpha}\left( \rho_R^{1/2}(J_\cN)_{RB} \rho_R^{1/2}\middle\|\rho_R \otimes  \sigma_B\right).
\end{align}

Let $\sigma$ be a Hermitian operator with spectral projections $\Pi_1, \dots, \Pi_v$, where  $v$  is the number of distinct eigenvalues of $\sigma$. The pinching channel associated with $\sigma$ is defined as:
\begin{equation}
    \cP_{\sigma} (\cdot)= \sum_{i=1}^v \Pi_i (\cdot) \Pi_i.
\end{equation}
The pinching inequality \cite{Hayashi2002Dec,Tomamichel2016} states that  for any positive semi-definite operator $\rho$,
\begin{equation}\label{eq:pinching-ineq}
   \rho\mle v\cdot\cP_{\sigma} (\rho). 
\end{equation}

%%%%%%%%%%%%%%%%%%%%%%%%%%%%%%%%%%%%%%%%%%%%%%%%

\section{Quantum channel coding}\label{sec:quantum-coding}

In this section, we define the task of quantum channel coding using entanglement-assisted or non-signaling strategies. Then we study the relation between non-signaling assistance and the meta-converse.

%%%%%%%%%%%%%%%%%%%%%%%%%%%%%%%%%%%%%%%%%%%%%%%%

\subsection{Entanglement-assisted channel coding}\label{sec:EA-coding}

We follow \cite{wilde2013book} and define an $(n,M)$ entanglement-assisted code as follows. Alice and Bob share a pure quantum state $\ket{\Psi}_{E_AE_B}$ where Alice holds $E_A$ and Bob holds $E_B$. Let $[M]$ be the set of messages to be transmitted. To send a message $m\in [M]$, Alice applies an encoding channel $\cE^m_{E_A\rightarrow A'^n}$ to her part of the shared entangled state $\proj{\Psi}_{E_AE_B}$ depending on the message $m$. Then the  global state is 
\begin{equation}
    \cE^m_{E_A\rightarrow A'^n}\left(\proj{\Psi}_{E_AE_B}\right). 
\end{equation}
Alice transmits $A'^n$ over $n$ iid copies of the channel $\cN^{}_{A'\rightarrow B}$, leading to the state 
\begin{equation}
    \cN^{\otimes n}_{A'^n\rightarrow B^n}\left( \cE^m_{E_A\rightarrow A'^n}\left(\proj{\Psi}_{E_AE_B}\right)\right). 
\end{equation}
Upon receiving systems $B^n$, Bob performs a POVM $\{\Xi^m_{B^nE_B}\}_{m=1}^M$ on the channel output $B^n$ and his part of the shared entanglement $E_B$ in order to decode the message $m$ that Alice sent. Bob returns the outcome of this measurement. Let
\begin{align}
\cC = \Big(\ket{\Psi}_{E_AE_B},  \{\cE^m_{E_A \rightarrow A'^n}\}_{m=1}^M, \{\Xi^m_{B^nE_B}\}_{m=1}^M\Big).
\end{align}
The probability of Bob correctly decoding Alice's message $m$ is 
\begin{align}
    \suc_{\cC}(\cN^{\otimes n}, M, m) 
    &= \tr{\Xi^m_{B^nE_B} \cN^{\otimes n}_{A'^n\rightarrow B^n}\left( \cE^m_{E_A\rightarrow A'^n}\left(\proj{\Psi}_{E_AE_B}\right)\right)}. 
\end{align}
The average success probability for the coding scheme $\cC$ is 
\begin{align}
    \suc_{\cC}(\cN^{\otimes n}, M)  &= \!\frac{1}{M}\!\sum_{m=1}^M \! \mathrm{Tr}\!\left[\Xi^m_{B^nE_B} \cN^{\otimes n}_{A'^n\rightarrow B^n}\!\left(\! \cE^m_{E_A\rightarrow A'^n}\!\left({\Psi}_{E_AE_B}\right)\!\right)\!\right]\!. 
\end{align}
The rate $r$ of communication is 
\begin{equation}
    r= \frac{1}{n}\log M,
\end{equation}
and the code $\cC$ has $\eps$ error if $\suc_{\cC}(\cN^{\otimes n}, M)\ge 1-\eps$. A rate $r$ of entanglement-assisted classical communication is achievable if there is an $(n, e^{n(r-\delta)})$ entanglement-assisted classical code achieving a success probability $1-\eps$ for all $\eps\in(0,1)$, $\delta>0$, and sufficiently large $n$. 
 The entanglement-assisted classical capacity $C^{\rm{EA}}(\cN)$ of a quantum channel $\cN$ is equal to the supremum of all achievable rates. 

Finally, the  one-shot entanglement-assisted success  probability is  given as the optimal average success probability over all EA $(1,M)$ codes
\begin{equation}\label{eq:def-EA-suc}
    \suc^{\rm{EA}}(\cN, M) = \sup_{\cC}    \suc_{\cC}(\cN, M). 
\end{equation}
The one-shot entanglement-assisted error probability can be defined as follows
\begin{equation}\label{eq:def-EA-err}
    \eps^{\rm{EA}}(\cN, M) = 1-\suc^{\rm{EA}}(\cN, M).
\end{equation}

%%%%%%%%%%%%%%%%%%%%%%%%%%%%%%%%%%%%%%%%%%%%%%%%

\subsection{Non-signaling assisted channel coding}\label{sec:NS-coding}

The computation of the optimal entanglement-assisted (EA) success probability in \eqref{eq:def-EA-suc} is notoriously difficult, as it requires optimization over  shared entangled states, encoding channels, and decoding measurement devices, all of which can involve systems with unbounded dimension. A natural relaxation of this optimization is obtained by allowing more powerful correlations known as non-signaling~\cite{Matthews2014Sep}. Strategies using this assistance encompass entanglement-assisted strategies, though communication cannot be achieved using non-signaling correlations alone.

A non-signaling assisted code corresponds to a superchannel which is non-signaling (does not permit communication by itself) from the encoder to the decoder and vice-versa. More precisely, a non-signaling channel between Alice and Bob whose input systems are $A_i, B_i$ and output systems are $A_o, B_o$ has a  positive semi-definite Choi matrix $J_{A_iA_oB_iB_o}$ satisfying (e.g., \cite{Fang2019Sep}):
\begin{align}
    J_{A_iB_iB_o}  &= \bid_{A_i} \otimes J_{B_iB_o}, &&(A\nrightarrow B), 
    \\J_{A_iA_oB_i}  &= J_{A_iA_o}\otimes \bid_{B_i},  &&(B\nrightarrow A). 
\end{align}
Let $\cN\!\cS$ be the set of non-signaling superchannels. The optimal non-signaling success probability of a quantum-channel $\cN$ for sending $M$ messages is 
\begin{align}\label{eq:eps-NS-def}
     \suc^{\rm{NS}}(\cN, M) 
     &=\sup_{\Pi \in \cN\!\cS}  \frac{1}{M} \sum_{m=1}^M \bra{m}\Pi (\cN) (\proj{m}) \ket{m},
\end{align}
where $\Pi (\cN)$ is the effective channel. 
The non-signaling error probability of a quantum-channel $\cN$ for sending $M$ messages can be defined as follows
\begin{equation}\label{eq:def-NS-err}
    \eps^{\rm{NS}}(\cN, M) = 1-\suc^{\rm{NS}}(\cN, M).
\end{equation}
It is known that the success probability \eqref{eq:eps-NS-def} is the solution of the following SDP program \cite{Matthews2014Sep,Wang2019Feb}
% \begin{align}
%    & \suc^{\rm{NS}}(\cN, M) 
%    \\&= \sup_{\rho_R\in \cS(R)}\sup_{\Lambda_{RB}} \Big\{\frac{1}{M} \tr{\Lambda_{RB} \cdot (J_{\cN})_{RB}} \;\Big|\; \Lambda_B =  \mathbb{I}_B\,,\; 0\mle  \Lambda_{RB} \mle M\rho_R \otimes  \mathbb{I}_B\Big\}\label{NS-program-pre}
%     \\&= \sup_{\Lambda_{RB}}  \sup_{\rho_R \in \cS(R)} \Big\{\tr{\Lambda_{RB} \cdot (J_{\cN})_{RB}} \;\Big|\; \Lambda_B = \tfrac{1}{M}\mathbb{I}_B,\, 0\mle  \Lambda_{RB} \mle \rho_R \otimes  \mathbb{I}_B \Big\}.\label{NS-program}
% \end{align}
\begin{align}
\suc^{\rm{NS}}(\cN, M) = \sup_{\rho_R, \Lambda_{RB}} \quad & \frac{1}{M} \tr{\Lambda_{RB} (J_{\cN})_{RB}} \notag \\
\text{subject to} \quad
& \rho_R \in \cS(R), \notag \\
& \Lambda_B = \mathbb{I}_B, \notag \\
& 0 \mle \Lambda_{RB} \mle M \rho_R \otimes \mathbb{I}_B, \label{NS-program-pre}
\\[6pt]
= \sup_{\rho_R, \Lambda_{RB}} \quad & \tr{\Lambda_{RB} (J_{\cN})_{RB}} \notag \\
\text{subject to} \quad
& \rho_R \in \cS(R), \notag \\
& \Lambda_B = \frac{1}{M} \mathbb{I}_B, \notag \\
& 0 \mle \Lambda_{RB} \mle \rho_R \otimes \mathbb{I}_B. \label{NS-program}
\end{align}

Unlike the entanglement-assisted success probability \eqref{eq:def-EA-suc}, the non-signaling success probability is computationally accessible. Furthermore, since entanglement-assisted strategies are non-signaling, we have the obvious inequality
\begin{equation}\label{eq:NS>EA}
    \suc^{\rm{NS}}(\cN, M)\ge \suc^{\rm{EA}}(\cN, M).
\end{equation}
One of the goals of this paper is to design approximation algorithms that achieve a reverse type inequality of \eqref{eq:NS>EA} with as small an (additive or multiplicative) error as possible.

%%%%%%%%%%%%%%%%%%%%%%%%%%%%%%%%%%%%%%%%%%%%%%%%

\subsection{Non-signaling value and meta-converse}\label{sec:NS-MC}

In the previous section, we observed that non-signaling strategies are useful for approximating the entanglement-assisted success probability of coding over quantum channels. It turns out that the non-signaling success probability is closely related to the well-known meta-converse \cite{polyanskiy2010channel,matthews2012linear,Matthews2014Sep,Wang2019Feb}. In fact, they are even equal in the classical setting \cite{matthews2012linear}. 
The meta-converse error probability for one-shot coding over a quantum channel $\cN$ and with a communication size $M$ is \cite{Matthews2014Sep}
\begin{align}\label{eq:mc-hypothesis}
  & \eps_{}^{\rm{MC}}(\cN,M) =\inf_{} \Big\{ \eps\in[0,1] \;\Big|\; 
\sup_{\rho_{R} \in \cS(R)} \inf_{\sigma_B\in \cS(B)}\! D_H^{\eps}\!\left(\rho_R^{1/2}(J_\cN)_{RB}\rho_{R}^{1/2} \big\| \rho_R \otimes \sigma_B \right)\ge \log M \!\Big\}. 
\end{align}
It is known that the meta-converse success probability  is the solution of the following SDP program \cite{Matthews2014Sep,Wang2019Feb}
\begin{align}
    \suc^{\rm{MC}}(\cN, M) = \sup_{\rho_R, \Lambda_{RB}} \quad & \frac{1}{M} \tr{\Lambda_{RB} (J_{\cN})_{RB}} \notag \\
\text{subject to} \quad
& \rho_R \in \cS(R), \notag \\
& \Lambda_B \mle  \mathbb{I}_B, \notag \\
& 0 \mle \Lambda_{RB} \mle M \rho_R \otimes \mathbb{I}_B.\label{MC-program}
\end{align}
Observe that the difference between the meta-converse program \eqref{MC-program} and the non-signaling program \eqref{NS-program-pre} for the success probability lies in the constraint $\Lambda_B \mle  \mathbb{I}_B$ versus $\Lambda_B  =  \mathbb{I}_B$. This implies that $\suc^{\rm{MC}}(\cN, M)  \ge \suc^{\rm{NS}}(\cN, M)$,  since the meta-converse allows for a looser constraint on the measurement operator $\Lambda_B$. The authors of \cite{Wang2019Feb} show that the meta-converse corresponds to non-signaling correlations assisted by $1$-bit of perfect communication (known as activation). In the following we provide a reverse inequality between the meta-converse and non-signaling success probabilities, achieving this  with a multiplicative error and without the need for  activation.

\begin{prop}\label{prop:NSvsMC}
Let $\cN$ be a quantum channel and $M\ge 1$. 
    We have that 
    \begin{align}
        \suc^{\rm{MC}}(\cN,M)&\ge  \suc^{\rm{NS}}(\cN,M) 
        \ge \left(1-\frac{1}{M}\right)\cdot \suc^{\rm{MC}}(\cN,M).
    \end{align}
\end{prop}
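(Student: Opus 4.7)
The first inequality $\suc^{\rm{MC}}(\cN,M)\ge\suc^{\rm{NS}}(\cN,M)$ is essentially free: every NS-feasible pair $(\rho_R,\Lambda_{RB})$ in \eqref{NS-program-pre} satisfies $\Lambda_B=\mathbb{I}_B\mle\mathbb{I}_B$ and so is also MC-feasible in \eqref{MC-program} with the same objective. The work is entirely in the reverse inequality, where we need a rounding procedure turning an MC-feasible solution into an NS-feasible one while losing only a factor $(1-1/M)$.

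My plan is to take an optimal (or arbitrary) MC-feasible pair $(\rho_R,\Lambda_{RB})$ and repair the slack in the marginal constraint by adding a product tensor correction. Concretely, I would set
\begin{align}
\Lambda'_{RB} \;=\; \alpha\,\Lambda_{RB} \;+\; \rho_R\otimes(\mathbb{I}_B-\alpha\Lambda_B),
\end{align}
with $\alpha=1-\tfrac{1}{M}=\tfrac{M-1}{M}$. Taking the partial trace over $R$ immediately gives $\Lambda'_B=\alpha\Lambda_B+\mathbb{I}_B-\alpha\Lambda_B=\mathbb{I}_B$, which is the NS marginal constraint. Positivity $\Lambda'_{RB}\mge 0$ follows because $\alpha\Lambda_{RB}\mge 0$ and $\mathbb{I}_B-\alpha\Lambda_B\mge 0$ (since $\Lambda_B\mle \mathbb{I}_B$ and $\alpha\le 1$).

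The step that actually pins down the value $\alpha=(M-1)/M$ is the upper bound $\Lambda'_{RB}\mle M\rho_R\otimes\mathbb{I}_B$. Expanding,
\begin{align}
M\rho_R\otimes\mathbb{I}_B-\Lambda'_{RB} \;=\; (M-1)\rho_R\otimes\mathbb{I}_B+\alpha\,\rho_R\otimes\Lambda_B-\alpha\,\Lambda_{RB}.
\end{align}
With $\alpha=(M-1)/M$ one has $\alpha\Lambda_{RB}\mle(M-1)\rho_R\otimes\mathbb{I}_B$ directly from the MC constraint $\Lambda_{RB}\mle M\rho_R\otimes\mathbb{I}_B$, so the right-hand side is at least $\alpha\rho_R\otimes\Lambda_B\mge 0$. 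This is the only place where the scaling factor enters nontrivially, and choosing any smaller $\alpha$ weakens the bound while any larger $\alpha$ breaks the operator inequality; this is the main calibration step rather than a genuine obstacle.

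Finally, I compare objectives. Since $J_\cN\mge 0$ and $\mathbb{I}_B-\alpha\Lambda_B\mge 0$, the cross term $\tfrac{1}{M}\tr{(\rho_R\otimes(\mathbb{I}_B-\alpha\Lambda_B))J_\cN}$ is non-negative, so
\begin{align}
\tfrac{1}{M}\tr{\Lambda'_{RB}J_\cN} \;\ge\; \alpha\cdot\tfrac{1}{M}\tr{\Lambda_{RB}J_\cN}.
\end{align}
Taking the supremum over MC-feasible $(\rho_R,\Lambda_{RB})$ on the right and using that $(\rho_R,\Lambda'_{RB})$ is NS-feasible on the left yields $\suc^{\rm{NS}}(\cN,M)\ge(1-\tfrac{1}{M})\suc^{\rm{MC}}(\cN,M)$. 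The whole argument is a one-parameter perturbation at the SDP level; there is no real analytic obstruction, only the calibration of $\alpha$ to balance the positivity and upper-bound constraints simultaneously.
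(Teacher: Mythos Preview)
Your proof is correct and follows essentially the same approach as the paper: both add a product correction $\rho_R\otimes X_B$ to an MC-feasible $\Lambda_{RB}$ so as to restore the NS marginal constraint $\Lambda'_B=\mathbb{I}_B$ while preserving $0\mle\Lambda'_{RB}\mle M\rho_R\otimes\mathbb{I}_B$. The only cosmetic difference is that the paper starts from the MC optimum at size $M{-}1$ (so $\Lambda_{RB}\mle(M{-}1)\rho_R\otimes\mathbb{I}_B$), sets $\Lambda'_{RB}=\Lambda_{RB}+\rho_R\otimes(\mathbb{I}_B-\Lambda_B)$, and then invokes the monotonicity $\suc^{\rm MC}(\cN,M{-}1)\ge\suc^{\rm MC}(\cN,M)$, whereas your scaling $\alpha=(M{-}1)/M$ of the size-$M$ optimum achieves the same effect in one step without the separate monotonicity lemma.
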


\begin{proof}
The NS and MC programs are stated in  \eqref{NS-program-pre} and \eqref{MC-program} respectively. 
% We have that \cite{Wang2019Feb}:
% \begin{align}
%       \suc_{}^{\rm{NS}}(\cN,M) &=\sup_{\rho_R\in \cS(R)}\sup_{\Lambda_{RB}} \left\{\tfrac{1}{M} \tr{\Lambda_{RB} \cdot (J_{\cN})_{RB}} \;\middle|\; \Lambda_B = \mathbb{I}_B\,,\; 0\mle  \Lambda_{RB} \mle M\rho_R \otimes  \mathbb{I}_B\right\},  \label{eq:ns-primal}
%       \\  \suc_{}^{\rm{MC}}(\cN,M) &=\sup_{\rho_R\in \cS(R)}\sup_{\Lambda_{RB}} \left\{\tfrac{1}{M} \tr{\Lambda_{RB} \cdot (J_{\cN})_{RB}} \;\middle|\; \Lambda_B \mle  \mathbb{I}_B\,,\; 0\mle  \Lambda_{RB} \mle M\rho_R \otimes  \mathbb{I}_B\right\}.   \label{eq:mc-primal}
% \end{align}
\sloppy The inequality $\suc^{\rm{MC}}(\cN,M)\ge  \suc^{\rm{NS}}(\cN,M)$ is clear. Let us prove the remaining inequality $      \suc^{\rm{NS}}(\cN,M) \ge \left(1-\frac{1}{M}\right)\cdot \suc^{\rm{MC}}(\cN,M).$ To this end, we consider $(\rho_R,\Lambda_{RB})$ an optimal solution of \eqref{MC-program} with size $M-1$. We construct $ \Lambda'_{RB} = \Lambda_{RB} + \rho_R\otimes (\dI_B-\Lambda_B)$ that satisfies:
\begin{align}
    \Lambda'_{RB}&\mge \Lambda_{RB}\mge 0,
    \\\Lambda'_{B} &= \Lambda_{B} + (\dI_B-\Lambda_B)=\dI_B,
    \\\Lambda'_{RB} &= \Lambda_{RB} + \rho_R\otimes (\dI_B-\Lambda_B)
    \\&\mle (M-1)\rho_R \otimes  \mathbb{I}_B+ \rho_R\otimes \dI_B 
    \\&= M\rho_R \otimes  \mathbb{I}_B.
\end{align}
So $(\rho_R,\Lambda'_{RB})$ is a feasible solution to the program \eqref{NS-program-pre} of size $M$. Hence, we have
\begin{align}
    \suc_{}^{\rm{NS}}(\cN,M)&\ge \frac{1}{M} \tr{\Lambda'_{RB} \cdot (J_{\cN})_{RB}} 
    \\&\ge \frac{1}{M}\tr{\Lambda_{RB} \cdot (J_{\cN})_{RB}}
    \\&= \frac{M-1}{M}\cdot \suc_{}^{\rm{MC}}(\cN,M-1)
    \\&\ge \left(1-\frac{1}{M}\right)\cdot\suc_{}^{\rm{MC}}(\cN,M),
\end{align}
where we used in the last inequality $\suc_{}^{\rm{MC}}(\cN, M-1)\ge \suc_{}^{\rm{MC}}(\cN,M)$ that can be easily checked with the following formulation obtained from \eqref{MC-program} by making the change of variable $\Lambda_{RB}\leftarrow \frac{1}{M} \Lambda_{RB}$:
% \begin{align}
%       \suc_{}^{\rm{MC}}(\cN,M) &=\sup_{\rho_R\in \cS(R)}\sup_{\Lambda_{RB}} \left\{\tr{\Lambda_{RB} \cdot (J_{\cN})_{RB}} \;\middle|\; \Lambda_B \mle  \tfrac{1}{M}\mathbb{I}_B\,,\; 0\mle  \Lambda_{RB} \mle \rho_R \otimes  \mathbb{I}_B\right\}.   
% \end{align}
\begin{align}
    \suc^{\rm{MC}}(\cN, M) = \sup_{\rho_R, \Lambda_{RB}} \quad & \tr{\Lambda_{RB} (J_{\cN})_{RB}} \notag \\
\text{subject to} \quad
& \rho_R \in \cS(R), \notag \\
& \Lambda_B \mle   \frac{1}{M} \mathbb{I}_B, \notag \\
& 0 \mle \Lambda_{RB} \mle  \rho_R \otimes \mathbb{I}_B.
\end{align}
\end{proof}
Proposition \ref{prop:NSvsMC} demonstrates that as the number of messages grows, the optimal success probabilities for MC and NS strategies converge. Since the approximation is multiplicative with respect to success probabilities, this rounding result has a direct implication  in terms of the strong converse exponent. The strong converse exponents for the meta-converse and non-signaling error probabilities are defined as follows:
\begin{align}
E^{\rm{MC}}(\cN, r) &= \lim_{n\rightarrow \infty } - \frac{1}{n}\log \suc^{\rm{MC}}(\cN^{\otimes n}, e^{nr}),
 \\E^{\rm{NS}}(\cN, r) &= \lim_{n\rightarrow \infty } - \frac{1}{n}\log\suc^{\rm{NS}}(\cN^{\otimes n}, e^{nr}).
\end{align}
Using Proposition \ref{prop:NSvsMC} we immediately deduce that the meta-converse and  non-signaling strategies have  the same strong converse exponent:

\begin{cor}\label{cor:SCE-MC-NS-pre}
    Let $\cN$ be a quantum channel. For all $r\ge 0$, we have  that
    \begin{align}
   E^{\rm{MC}}(\cN, r)= E^{\rm{NS}}(\cN, r). 
\end{align}
\end{cor}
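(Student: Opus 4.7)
The plan is to derive the equality of exponents directly from the sandwich inequality in Proposition~\ref{prop:NSvsMC} applied to the tensor power channel $\cN^{\otimes n}$ at $M = e^{nr}$ messages. Since the upper bound $\suc^{\rm{MC}} \ge \suc^{\rm{NS}}$ immediately gives $E^{\rm{MC}}(\cN,r) \le E^{\rm{NS}}(\cN,r)$, the real content is the reverse direction, which I obtain by showing that the multiplicative rounding factor $(1 - 1/M) = (1 - e^{-nr})$ is exponentially benign.

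Concretely, I would instantiate Proposition~\ref{prop:NSvsMC} at $\cN^{\otimes n}$ and $M = e^{nr}$ to get
\[
\suc^{\rm{NS}}(\cN^{\otimes n}, e^{nr}) \ge (1 - e^{-nr}) \cdot \suc^{\rm{MC}}(\cN^{\otimes n}, e^{nr}),
\]
and then take $-\frac{1}{n}\log$ of both sides to obtain
\[
-\frac{1}{n}\log \suc^{\rm{NS}}(\cN^{\otimes n}, e^{nr}) \le -\frac{1}{n}\log(1 - e^{-nr}) - \frac{1}{n}\log \suc^{\rm{MC}}(\cN^{\otimes n}, e^{nr}).
\]
For any $r > 0$, we have $-\frac{1}{n}\log(1 - e^{-nr}) \sim e^{-nr}/n \to 0$ as $n \to \infty$, so passing to the limit yields $E^{\rm{NS}}(\cN,r) \le E^{\rm{MC}}(\cN,r)$. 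Combining with the easy direction proves the corollary, and the degenerate case $r = 0$ is trivial since then both success probabilities equal $1$ and both exponents vanish.

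There is essentially no obstacle here; the argument is a textbook \emph{``take log and divide by $n$''} computation, made possible precisely because the rounding guarantee in Proposition~\ref{prop:NSvsMC} is multiplicative rather than additive (an additive rounding would only yield second-order refinements, not equality of exponents). The only minor bookkeeping point is that $e^{nr}$ need not be an integer, which one handles by interpreting $M$ as $\lceil e^{nr}\rceil$ throughout; this affects the resulting exponent only by terms of order $O(1/n)$ and does not change the limit.
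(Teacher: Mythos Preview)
Your proposal is correct and follows exactly the approach the paper has in mind: the paper simply states that the corollary is an immediate consequence of Proposition~\ref{prop:NSvsMC}, and your write-up spells out the routine ``take $-\tfrac{1}{n}\log$ and let $n\to\infty$'' computation that makes this precise, including the harmless handling of the $r=0$ boundary case.
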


The meta-converse \eqref{eq:mc-hypothesis} corresponds to the problem of composite hypothesis testing. We deduce the achievability strong converse exponent of the meta-converse from the one established for composite hypothesis testing by \cite{Hayashi2016Oct}:

\begin{prop}\label{prop:SCE-achiev}
Let  $\cN$ be a quantum channel. For all $r\ge 0$, we have  that
 \begin{align}
     E^{\rm{MC}}(\cN, r) &= \lim_{n\rightarrow \infty } - \frac{1}{n}\log \suc^{\rm{MC}}(\cN^{\otimes n}, e^{nr})  
     \\&\le \sup_{\alpha\ge 0}  \frac{\alpha}{1+\alpha}\left( r- \widetilde{\cI}_{1+\alpha}(\cN) \right). 
    \end{align}
\end{prop}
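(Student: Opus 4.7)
The plan is to recast the meta-converse error as a composite quantum hypothesis testing problem via the reformulation in Eq.~\eqref{eq:mc-hypothesis}. Using Sion's minimax on the inner test optimization, the condition $\eps^{\rm{MC}}(\cN^{\otimes n}, e^{nr})\le \eps$ is equivalent to the existence of a state $\rho_{R^n}\in\cS(R^n)$ and an operator $0\mle O_{R^nB^n}\mle \dI$ satisfying $\tr{\rho_{R^n}^{1/2}(J_{\cN})^{\otimes n}_{R^nB^n}\rho_{R^n}^{1/2}\,O_{R^nB^n}}\ge 1-\eps$ together with $\tr{(\rho_{R^n}\otimes\sigma_{B^n})\,O_{R^nB^n}}\le e^{-nr}$ for every $\sigma_{B^n}\in\cS(B^n)$. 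The goal is to construct such a test at the claimed rate.

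First, I fix the input to the tensor-power form $\rho_{R^n}=\rho_R^{\otimes n}$ for a well-chosen $\rho_R\in\cS(R)$, and set $\tau := \rho_R^{1/2}(J_\cN)_{RB}\rho_R^{1/2}$. The problem then reduces to composite hypothesis testing of the iid null state $\tau^{\otimes n}$ against the alternative set $\{\rho_R^{\otimes n}\otimes \sigma_{B^n}\}_{\sigma_{B^n}}$. To pass from arbitrary $\sigma_{B^n}$ to tensor powers, I symmetrize $O_{R^nB^n}$ over permutations of the $n$ tensor factors, which leaves the iid null invariant and preserves the worst-case alternative performance. The resulting operator $Q_{B^n}:=\ptr{R^n}{(\rho_R^{n/2}\otimes \dI)\,O_{R^nB^n}\,(\rho_R^{n/2}\otimes \dI)}$ is permutation invariant, so by the post-selection/de Finetti bound~\cite{christandl2009postselection} one has $\|Q_{B^n}\|_\infty \le \operatorname{poly}(n)\cdot \max_{\sigma_B\in\cS(B)}\tr{\sigma_B^{\otimes n}\,Q_{B^n}}$. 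This reduces the alternatives to iid tensor powers $\{(\rho_R\otimes\sigma_B)^{\otimes n}\}_{\sigma_B}$ at a sub-exponential price in the rate.

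For this iid composite problem, \cite{Hayashi2016Oct} constructs a universal test attaining strong converse exponent $\sup_{\beta>1}\frac{\beta-1}{\beta}\bigl(r-\widetilde{\cI}_\beta(\cN,\rho_R)\bigr)$, where $\widetilde{\cI}_\beta(\cN,\rho_R):=\inf_{\sigma_B\in\cS(B)}\widetilde{D}_\beta\bigl(\tau\,\big\|\,\rho_R\otimes\sigma_B\bigr)$, yielding $E^{\rm{MC}}(\cN,r)\le \inf_{\rho_R}\sup_{\beta>1}\frac{\beta-1}{\beta}\bigl(r-\widetilde{\cI}_\beta(\cN,\rho_R)\bigr)$. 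Substituting $\beta=1+\alpha$ and applying Sion's minimax — justified by compactness of $\cS(R)$ together with concavity of $\widetilde{\cI}_\beta(\cN,\cdot)$ in its input (a known property of the sandwiched R\'enyi mutual information, see, e.g., \cite{Gupta2015Mar}) — interchanges $\inf_{\rho_R}$ and $\sup_\beta$ and delivers the target bound $\sup_{\alpha\ge 0}\frac{\alpha}{1+\alpha}\bigl(r-\widetilde{\cI}_{1+\alpha}(\cN)\bigr)$. The two main technical obstacles are (i) the reduction from arbitrary $\sigma_{B^n}$ to tensor-power alternatives, handled by symmetrization and post-selection, and (ii) the minimax interchange to pull the supremum over $\alpha$ outside the infimum over input states; the latter is the essential step, and can alternatively be circumvented by employing a flattening / de Finetti-type input state, in the spirit of~\cite{Li2023Nov}, that simultaneously covers all candidate $\rho_R$ at only a polynomial overhead.
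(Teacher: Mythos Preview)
Your proposal is correct and follows essentially the same route as the paper: restrict to iid inputs $\rho_R^{\otimes n}$, invoke the composite hypothesis-testing strong converse exponent of \cite{Hayashi2016Oct}, and finish with Sion's minimax (as in \cite{Li2023Nov}) to swap $\inf_{\rho_R}$ and $\sup_{\alpha}$. The only difference is that you insert an explicit symmetrization plus de Finetti step to reduce general $\sigma_{B^n}$ to tensor-power alternatives, whereas the paper bypasses this by directly citing \cite[Theorem~14]{Hayashi2016Oct} in the form that already covers the full composite family $\{\rho_R^{\otimes n}\otimes\sigma_{B^n}\}_{\sigma_{B^n}\in\cS(B^n)}$.
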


\begin{proof}
We have that from Proposition \ref{prop-dual-MC}:
%   \begin{align}
%   \suc_{}^{\rm{MC}}(\cN,M) &= \sup_{{\rho}_{R} \in \cS(R)}\sup_{0\mle O\mle \dI} \left\{\tr{\rho_R^{1/2} ( J_{\cN})_{RB}\rho_R^{1/2} \cdot O} \,\middle| \;  \sup_{\sigma_B\in \cS(B)}\tr{(\rho_R\otimes \sigma_B) O}\le \frac{1}{M} \right\}.
% \end{align} 
\begin{align}
\suc^{\rm{MC}}(\cN, M) \notag
&= \sup_{\rho_R \in \cS(R)} \sup_{0 \mle O \mle \dI} \Big\{\, 
 \tr{\rho_R^{1/2} (J_{\cN})_{RB} \rho_R^{1/2} \cdot O} \;\Big|\; \sup_{\sigma_B \in \cS(B)} \tr{(\rho_R \otimes \sigma_B) O} \le \frac{1}{M} \Big\}.
\end{align}

So, we have for $n\in \mathbb{N}$
%  \begin{align}
%  & \suc_{}^{\rm{MC}}(\cN^{\otimes n},e^{nr}) 
%   \\&= \sup_{{\rho}_{R^n} \in \cS(R^n)}\sup_{0\mle O\mle \dI} \left\{\tr{\rho_{R^n}^{1/2} ( J_{\cN})_{R^nB^n}^{\otimes n}\rho_{R^n}^{1/2} \cdot O}  \,\middle| \;  \sup_{\sigma_{B^n}\in \cS(B^n)}\tr{(\rho_{R^n}\otimes \sigma_{B^n}) O}\le e^{-nr} \right\}
%   \\&\ge \sup_{{\rho}_{R} \in \cS(R)}\sup_{0\mle O\mle \dI} \left\{\tr{ \big(\rho_R^{1/2} ( J_{\cN})_{RB}\rho_R^{1/2}\big)^{\otimes n} \cdot O}  \,\middle| \;  \sup_{\sigma_{B^n}\in \cS(B^n)}\tr{(\rho_R^{\otimes n}\otimes \sigma_{B^n}) O}\le e^{-nr} \right\},
% \end{align} 
\begin{align}
&\suc^{\rm{MC}}(\cN^{\otimes n}, e^{nr}) \notag 
\\&= \sup_{\rho_{R^n} \in \cS(R^n)} \sup_{0 \mle O \mle \dI} \Big\{ 
\tr{\rho_{R^n}^{1/2} (J_{\cN})_{R^n B^n}^{\otimes n} \rho_{R^n}^{1/2} \cdot O} \Big|\;  \sup_{\sigma_{B^n} \in \cS(B^n)} \tr{(\rho_{R^n} \otimes \sigma_{B^n}) O} \le e^{-nr} \Big\} \notag \\
&\ge \sup_{\rho_R \in \cS(R)} \sup_{0 \mle O \mle \dI} \Big\{ 
\tr{\big(\rho_R^{1/2} (J_{\cN})_{RB} \rho_R^{1/2}\big)^{\otimes n} \cdot O} \Big|\; \sup_{\sigma_{B^n} \in \cS(B^n)} \tr{(\rho_R^{\otimes n} \otimes \sigma_{B^n}) O} \le e^{-nr} \Big\}
\end{align}
where we choose the state to be iid, i.e., $\rho_{R^n} = (\rho_R)^{\otimes n}$. From  the strong converse result for the composite hypothesis testing problem of \cite[Theorem 14]{Hayashi2016Oct} we deduce that:
\begin{align}
  & -\frac{1}{n} \log\sup_{0\mle O\mle \dI} \left\{\tr{ \big(\rho_R^{1/2} ( J_{\cN})_{RB}\rho_R^{1/2}\big)^{\otimes n} \cdot O}  \,\middle| \;  \sup_{\sigma_{B^n}\in \cS(B^n)}\tr{(\rho_R^{\otimes n}\otimes \sigma_{B^n}) O}\le e^{-nr} \right\} 
   \\&\underset{n\rightarrow \infty}{\longrightarrow} \sup_{\alpha>0} \frac{\alpha}{1+\alpha}  \left( r - \inf_{\sigma_B \in \cS(B)}\widetilde{I}_{1+\alpha}(\rho_R^{1/2} ( J_{\cN})_{RB}\rho_R^{1/2} \| \rho_R \otimes \sigma_B)\right).
\end{align}
Hence, we get 
\begin{align}
   & \limsup_{n\rightarrow \infty} -\frac{1}{n}\log\suc_{}^{\rm{MC}}(\cN^{\otimes n},e^{nr}) 
   \\&\le \inf_{\rho_{R}\in \cS(R)} \sup_{\alpha\ge 0} \frac{\alpha}{1+\alpha}  \left( r - \inf_{\sigma_B \in \cS(B)}\widetilde{I}_{1+\alpha}(\rho_R^{1/2} ( J_{\cN})_{RB}\rho_R^{1/2} \| \rho_R \otimes \sigma_B)\right)
       \\&= \sup_{\alpha\ge 0} \frac{\alpha}{1+\alpha}  \left( r -\widetilde{I}_{1+\alpha}(\cN)\right),
\end{align}
where the last equality can be proven using Sion's minimax theorem (see \cite[Eqs.~(100)--(102)]{Li2023Nov}).

\end{proof}

On the other hand, the converse strong converse exponent is proven by \cite{Gupta2015Mar} for entanglement-assisted strategies.

\begin{prop}\label{prop:SCE-converse}
     Let  $\cN$ be a quantum channel. For all $r\ge 0$, we have  that
 \begin{align}
     E^{\rm{MC}}(\cN, r) &= \lim_{n\rightarrow \infty } - \frac{1}{n}\log \suc^{\rm{MC}}(\cN^{\otimes n}, e^{nr})  
     \\&\ge \sup_{\alpha\ge 0}  \frac{\alpha}{1+\alpha}\left( r- \widetilde{\cI}_{1+\alpha}(\cN) \right). 
    \end{align}
\end{prop}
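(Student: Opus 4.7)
The plan is to establish the converse inequality directly from the dual SDP representation of the meta-converse used in the preceding achievability proof, combined with a standard one-shot hypothesis-testing estimate in terms of the sandwiched R\'enyi divergence and the additivity of the channel sandwiched R\'enyi mutual information. The starting point is the representation
\begin{equation*}
\suc^{\rm{MC}}(\cN,M) = \sup_{\rho_R\in\cS(R)}\sup_{0\mle O\mle\dI}\Big\{\tr{\rho_R^{1/2}(J_\cN)_{RB}\rho_R^{1/2}\cdot O}\;\Big|\;\sup_{\sigma_B\in\cS(B)}\tr{(\rho_R\otimes\sigma_B)O}\le \tfrac{1}{M}\Big\},
\end{equation*}
from which I would fix a (near-)optimal pair $(\rho_R, O)$ together with an arbitrary $\sigma_B\in\cS(B)$.

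The key one-shot estimate is the data-processing inequality for $\widetilde{D}_{1+\alpha}$ applied to the binary POVM $\{O,\dI-O\}$, which (after discarding the non-negative second term in the classical expression for $\widetilde{D}_{1+\alpha}$) yields
\begin{equation*}
\tr{\rho\,O}^{1+\alpha}\le \tr{\sigma\,O}^{\alpha}\cdot e^{\alpha\,\widetilde{D}_{1+\alpha}(\rho\|\sigma)}
\end{equation*}
for any $\rho,\sigma\mge 0$ with $\rho\ll\sigma$, any $0\mle O\mle \dI$, and any $\alpha>0$. Applying it with $\rho=\rho_R^{1/2}(J_\cN)_{RB}\rho_R^{1/2}$ and $\sigma=\rho_R\otimes\sigma_B$, together with the feasibility constraint $\tr{(\rho_R\otimes\sigma_B)O}\le 1/M$, gives
\begin{equation*}
\suc^{\rm{MC}}(\cN,M)^{1+\alpha}\le M^{-\alpha}\cdot e^{\alpha\,\widetilde{D}_{1+\alpha}(\rho_R^{1/2}(J_\cN)_{RB}\rho_R^{1/2}\,\|\,\rho_R\otimes\sigma_B)}.
\end{equation*}
Since the inequality holds for every $\sigma_B$, I would take the infimum over $\sigma_B$ and then upper bound the $\rho_R$-dependent quantity by its supremum over input states, yielding the single-shot converse
\begin{equation*}
-\log\suc^{\rm{MC}}(\cN,M)\ge \frac{\alpha}{1+\alpha}\bigl(\log M - \widetilde{\cI}_{1+\alpha}(\cN)\bigr).
\end{equation*}

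Applying this bound to the $n$-fold channel $\cN^{\otimes n}$ with $M=e^{nr}$ gives
\begin{equation*}
-\frac{1}{n}\log\suc^{\rm{MC}}(\cN^{\otimes n},e^{nr})\ge \frac{\alpha}{1+\alpha}\Bigl(r-\tfrac{1}{n}\widetilde{\cI}_{1+\alpha}(\cN^{\otimes n})\Bigr),
\end{equation*}
and it only remains to single-letterise the right-hand side via additivity of the channel sandwiched R\'enyi mutual information, $\widetilde{\cI}_{1+\alpha}(\cN^{\otimes n})=n\,\widetilde{\cI}_{1+\alpha}(\cN)$, which is the main non-trivial input I would import from \cite{Gupta2015Mar}. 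Taking $\liminf_{n\to\infty}$ and then $\sup_{\alpha\ge 0}$ completes the argument. The one-shot step is a routine consequence of data processing, so the main obstacle is precisely this additivity: the direction $\widetilde{\cI}_{1+\alpha}(\cN^{\otimes n})\ge n\,\widetilde{\cI}_{1+\alpha}(\cN)$ holds trivially from product inputs, but the reverse inequality, which makes the exponent single-letter, requires careful manipulations of the sandwiched R\'enyi divergence (equivalently, multiplicativity of the underlying completely bounded norms).
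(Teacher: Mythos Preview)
Your proposal is correct and follows essentially the same approach as the paper's proof in Appendix~\ref{app:proof-converse}: start from the dual formulation of $\suc^{\rm{MC}}$ in Proposition~\ref{prop-dual-MC}, apply data processing of $\widetilde{D}_{1+\alpha}$ under the binary measurement $\{O,\dI-O\}$ to obtain the one-shot bound, and then invoke additivity $\widetilde{\cI}_{1+\alpha}(\cN^{\otimes n})=n\,\widetilde{\cI}_{1+\alpha}(\cN)$ from \cite{Gupta2015Mar}. The only cosmetic difference is that you work from the formulation \eqref{eq:MC-dual2} (with $\sup_{\sigma_B}$ inside the constraint) whereas the paper uses \eqref{eq:MC-dual3} (with $\inf_{\sigma_B}$ outside), but this makes no difference to the argument.
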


This result is stated in \cite{Gupta2015Mar} for EA strategies. By inspection, the proof applies for the meta-converse as well. For the reader's convenience, we include a detailed proof of the converse result of the  meta-converse  strong converse exponent  in Appendix \ref{app:proof-converse}. By combining Corollary \ref{cor:SCE-MC-NS-pre}, Proposition \ref{prop:SCE-achiev}, and Proposition \ref{prop:SCE-converse}, we deduce the  NS and MC strong converse  exponent.

\begin{cor}\label{cor:SCE-MC-NS}
Let $\cN$ be a quantum channel. For all $r\ge 0$, we have  that 
 \begin{align}
     E^{\rm{MC}}(\cN, r) &= E^{\rm{NS}}(\cN, r)
     =\sup_{\alpha\ge 0}  \frac{\alpha}{1+\alpha}\left( r- \widetilde{\cI}_{1+\alpha}(\cN) \right). 
    \end{align}
\end{cor}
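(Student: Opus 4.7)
The plan is to observe that this corollary is essentially a direct assembly of three previously established results, and the only work required is to chain them in the correct order. Since each of the ingredients has already been stated in the excerpt, there is no technical obstacle left.

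First, I would invoke Corollary \ref{cor:SCE-MC-NS-pre}, which establishes that the meta-converse and non-signaling strong converse exponents coincide, i.e., $E^{\rm{MC}}(\cN, r) = E^{\rm{NS}}(\cN, r)$ for every $r\ge 0$. This step is the ``rounding payoff'': because the multiplicative factor $(1-1/M)$ in Proposition \ref{prop:NSvsMC} contributes $\frac{1}{n}\log(1 - e^{-nr}) \to 0$, the asymptotic rates of decay of $\suc^{\rm{MC}}(\cN^{\otimes n}, e^{nr})$ and $\suc^{\rm{NS}}(\cN^{\otimes n}, e^{nr})$ agree.

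Next, I would combine the achievability bound from Proposition \ref{prop:SCE-achiev}, giving
\begin{align}
E^{\rm{MC}}(\cN, r) \le \sup_{\alpha\ge 0}  \frac{\alpha}{1+\alpha}\left( r- \widetilde{\cI}_{1+\alpha}(\cN) \right),
\end{align}
with the converse bound from Proposition \ref{prop:SCE-converse}, giving
\begin{align}
E^{\rm{MC}}(\cN, r) \ge \sup_{\alpha\ge 0}  \frac{\alpha}{1+\alpha}\left( r- \widetilde{\cI}_{1+\alpha}(\cN) \right).
\end{align}
These two inequalities force equality between $E^{\rm{MC}}(\cN, r)$ and the sandwiched expression, which in turn equals $E^{\rm{NS}}(\cN, r)$ by the first step, completing the proof.

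The ``hard part'' is not in this corollary itself but is absorbed into its ingredients: Proposition \ref{prop:SCE-achiev} relies on the composite hypothesis testing strong converse of \cite{Hayashi2016Oct} together with a Sion's minimax swap, while Proposition \ref{prop:SCE-converse} comes from the entanglement-assisted analysis of \cite{Gupta2015Mar} adapted to the meta-converse. Once those are in hand, the present statement reduces to a three-line assembly, and I would present it simply as ``Combine Corollary \ref{cor:SCE-MC-NS-pre}, Proposition \ref{prop:SCE-achiev}, and Proposition \ref{prop:SCE-converse}.''
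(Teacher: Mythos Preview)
Your proposal is correct and matches the paper's own approach exactly: the paper states the corollary follows ``by combining Corollary \ref{cor:SCE-MC-NS-pre}, Proposition \ref{prop:SCE-achiev}, and Proposition \ref{prop:SCE-converse},'' which is precisely the three-step assembly you outline.
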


%%%%%%%%%%%%%%%%%%%%%%%%%%%%%%%%%%%%%%%%%%%%%%%%

\section{Approximation algorithms in the quantum-classical setting}\label{sec:qc}

In this section, we consider the restricted class of quantum-classical channels with classical output system.  We are able to prove approximation results that relate the entanglement-assisted and non-signaling success probabilities.

\begin{prop}\label{prop:rounding-qc}
Let $\cN$ be a quantum-classical channel and $M, M'\ge 1$. We have the following inequality between the entanglement-assisted and non-signaling success probabilities
  \begin{align}
       &\suc^{\rm{EA}}(\cN,M')
       \ge  \frac{M}{M'}\left(1-\left(1-\frac{1}{M}\right)^{M'}\right) \suc^{\rm{NS}}(\cN,M).
    \end{align}
    In particular, when $M'=M$, we obtain 
     \begin{align}
       \suc^{\rm{EA}}(\cN,M)
       &\ge  \left(1-\left(1-\frac{1}{M}\right)^{M}\right) \suc^{\rm{NS}}(\cN,M)
       \\&\ge  \left(1-\frac{1}{e}\right) \suc^{\rm{NS}}(\cN,M).
    \end{align}
    Moreover, when $M'=o(M)$, we have that 
    \begin{align}
        \suc^{\rm{EA}}(\cN,M')
        &\ge \left( 1-\frac{M'}{2M} + o\Big(\frac{M'}{M}\Big)\right) \suc^{\rm{NS}}(\cN,M).
    \end{align}
\end{prop}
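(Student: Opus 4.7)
The plan is to implement the sketch given in the introduction: position-based coding combined with a sequential decoder, exploiting crucially that $B$ is classical. Let $(\rho_R,\Lambda_{RB})$ be an optimal solution of the NS program \eqref{NS-program-pre} at size $M$. Since $(J_\cN)_{RB}$ is diagonal in the classical $B$-basis, dephasing $\Lambda_{RB}$ on $B$ preserves feasibility and objective, so I may assume $\Lambda_{RB}=\sum_b\Lambda_R^b\otimes|b\rangle\!\langle b|_B$ with $0\mle\Lambda_R^b\mle M\rho_R$ and $\tr{\Lambda_R^b}=1$ (this last condition is the NS equality $\Lambda_B=\dI_B$ read off in the classical basis). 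Setting $O_R^b:=\tfrac{1}{M}\rho_R^{-1/2}\Lambda_R^b\rho_R^{-1/2}$ yields POVM elements $0\mle O_R^b\mle\dI_R$, and $O_{RB}:=\sum_b O_R^b\otimes|b\rangle\!\langle b|_B$ is the decoder observable.

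The code is the natural position-based one: Alice and Bob share $M'$ independent copies $\phi_{R_lA_l}$ of a purification of $\rho_R$, and to send $m\in[M']$ Alice applies $\cN_{A_m\rightarrow B}$ to her $m$-th share. Bob sequentially measures the binary POVMs $\{O_{R_lB},\dI-O_{R_lB}\}$ on $R_lB$ for $l=1,\dots,M'$, and outputs the first position that triggers the detection outcome (with an arbitrary default otherwise). Because $B$ is classical, this is equivalent to first reading off the classical outcome $b$ from $B$ and then performing independent binary measurements $\{O_{R_l}^b,\dI-O_{R_l}^b\}$ on the disjoint systems $R_l$; the decoder operators thus commute across copies, bypassing the noncommutativity obstacle that blocks the fully quantum case. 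Two identities now fall out of the construction:
\begin{align*}
\tr{O_{RB}\cdot\cN_{A\rightarrow B}(\phi_{RA})}=\suc^{\rm{NS}}(\cN,M),\qquad \tr{O_R^b\,\rho_R}=\frac{1}{M}\ \text{for every }b.
\end{align*}
The first is just unpacking the definitions, while the second\,---\,which is the new ingredient compared with the MC relaxation\,---\,follows from cyclicity of the trace together with $\tr{\Lambda_R^b}=1$.

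Consequently, conditional on $B=b$ and on $m$ being the true message, the binary outcomes at the different copies are independent, with per-copy detection probability $p_b$ at the true copy $l=m$ (satisfying $\mathbb{E}_B[p_B]=\tr{O_{RB}\sigma_{RB}}=\suc^{\rm{NS}}(\cN,M)$ for $\sigma_{RB}=\cN_{A\rightarrow B}(\phi_{RA})$) and exactly $1/M$ at every other copy $l\neq m$. By independence the decoder outputs $m$ with probability at least $(1-1/M)^{m-1}p_B$ given $B$ and $m$, so averaging over $m\in[M']$ uniformly and over $B$ yields
\begin{align*}
\suc^{\rm{EA}}(\cN,M')\ge \frac{1}{M'}\sum_{m=1}^{M'}\left(1-\tfrac{1}{M}\right)^{m-1}\mathbb{E}_B[p_B]=\frac{M}{M'}\left(1-\left(1-\tfrac{1}{M}\right)^{M'}\right)\suc^{\rm{NS}}(\cN,M),
\end{align*}
from which both stated special cases follow by elementary estimates. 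The main subtlety to watch is that the argument would collapse if $q_b:=\tr{O_R^b\rho_R}$ depended non-trivially on $b$, since the factor $(1-q_B)^{m-1}$ could then be arbitrarily anti-correlated with $p_B$ and no multiplicative bound would survive. That constancy is precisely what the NS equality $\Lambda_B=\dI_B$ enforces but the MC inequality $\Lambda_B\mle\dI_B$ does not, which is also the reason the proposition is phrased against $\suc^{\rm{NS}}$ rather than $\suc^{\rm{MC}}$.
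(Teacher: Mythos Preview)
Your proof is correct and follows essentially the same approach as the paper: position-based coding with $M'$ shared copies of a purification of $\rho_R$, followed by a sequential decoder built from $O_{RB}=\tfrac{1}{M}\rho_R^{-1/2}\Lambda_{RB}\rho_R^{-1/2}$, using that the classical $B$ makes the decoder operators commute across copies. The paper carries out the trace computation directly via the commutation relation $O_{R_lB}O_{R_{l'}B}=O_{R_{l'}B}O_{R_lB}$, whereas you first condition on the classical value $b$ and then invoke independence of the resulting single-system measurements; this is just a different bookkeeping of the same calculation, and both yield the exact per-message success probability $(1-1/M)^{m-1}\suc^{\rm NS}(\cN,M)$.
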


This result extends the classical rounding results from \cite{barman2017algorithmic} to the quantum-classical setting with entanglement assistance while maintaining the same approximation guarantees. Additionally, \cite{barman2017algorithmic} proved that similar approximation ratio (between shared randomness and non-signaling success probabilities) is optimal for classical channels.  Since $\suc^{\rm{NS}}(\cN,M) \ge \suc^{\rm{EA}}(\cN,M) \ge \left(1-\frac{1}{e}\right) \suc^{\rm{NS}}(\cN,M)$, we conclude that EA and NS have the same strong converse exponent. Therefore, by Corollary \ref{cor:SCE-MC-NS}, we obtain  the following characterization of the EA strong converse exponent.

\begin{cor}
    Let  $\cN$ be a quantum-classical channel. For all $r\ge 0$, we have  that 
 \begin{align}
     E^{\rm{EA}}(\cN, r) &= E^{\rm{NS}}(\cN, r)
     =\sup_{\alpha\ge 0}  \frac{\alpha}{1+\alpha}\left( r- \widetilde{\cI}_{1+\alpha}(\cN) \right). 
    \end{align}
\end{cor}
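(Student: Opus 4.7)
The plan is to derive the result as an immediate consequence of the multiplicative rounding in Proposition~\ref{prop:rounding-qc} combined with Corollary~\ref{cor:SCE-MC-NS}, using a simple sandwich argument. The entanglement-assisted success probability is bracketed from above by the non-signaling one (trivially, since EA strategies are a subclass of NS) and from below by $(1-e^{-1})$ times the non-signaling one (Proposition~\ref{prop:rounding-qc} with $M'=M$, which applies because the tensor power $\cN^{\otimes n}$ of a quantum-classical channel remains quantum-classical). Since both bounds differ only by the $n$-independent multiplicative constant $1-e^{-1}$, they collapse to the same exponential rate.

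First, I would instantiate Proposition~\ref{prop:rounding-qc} at the $n$-fold product channel with $M=M'=\lceil e^{nr}\rceil$, yielding
\begin{align}
\left(1-\tfrac{1}{e}\right)\suc^{\rm{NS}}(\cN^{\otimes n},e^{nr})\;\le\;\suc^{\rm{EA}}(\cN^{\otimes n},e^{nr})\;\le\;\suc^{\rm{NS}}(\cN^{\otimes n},e^{nr}),
\end{align}
where the right inequality is \eqref{eq:NS>EA}. Taking $-\frac{1}{n}\log$ throughout and noting that $-\frac{1}{n}\log(1-e^{-1})\to 0$ as $n\to\infty$, I would obtain
\begin{align}
\liminf_{n\to\infty}-\tfrac{1}{n}\log\suc^{\rm{EA}}(\cN^{\otimes n},e^{nr})
\;\ge\;\lim_{n\to\infty}-\tfrac{1}{n}\log\suc^{\rm{NS}}(\cN^{\otimes n},e^{nr})
\end{align}
and
\begin{align}
\limsup_{n\to\infty}-\tfrac{1}{n}\log\suc^{\rm{EA}}(\cN^{\otimes n},e^{nr})
\;\le\;\lim_{n\to\infty}-\tfrac{1}{n}\log\suc^{\rm{NS}}(\cN^{\otimes n},e^{nr}),
\end{align}
so the limit defining $E^{\rm{EA}}(\cN,r)$ exists and equals $E^{\rm{NS}}(\cN,r)$.

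Finally, I would invoke Corollary~\ref{cor:SCE-MC-NS} to replace $E^{\rm{NS}}(\cN,r)$ with the Rényi expression $\sup_{\alpha\ge 0}\tfrac{\alpha}{1+\alpha}(r-\widetilde{\cI}_{1+\alpha}(\cN))$, completing the proof. There is essentially no obstacle here: the whole argument is a one-line sandwich followed by an application of earlier results. The only point to double-check is that the hypothesis of Proposition~\ref{prop:rounding-qc} (namely that the channel is quantum-classical) is preserved under tensor products, which is immediate since the output system of $\cN^{\otimes n}$ is the $n$-fold product of the classical output alphabet and is therefore still classical.
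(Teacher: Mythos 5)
Your proposal is correct and matches the paper's own argument exactly: the paper likewise sandwiches $\suc^{\rm{EA}}(\cN^{\otimes n},e^{nr})$ between $\suc^{\rm{NS}}(\cN^{\otimes n},e^{nr})$ and $\left(1-\tfrac{1}{e}\right)\suc^{\rm{NS}}(\cN^{\otimes n},e^{nr})$ via Proposition~\ref{prop:rounding-qc} (noting tensor powers of quantum-classical channels remain quantum-classical) and then invokes Corollary~\ref{cor:SCE-MC-NS}. Nothing is missing.
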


\begin{proof}[Proof of Proposition \ref{prop:rounding-qc}]
% Recall the program for the non-signaling success probability \eqref{NS-program}:
% \begin{equation}
%            \suc^{\rm{NS}}(\cN,M) =\sup_{\Lambda_{RB}}  \sup_{\rho_R \in \cS(R)} \left\{\tr{\Lambda_{RB} \cdot (J_{\cN})_{RB}} \;\middle|\; \Lambda_B = \tfrac{1}{M}\mathbb{I}_B,\, 0\mle  \Lambda_{RB} \mle \rho_R \otimes  \mathbb{I}_B \right\}.
% \end{equation}
Let $(\Lambda_{RB}, \rho_{R})$ be an optimal solution to  the NS success probability program \eqref{NS-program}. 
Let $M'$ be the size of the messages to be transmitted with an entanglement-assisted strategy. 
We consider the following scheme inspired from the  position-based coding of \cite{Anshu2018Jun} and the sequential decoder of \cite{Wilde2013Sep}:
\begin{itemize}
    \item[\textbf{Shared entanglement.}] Let $\ket{\phi}_{RR'} = \rho_R^{1/2}\ket{w}_{RR'}$ be a purification of $\rho_{R}$ where $R'\simeq R$. The shared entanglement state is 
    \begin{equation}
        \bigotimes_{m=1}^{M'} \phi_{R_mR_m'},
    \end{equation}
    where $R'_1\cdots R'_{M'}$ systems are held by Alice while $R_1\cdots R_{M'}$ systems are held by Bob.
    \item[\textbf{Encoding.}] To send the message $m\in [M']$, Alice places  the system $R'_m$ in $A'$ and traces out the rest of her part of the shared entanglement.  In other words, she applies the map $\id_{R'_m\rightarrow A'}\otimes \bigotimes_{l\neq m}\ptr{R'_{l}}{\cdot}$. The resulting state is 
    \begin{align}
        \sigma_{R_1R_2\cdots R_{M'} A'}(m) &= \id_{R'_m\rightarrow A'}\otimes \bigotimes_{l\neq m}\ptr{R'_{l}}{ \textstyle \bigotimes_{l=1}^{M'} \phi_{R_{l}R_{l}'} } 
        = \phi_{R_m A'}\otimes \bigotimes_{l\neq m} \rho_{R_l}.   
    \end{align}
    \item[\textbf{Transmission.}] Alice transmits $A'$ over the channel $\cN_{A'\rightarrow B}$. The global state is then 
    \begin{align}
        \zeta_{R_1R_2\cdots R_{M'} B}(m) &= \cN_{A'\rightarrow B}(\sigma_{R_1R_2\cdots R_{M'} A'}(m))
        =\cN_{A'\rightarrow B}\left(\phi_{R_m A'}\right)\otimes \bigotimes_{l\neq m} \rho_{R_l}.
    \end{align}
    \item[\textbf{Decoding.}] Since $(\Lambda_{RB}, \rho_{R})$ is a solution of the program \eqref{NS-program}, they satisfy $0\mle \Lambda_{RB}\mle \rho_R\otimes \dI_B$,  thus we can define the observable
    \begin{align}
        0\mle O_{RB} = \rho_R^{-1/2} \Lambda_{RB}\rho_R^{-1/2} \mle \dI_{RB}.
    \end{align}
    Bob performs the measurement:
    \begin{align}
       & \left\{ \Xi_{R_1R_2\cdots R_{M'} B}(m) \right\}_{m=1}^{M'} \bigcup \left\{ \Xi_{R_1R_2\cdots R_{M'} B}^{\star} \right\}, \quad \text{with }\label{eq:POVM}
        \\ & \Xi_{R_1R_2\cdots R_{M'} B}(m) = \sqrt{\dI-O_{R_{1}B}}\cdots\sqrt{\dI-O_{R_{m-1}B}}O_{R_mB}\sqrt{\dI-O_{R_{m-1}B}}\cdots \sqrt{\dI-O_{R_{1}B}},
        \\ &\quad \;\;  \Xi_{R_1R_2\cdots R_{M'} B}^{\star} = \dI_{R_1R_2\cdots R_{M'} B} - \sum_{m=1}^{M'}\Xi_{R_1R_2\cdots R_{M'} B}(m).\label{eq:POVM-last}
    \end{align}
    This measurement corresponds to sequentially measuring $R_tB$ using the POVM $\{O_{R_tB}, \dI- O_{R_tB}\}$ for $t=1,2, \dots$ and returning the first index $t$ for which we get the first outcome (corresponding to the observable $O_{R_tB}$). 
    Note that the last operator in \eqref{eq:POVM} can be written as
    \begin{align}
    \Xi_{R_1\cdots R_{M'} B}^{\star}=\dI\!-\!\sqrt{\dI\!-\!O_{R_{1}B}}\cdots\sqrt{\dI\!-\!O_{R_{M'-1}B}}(\dI\!-\!O_{R_{M'}B})\sqrt{\dI\!-\!O_{R_{M'-1}B}}\cdots \sqrt{\dI\!-\!O_{R_{1}B}},
    \end{align}
    which shows that it is indeed an observable (using $0\mle O_{RB}\mle \dI_{RB}$) and the set  \eqref{eq:POVM} is a valid POVM. 
   If we observe an outcome corresponding to the operator $\Xi_{R_1R_2\cdots R_{M'} B}^{\star}$ we return $0$.
    Since $B$ is a classical system we have that
    \begin{equation}
       \forall l\neq l' : \quad  O_{R_{l}B}O_{R_{l'}B} = O_{R_{l'}B}O_{R_{l}B},
    \end{equation}
   thus the measurement operators can be written as 
    \begin{align}
        \Xi_{R_1R_2\cdots R_{M'} B}(m) &= \sqrt{\dI-O_{R_{1}B}}\cdots\sqrt{\dI-O_{R_{m-1}B}}O_{R_mB}\sqrt{\dI-O_{R_{m-1}B}}\cdots \sqrt{\dI-O_{R_{1}B}}
        \\&= (\dI-O_{R_{1}B})\cdots(\dI-O_{R_{m-1}B})O_{R_mB}.\label{eq:povm-simple}
    \end{align}
\end{itemize}

\textbf{Analysis of the scheme.} 
The success probability of decoding the $m^{\text{th}}$ message can be calculated as follows:  
\begin{widetext}
    \begin{align}
        &\suc(\cN, M',m) \\&= \tr{\zeta_{R_1R_2\cdots R_{M'} B}(m) \cdot \Xi _{R_1R_2\cdots R_{M'} B}(m)  }
        \\&= \tr{\cN_{A'\rightarrow B}\left(\phi_{R_m A'}\right)\otimes \bigotimes_{l\neq m} \rho_{R_l}\cdot (\dI-O_{R_{1}B})\cdots(\dI-O_{R_{m-1}B})O_{R_mB}}
        \\&= \tr{ \rho_{R_m}^{1/2}\cN_{A'\rightarrow B}\left(w_{R_m A'}\right)\rho_{R_m}^{1/2}\otimes \bigotimes_{l\neq m} \rho_{R_l}\cdot (\dI-O_{R_{1}B})\cdots(\dI-O_{R_{m-1}B})O_{R_mB}}
           \\&= \tr{ (J_{\cN})_{R_m B}\cdot (\rho_{R_1}-\Lambda_{R_{1}B})\cdots(\rho_{R_{m-1}}-\Lambda_{R_{m-1}B})\Lambda_{R_mB}}
        \\&= \tr{ (J_{\cN})_{R_m B}\cdot (\dI_B-\Lambda_{B})\cdots(\dI_B-\Lambda_{B})\Lambda_{R_mB}}
        \\&= \left(1-\frac{1}{M}\right)^{m-1}\tr{ (J_{\cN})_{R_m B}\cdot \Lambda_{R_mB}}
        \\&= \left(1-\frac{1}{M}\right)^{m-1}\suc^{\rm{NS}}(\cN,M),\label{eq:sequential-exact}
    \end{align}
\end{widetext}
    where we used $\Lambda_B = \frac{1}{M}\dI_B$ and $ \tr{ (J_{\cN})_{R B}\cdot \Lambda_{RB}} = \suc^{\rm{NS}}(\cN,M)$. 
\\ Hence, the average success  probability of this scheme can be calculated as follows:
  \begin{align}
      \frac{1}{M'}\sum_{m=1}^{M'} \suc(\cN, M',m)
     &=   \frac{1}{M'}\sum_{m=1}^{M'} \left(1-\frac{1}{M}\right)^{m-1}\suc^{\rm{NS}}(\cN,M)
       \\&=\frac{M}{M'}\left(1-\left(1-\frac{1}{M}\right)^{M'}\right) \suc^{\rm{NS}}(\cN,M). 
    \end{align}
    Finally we deduce the desired bound on the EA success probability:
     \begin{align}
      \suc^{\rm{EA}}(\cN,M')
     &\ge \frac{1}{M'}\sum_{m=1}^{M'}\suc(\cN, M',m)
       \\&= \frac{M}{M'}\left(1-\left(1-\frac{1}{M}\right)^{M'}\right) \suc^{\rm{NS}}(\cN,M).
    \end{align}
\end{proof}

The optimality of Proposition \ref{prop:rounding-qc} remains unclear, as entanglement-assisted  strategies are more challenging to manage compared to shared-randomness strategies (e.g., \cite{Pironio2010May,Berta2016Aug,berta2024optimality}). Moreover, we are unable to establish similar approximations as in Proposition \ref{prop:rounding-qc} in the fully quantum setting. The main challenge is that, in the quantum setting, we can no longer simplify the measurement operators as done in \eqref{eq:povm-simple}. Consequently, the analysis of the sequential decoder \eqref{eq:POVM} becomes non-trivial and does not yield the exact success probabilities as in \eqref{eq:sequential-exact}. In the next section, we present  alternative approximation results for channel coding in the quantum setting.

%%%%%%%%%%%%%%%%%%%%%%%%%%%%%%%%%%%%%%%%%%%%%%%%

\section{Approximation algorithms in the quantum setting}
\label{sec:qq}

In this section, we explore approximation algorithms for channel coding in the general quantum setting. Our objective is to design algorithms that approximate  the entanglement-assisted  success probability. We begin by presenting  a simple approximation with an additive error in Section \ref{sec:qq-additive}, followed by a more involved approximation that achieves a multiplicative error in Section \ref{sec:qq-multiplicative}. 

\subsection{Approximation with an additive error}\label{sec:qq-additive}

%%%%%%%%%%%%%%%%%%%%%%%%%%%%%%%%%%%%%%%%%%%%%%%%

Our first attempt at designing approximation algorithms uses the same shared entanglement and the position-based coding \cite{Anshu2018Jun,qi18}, similar to what we have done in the quantum-classical setting in Section \ref{sec:qc}. However, to deal with the non-commutative difficulty that arises in the quantum setting, we use the well-known decoder and inequality of Hayashi-Nagaoka \cite{Hayashi2003Jun}.

\begin{prop}\label{prop:rounding-qq-additive}
Let $\cN$ be quantum channel and $M, M'\ge 1$. We have the following inequality between the entanglement-assisted and non-signaling error probabilities
 \begin{align}
    \eps_{}^{\rm{EA}}(\cN, M')\le 2\eps_{}^{\rm{NS}}(\cN,M) +4\frac{M'-1}{M}.
\end{align}
Moreover, we have the following inequality between the entanglement-assisted and non-signaling success probabilities 
\begin{align}
   \suc_{}^{\rm{EA}}(\cN, M')&\ge   \suc^{\rm{NS}}(\cN,M)  -5 \sqrt{\frac{M'}{M}}. 
\end{align}
\end{prop}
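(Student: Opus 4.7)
The plan is to reuse the position-based coding scheme from Proposition \ref{prop:rounding-qc} verbatim, but replace the sequential decoder (which exploits commutativity when $B$ is classical) by the Hayashi--Nagaoka pretty-good measurement. Let $(\Lambda_{RB},\rho_R)$ be an optimal solution of the NS program \eqref{NS-program} of size $M$, and set $O_{RB}=\rho_R^{-1/2}\Lambda_{RB}\rho_R^{-1/2}$, so that $0\mle O_{RB}\mle\dI_{RB}$. Alice and Bob share $\bigotimes_{m=1}^{M'}\phi_{R_mR'_m}$, with $\phi_{RR'}$ a purification of $\rho_R$; to send $m\in[M']$, Alice transmits the $R'_m$-side of the $m$-th copy through $\cN_{A'\to B}$, producing the joint state $\zeta_{R_1\cdots R_{M'}B}(m)=\cN_{A'\to B}(\phi_{R_mA'})\otimes \bigotimes_{l\neq m}\rho_{R_l}$. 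Bob then measures with the POVM $\{\Xi_m\}_{m=1}^{M'}$ defined by $\Xi_m=T^{-1/2}O_{R_mB}T^{-1/2}$, where $T=\sum_{l=1}^{M'}O_{R_lB}$, completed by $\dI-\sum_m\Xi_m$.

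The key point of the analysis is the Hayashi--Nagaoka operator inequality: for any $0\mle T\mle \dI$, $S\mge 0$ and $c>0$, one has $\dI-(T+S)^{-1/2}T(T+S)^{-1/2}\mle (1+c)(\dI-T)+(2+c+c^{-1})S$. Applying this with $T=O_{R_mB}$ and $S=\sum_{l\neq m}O_{R_lB}$ and taking the trace against $\zeta(m)$, I would compute two types of expectation. First, the ``correct position'' term $\tr{\zeta(m)\,O_{R_mB}}$ reduces, exactly as in the quantum-classical case, to $\tr{(J_\cN)_{RB}\Lambda_{RB}}=\suc^{\rm{NS}}(\cN,M)$. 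Second, the ``cross'' terms $\tr{\zeta(m)\,O_{R_lB}}$ for $l\neq m$ factorize into $\tr{(\rho_{R_l}\otimes\sigma_B)O_{R_lB}}$ where $\sigma_B=\ptr{R_m}{\cN(\phi_{R_mA'})}$, and the NS constraint $\Lambda_B=\frac{1}{M}\dI_B$ makes this exactly $1/M$, independently of $\sigma_B$.

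Putting these together and averaging over $m$ gives
\begin{align}
\eps^{\rm{EA}}(\cN,M')\le (1+c)\,\eps^{\rm{NS}}(\cN,M)+(2+c+c^{-1})\frac{M'-1}{M}.
\end{align}
The first inequality of the proposition follows by choosing $c=1$. For the second, rewrite the bound as $\suc^{\rm{NS}}(\cN,M)-\suc^{\rm{EA}}(\cN,M')\le c\,\eps^{\rm{NS}}(\cN,M)+(2+c+c^{-1})(M'-1)/M$, use $\eps^{\rm{NS}}\le 1$, and optimize over $c$. The choice $c\simeq\sqrt{(M'-1)/M}$ balances the three error terms and yields a bound of the form $O(\sqrt{M'/M})$; a routine check shows the numerical constant is at most $5$ (and for $M'\ge M$ the second inequality is trivial since $\suc^{\rm{EA}}\ge 0$ and $\suc^{\rm{NS}}\le 1$).

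The only real difficulty is the one already flagged in the paper: the operators $\{O_{R_lB}\}_l$ do not commute when $B$ is quantum, so the clean telescoping identity \eqref{eq:sequential-exact} is unavailable. The Hayashi--Nagaoka inequality is exactly the tool that sidesteps this obstruction, at the price of the factor $(1+c)$ in front of $\eps^{\rm{NS}}$ and the constant in front of the cross-term sum; everything else is the same bookkeeping as in the quantum-classical proof.
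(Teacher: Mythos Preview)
Your proposal is correct and matches the paper's proof essentially step for step: the same position-based encoding, the same Hayashi--Nagaoka decoder and operator inequality, the same evaluation of the ``correct'' and ``cross'' terms via the NS constraints, and the same choices of $c$ (the paper takes $c=\sqrt{M'/M}$ for the second inequality and handles $M'>M$ as a trivial case, just as you do). The only cosmetic issue is that you overload the symbol $T$ for both the sum $\sum_l O_{R_lB}$ and the single operator $O_{R_mB}$ when invoking the inequality, but the intent is clear.
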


This proposition is similar to what was achieved for classical-quantum channels using shared-randomness strategies by \cite{Fawzi19}. Additionally, it implies that  non-signaling correlations do not allow to improve the entanglement-assisted classical capacity of a quantum channel \cite{Leung2015Jun}. Finally, for an approximation in terms of error probabilities, the additive error $O(\frac{M'}{M})$ seems to be unavoidable. This can be seen by the optimality proof  in the classical setting of \cite{barman2017algorithmic}. In terms of success probabilities, the additive error $O\big(\sqrt{\frac{M'}{M}}\big)$ is not sufficient enough to deduce optimal strong converse exponents. For this reason, in the next section,  we explore more involved approximation algorithms that achieve small multiplicative errors. 

\begin{proof}[Proof of Proposition \ref{prop:rounding-qq-additive}]
% Recall the program for the non-signaling success probability \eqref{NS-program}:
% \begin{equation}
%            \suc^{\rm{NS}}(\cN,M) =\sup_{\Lambda_{RB}}  \sup_{\rho_R \in \cS(R)} \left\{\tr{\Lambda_{RB} \cdot (J_{\cN})_{RB}} \;\middle|\; \Lambda_B = \tfrac{1}{M}\mathbb{I}_B,\, 0\mle  \Lambda_{RB} \mle \rho_R \otimes  \mathbb{I}_B \right\}.
% \end{equation}
Let $(\Lambda_{RB}, \rho_{R})$ be an optimal solution of the NS  success probability program \eqref{NS-program}.  Let $M'$ be the number of the messages  to be transmitted with an entanglement-assisted strategy. 
We consider the following scheme inspired by the position-based coding of \cite{Anshu2018Jun} and Hayashi-Nagaoka decoder \cite{Hayashi2003Jun}:

\begin{itemize}
    \item[\textbf{Shared entanglement.}] Let $\ket{\phi}_{RR'} = \rho_R^{1/2}\ket{w}_{RR'}$ be a purification of $\rho_{R}$ where $R'\simeq R$. The shared entanglement state is 
    \begin{equation}
        \bigotimes_{m=1}^{M'} \phi_{R_mR_m'},
    \end{equation}
    where $R'_1\cdots R'_{M'}$ systems are held by Alice while $R_1\cdots R_{M'}$ systems are held by Bob.
    \item[\textbf{Encoding.}] To send the message $m\in [M']$, Alice places the system $R'_m$ in $A'$ and traces out the rest of her part of the shared entanglement.  In other words, she applies the map $\id_{R'_m\rightarrow A'}\otimes \bigotimes_{l\neq m}\ptr{R'_{l}}{\cdot}$. The resulting state is 
    \begin{align}
        \sigma_{R_1R_2\cdots R_{M'} A'}(m) &= \id_{R'_m\rightarrow A'}\otimes \bigotimes_{l\neq m}\operatorname{Tr}_{R'_{l}}\bigg[\bigotimes_{l=1}^{M'} \phi_{R_{l}R_{l}'} \bigg] 
         =\phi_{R_m A'}\otimes \bigotimes_{l\neq m} \rho_{R_l}.   
    \end{align}
    \item[\textbf{Transmission.}] Alice transmits $A'$ over the channel $\cN_{A'\rightarrow B}$. The global state is then 
    \begin{align}
        \zeta_{R_1R_2\cdots R_{M'} B}(m) &= \cN_{A'\rightarrow B}(\sigma_{R_1R_2\cdots R_{M'} A'}(m))
        =\cN_{A'\rightarrow B}\left(\phi_{R_m A'}\right)\otimes \bigotimes_{l\neq m} \rho_{R_l}.
    \end{align}
    \item[\textbf{Decoding.}] Since $(\Lambda_{RB}, \rho_{R})$ is a solution of the program \eqref{NS-program}, they satisfy $0\mle \Lambda_{RB}\mle \rho_R\otimes \dI_B $ thus we can define the observable
    \begin{align}
        0\mle O_{RB} = \rho_R^{-1/2} \Lambda_{RB}\rho_R^{-1/2} \mle \dI_{RB}.
    \end{align}
    Bob performs the measurement \cite{Hayashi2003Jun}
    \begin{align}
        \left\{ \Xi_{R_1R_2\cdots R_{M'} B}(m) = \left(\textstyle\sum_{l=1}^{M'} O_{R_lB}\right)^{-1/2}O_{R_mB} \left(\textstyle\sum_{l=1}^{M'} O_{R_lB}\right)^{-1/2}\right\}_{m=1}^{M'}.
    \end{align}
\end{itemize}

\textbf{Analysis of the scheme.} We rely on the following inequality \cite{Hayashi2003Jun} valid for all $0\mle A \mle \dI$, $B\mge 0$ and $c>0$:
\begin{align}
    &\dI - (A+B)^{-1/2}A(A+B)^{-1/2}
    \mle (1+c)(\dI-A) + (2+c+c^{-1})B.
\end{align}
Using this inequality, the error probability of decoding the $m^{\text{th}}$ message can be controlled  as follows:   
    \begin{align}
        \eps(\cN, M', m) 
        &= \tr{\zeta_{R_1\cdots R_{M'} B}(m) \cdot (\dI -\Xi (m))_{R_1R_2\cdots R_{M'} B} }
        \\&\le (1+c)\tr{\zeta_{R_1\cdots R_{M'} B}(m) \cdot (\dI-O_{R_mB})  } \label{eq:error-HN1}
        \\&\;\; +\!(2+c+c^{-1}) \tr{\zeta_{R_1\cdots R_{M'} B}(m)  ( \textstyle\sum_{l\neq m} O_{R_lB} )  }
        \label{eq:error-HN2}
    \end{align}
Recall that $\ket{\phi}_{R A'}=\rho_{R}^{1/2}\ket{w}_{RA'} $.
The first term  \eqref{eq:error-HN1} can be simplified as follows: 
\begin{align}
    \tr{\zeta_{R_1R_2\cdots R_{M'} B} (m)\cdot (\dI-O_{R_mB}) } 
    &= 1-\tr{\cN_{A'\rightarrow B}\left(\phi_{R_m A'}\right)\otimes \bigotimes_{l\neq m} \rho_{R_l}\cdot  \frac{ \Lambda_{R_mB}}{\rho_{R_m}}}
    \\&= 1-\tr{ \rho_{R_m}^{1/2}\cN_{A'\rightarrow B}\left(w_{R_m A'}\right) \rho_{R_m}^{1/2}\cdot  \frac{ \Lambda_{R_mB}}{\rho_{R_m}}}
    \\&=1- \tr{(J_{\cN})_{R_mB} \cdot \Lambda_{R_mB}}
    \\&= \eps_{}^{\rm{NS}}(\cN,M). \label{eq:HN-first-term}
\end{align}
The second term  \eqref{eq:error-HN2} can be calculated as follows: 
\begin{align}
    \tr{\zeta_{R_1R_2\cdots R_{M'} B}(m) \cdot (\textstyle\sum_{l\neq m} O_{R_lB} )}
   &= \tr{\cN_{A'\rightarrow B}\left(\phi_{R_m A'}\right)\otimes \bigotimes_{l\neq m} \rho_{R_l} \cdot (\textstyle\sum_{l\neq m} \frac{ \Lambda_{R_lB}}{\rho_{R_l}} ) }
    \\&= \textstyle\sum_{l\neq m} \tr{ \cN_{A'\rightarrow B}\left(\rho_{ A'}\right) \cdot  \Lambda_{B}} 
    \\&=  \textstyle\sum_{l\neq m} \tr{ \cN_{A'\rightarrow B}\left(\rho_{ A'}\right) \cdot  \frac{1}{M}\dI_B}
    \\&= \frac{M'-1}{M}. \label{eq:HN-second-term}
\end{align}
Hence from \eqref{eq:error-HN1}, \eqref{eq:error-HN2}, \eqref{eq:HN-first-term} and \eqref{eq:HN-second-term} we deduce that the average error probability is bounded as follows:
\begin{align}\label{eq:err-c}
   & \eps_{}^{\rm{EA}}(\cN, M')\le \frac{1}{M'}\sum_{m=1}^{M'}\eps(\cN, M', m)
    \le (1+c)\eps_{}^{\rm{NS}}(\cN,M) +(2+c+c^{-1}) \frac{M'-1}{M}.
\end{align}
In particular for $c=1$ we deduce 
\begin{align}
    \eps_{}^{\rm{EA}}(\cN, M')\le 2\eps_{}^{\rm{NS}}(\cN,M) +4\frac{M'-1}{M}.
\end{align}
In terms of success probability we have that from \eqref{eq:err-c}:
\begin{align}
   1- \eps_{}^{\rm{EA}}(\cN, M')&\ge  1- \eps_{}^{\rm{NS}}(\cN,M) -c   \eps_{}^{\rm{NS}}(\cN,M)   
   -(2+c+c^{-1}) \frac{M'-1}{M}.
\end{align}
Then, if $M'\le M$, we choose $c= \sqrt{\frac{M'}{M}}\le 1$ and obtain 
\begin{align}
   \suc^{\rm{EA}}(\cN, M')&\ge \suc^{\rm{NS}}(\cN,M) -2\sqrt{\frac{M'}{M}} -3 \frac{M'}{M}
   \\&\ge  \suc^{\rm{NS}}(\cN,M)  -5 \sqrt{\frac{M'}{M}}. 
\end{align}
If $M'>M$, this inequality holds trivially.
\end{proof}

%%%%%%%%%%%%%%%%%%%%%%%%%%%%%%%%%%%%%%%%%%%%%%%%

\subsection{Approximation  with a multiplicative error}\label{sec:qq-multiplicative}

In this section, we focus  on obtaining an approximation of the entanglement-assisted success probability with a small multiplicative error in the quantum setting, as similar results in  the classical, classical-quantum,  and quantum-classical settings have already been established in \cite{barman2017algorithmic,Fawzi19}, and in Section \ref{sec:qc},  respectively. To this end, we round a non-signaling strategy to an entanglement-assisted one. The program of the non-signaling success probability \eqref{NS-program} can be written as 
\begin{align}
           \suc^{\rm{NS}}(\cN,M)
           &=\sup_{\rho_{R} \in \cS(R)} \sup_{\Lambda_{RB}} \left\{\tr{\Lambda_{RB} \cdot (J_{\cN})_{RB}}\,\middle|\,\Lambda_B = \tfrac{1}{M}\mathbb{I}_B,\, 0\mle  \Lambda_{RB} \mle \rho_R \otimes  \mathbb{I}_B\right\} \\
           &= \sup_{\rho_{R} \in \cS(R)} \suc^{\rm{NS}}(\cN,M, \rho),
\end{align}
where we introduce $\suc^{\rm{NS}}(\cN,M, \rho) $ for $\rho_{R} \in \cS(R)$ defined as:
\begin{align}\label{eq:ns-program-rho}
     \suc^{\rm{NS}}(\cN,M,  \rho) = \sup_{ \Lambda_{RB}} \quad & \tr{\Lambda_{RB} (J_{\cN})_{RB}} \notag \\
\text{subject to} \quad
& \Lambda_B = \frac{1}{M} \mathbb{I}_B, \notag \\
& 0 \mle \Lambda_{RB} \mle \rho_R \otimes \mathbb{I}_B. 
\end{align}
We have the following rounding result of the non-signaling success probability in the quantum setting.

\begin{prop}\label{prop:qq-rounding}
Let $\cN_{A\rightarrow B}$ be a quantum channel,   $\rho_R \in \cS(R)$ be a quantum state and $M\ge 1$. 
Let $v$ be the number of distinct eigenvalues of $\rho_R$. Assume that $\log|A|\ge e^2$. 
    We have that
    \begin{align}\label{eq:prefactor}
         \suc^{\rm{EA}}(\cN,M) &\ge \frac{1}{2v \log(2v M e^{v}|A|^{v}|B|)}\cdot \suc^{\rm{NS}}(\cN, M,  \rho) .
    \end{align}
\end{prop}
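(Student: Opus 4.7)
The plan is to round a non-signaling optimizer $(\rho_R,\Lambda_{RB})$ of $\suc^{\rm{NS}}(\cN,M,\rho)$ into an entanglement-assisted code using three ingredients: the input-flattening technique of \cite{Li2024Jul}, random encoding by a Heisenberg--Weyl unitary $1$-design acting on Alice's half of a single shared maximally entangled state, and the matrix Chernoff inequality of \cite{Tropp2015May} to control the operator norm of the resulting random sum. The $1$-design averaging $\exs{U}{U_R O_{RB} U_R^\dagger}=\frac{1}{|R|}\dI_R\otimes\ptr{R}{O_{RB}}$ only forces $\exs{U}{U_R O_{RB} U_R^\dagger}\mle\frac{1}{M}\dI$ when $\rho_R$ is maximally mixed, so handling arbitrary $\rho_R$ requires first reducing to a shared-randomness mixture of flat sub-problems.

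Concretely, decompose $\rho_R=\bigoplus_{j=1}^{v}p_j\,|\cH_j|^{-1}\dI_{\cH_j}$ along its spectral subspaces and apply the pinching inequality \eqref{eq:pinching-ineq} with the projections $\{\Pi_j\}$, losing a factor $v$ but block-diagonalising $\Lambda_{RB}$ so that the NS objective splits across blocks. For each block $j$, set $O^{(j)}_{RB}=\rho_R^{-1/2}\Lambda^{(j)}_{RB}\rho_R^{-1/2}\mle\dI$; the flatness of $\rho_R$ on $\cH_j$ together with the marginal $\Lambda_B=\frac{1}{M}\dI_B$ then yields $\exs{U\sim\cL(\cH_j)}{U_R O^{(j)} U_R^\dagger}\mle\frac{1}{M}\dI$ for any unitary $1$-design $\cL(\cH_j)$. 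The EA code uses shared randomness to sample $j\sim\{p_j\}$ and iid unitaries $U_1,\dots,U_M$ from $\cL(\cH_j)$, shares a single maximally entangled state on $\cH_j\otimes\cH_j'$, and encodes message $l$ by Alice applying $U_l^\top$ on her half before transmission through $\cN$; the maximally-entangled-state identity converts this into $U_l$ on Bob's half. Bob then measures with the random POVM $\{\Xi_l=Z_j^{-1}U_l O^{(j)} U_l^\dagger\}_{l=1}^M\cup\{\dI-\sum_l\Xi_l\}$ where $Z_j=\|\sum_l U_l O^{(j)} U_l^\dagger\|_\infty$, so the conditional success probability within block $j$ equals $Z_j^{-1}$ times the block's contribution to $\suc^{\rm{NS}}(\cN,M,\rho)$.

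The matrix Chernoff bound applied to $\{U_l O^{(j)} U_l^\dagger\}_{l=1}^M$ (each $\mle \dI$ with sum of means $\mle \dI$) gives $\pr{Z_j\ge t}\le|\cH_j||B|(e/t)^t$. Choosing $t=\Theta(\log(vM|A|^v|B|))$ and union-bounding across $j$ drives the failure event below a small constant; the factor $|A|^v$ inside the logarithm reflects that the joint Chernoff failure event lives on an effective dimension $\prod_j|\cH_j|\le|A|^v$ after tensoring with $B$. Splitting $\mathbb{E}[Z_j^{-1}]$ into the high-probability regime $\{Z_j\le t\}$ (contributing $\sim 1/t$) and a trivial tail then gives the claimed $\frac{1}{2v\log(2vMe^v|A|^v|B|)}$ prefactor, with $v$ absorbing the pinching loss and the assumption $\log|A|\ge e^2$ used to bury subleading terms. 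The main obstacle is precisely this dimensional bookkeeping: the Chernoff threshold $t$, the $v$-fold pinching factor, and the block-level constraints descended from $\Lambda_B=\frac{1}{M}\dI_B$ must compose cleanly so that the approximation ratio stays polylogarithmic in $|A||B|$ without hidden factors of $M$ or $v$ leaking into the prefactor.
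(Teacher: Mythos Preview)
Your proposal has the right architecture (flattening, unitary $1$-design encoding, matrix Chernoff), but the claim that ``the marginal $\Lambda_B=\frac{1}{M}\dI_B$ then yields $\exs{U\sim\cL(\cH_j)}{U_R O^{(j)} U_R^\dagger}\mle\frac{1}{M}\dI$'' is false, and this breaks the argument. With $O^{(j)}=\rho_R^{-1/2}\Pi_j\Lambda_{RB}\Pi_j\rho_R^{-1/2}=\frac{|\cH_j|}{p_j}\Pi_j\Lambda_{RB}\Pi_j$, the $1$-design average is $\frac{1}{p_j}\dI_{\cH_j}\otimes(\Lambda^{(j)})_B$ where $(\Lambda^{(j)})_B=\ptr{R}{\Pi_j\Lambda_{RB}\Pi_j}$. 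The global constraint $\Lambda_B=\frac{1}{M}\dI_B$ only forces $\sum_j(\Lambda^{(j)})_B=\frac{1}{M}\dI_B$, hence $(\Lambda^{(j)})_B\mle\frac{1}{M}\dI_B$ but with no compensating factor of $p_j$. So the sum of means over the $M$ messages is $\mle\frac{1}{p_j}\dI$, not $\mle\dI$. Nothing prevents the NS optimizer from placing essentially all of $\Lambda$ on a block with $p_{j}$ as small as $1/M$ (this is consistent with $\Lambda\mle\rho_R\otimes\dI_B$ and $\Lambda_B=\frac{1}{M}\dI_B$), in which case $\mu_{\max}\sim M$, your Chernoff threshold must scale like $M$ rather than $\log M$, and the approximation ratio collapses to $O(1/M)$.

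The paper's proof differs precisely here: it samples $j_m\sim\{p_j\}$ \emph{independently for each message} $m$, so that the sampling weight $p_{j_m}$ cancels the $1/p_{j_m}$ inside $O^{(j_m)}$ in expectation, giving $\mu_{\max}\le v$ uniformly. The cost is that different messages may land in different eigenspaces, so their measurement operators must live on a common Hilbert space; the paper achieves this by sharing $v$ maximally entangled states $\bigotimes_{j=1}^v\Psi_{B^j_j A^j_j}$ and viewing every $O^{\bm{j},\bm{U}}(m)$ as an operator on $B^1\cdots B^v B$ of dimension $|R|^v|B|=|A|^v|B|$. That ambient dimension\,---\,not a ``joint failure event over blocks''\,---\,is the actual source of the $|A|^v$ inside the logarithm.
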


This approximation of the success probability is not as clean as the one established for quantum-classical channels in Proposition~\ref{prop:rounding-qc}. However, since the multiplicative error in Eq.~\eqref{eq:prefactor} depends logarithmically on the size of the channel $\cN$, the communication size, and the number of distinct eigenvalues of the state $\rho_R$, we will see in Section \ref{sec:qq-application-sce} that in terms of the strong converse exponent, this approximation is largely sufficient.\\

To prove Proposition \ref{prop:qq-rounding}, we combine ideas from the classical-quantum rounding result of \cite{Fawzi19}, coding using the decoupling technique \cite{Hsieh2008Jun,Li2023Nov}, position-based coding \cite{Anshu2018Jun}, and the flattening input state technique \cite{Li2023Nov}. 
\begin{proof}[Proof of Proposition \ref{prop:qq-rounding}]Let $A' \simeq R\simeq A$. Write $R= \bigoplus_{j=1}^{v } R_j$ and
    \begin{align}
        \rho_R = \bigoplus_{j=1}^{v} \lambda_j \Pi^j_R = \bigoplus_{j=1}^{v} p_j \mathbf{1}_{R_j},
    \end{align}
where $\{\lambda_j\}_j$ are the distinct eigenvalues of $\rho_R$, $\{\Pi^j_R\}_j$ are orthogonal projectors onto the Hilbert spaces $\{R_j\}_j$, $\mathbf{1}_{R_j}$ is the maximally mixed state on $R_j$, i.e., $\mathbf{1}_{R_j}=\frac{1}{|R_j|}\Pi^j_R$ and $p_j = |R_j|\cdot \lambda_j$.

Let $\cL(\cH)$ be any unitary $1$-design probability measure on $\mathbb{U}(\cH)$, i.e, for all $X_{\cH \cK}$
\begin{align}
    \exs{U \sim \cL(\cH)}{U_{\cH}X_{\cH \cK}U_{\cH}^\dagger}=\mathbf{1}_{\cH} \otimes X_\cK. 
\end{align}
Sampling uniformly at random from the  set of Pauli operators is an example of a unitary $1$-design. 

Denote by $\Psi_{A A'} = \proj{\Psi}_{A A'}$ the maximally entangled state on $AA'$. Concretely, if $\{\ket{x_i}\}_{i=1}^{|A|}$ is an orthonormal basis of $A$ consisting of the eigenvectors of $\rho$, $\ket{\Psi}_{A A'}$ is defined as:
\begin{align}
    \ket{\Psi}_{A A'}= \frac{1}{\sqrt{|A|}}\sum_{i=1}^{|A|} \ket{x_i}_{A}\otimes \ket{x_i}_{A'}.
\end{align}
Consider the following rounding scheme illustrated in Figure \ref{fig:qq}.

\begin{figure}[t]
    \centering
    \includegraphics[width=0.7\linewidth]{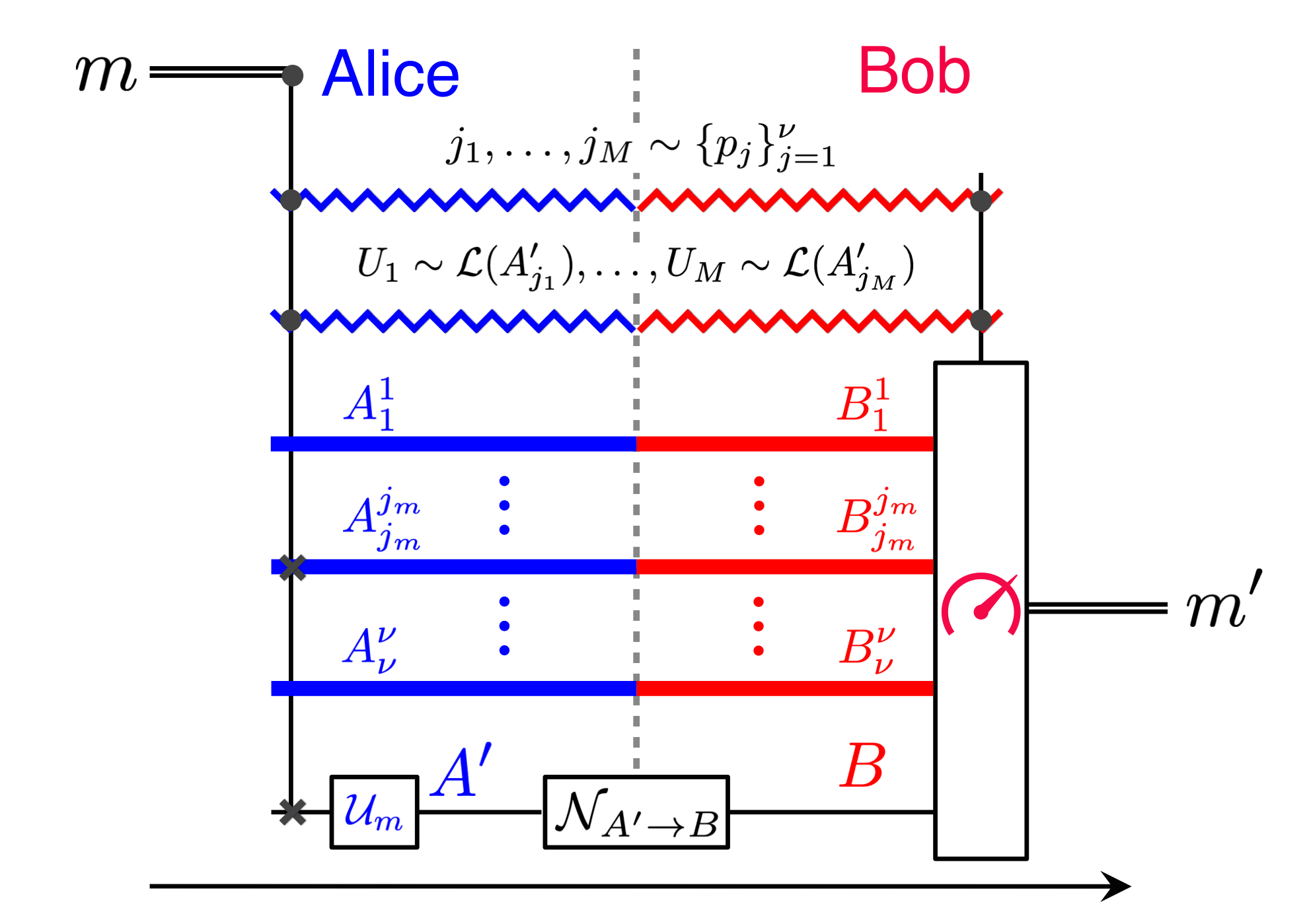}
    \caption{Illustration of an entanglement-assisted protocol achieving the performance of Proposition \ref{prop:qq-rounding}. \textit{Wavy lines:} Represent the shared randomness. \textit{Thick lines:} Represent the shared entanglement.  \textit{Arrow:} Indicates the time progression of the protocol. Depending on the message $m$ and the shared randomness, the system $A_{j_m}^{j_m}$ is swapped with $A'$. Then, the unitary channel $\cU_m (\cdot)= U_m^\top(\cdot)\overline{U}_m$ (which depends on $m$ and the shared randomness) is applied before the transmission of the system $A'$ through the channel $\cN_{A'\rightarrow B}$. In order to decode the message, a measurement (which depends on the shared randomness) is performed on the systems $B^1_1\cdots B_{v}^v B$ yielding the outcome $m'$.}
    \label{fig:qq}
\end{figure}
\begin{itemize}
    \item[\textbf{Shared entanglement and classical randomness.}] Let  $j_1, \dots, j_M \overset{\rm{iid}}{\sim} \{p_j\}_{j}$, $U_1 \sim \cL(R_{j_1}), \dots, U_M \sim \cL(R_{j_M})$ and we write $\bm{j}\sim p, \bm{U}\sim \cL$. This forms the shared-randomness, and  the shared entanglement is given by
    \begin{align}
        \bigotimes_{j=1}^{v} \Psi_{B^j_{j}A^j_{j}},
    \end{align}
    where $A^1\simeq \cdots \simeq A^{v} \simeq R$ systems are held by Alice while $B^1\simeq \cdots \simeq B^{v} \simeq R$ systems are held by Bob. For $j\in [v]$, $A^j_j$ is the $j^{\text{th}}$ subspace in the decomposition of $A^j = \bigoplus_{l=1}^{v} A_l^j$, and similarly, $B^j_j$ is the $j^{\text{th}}$ subspace in the decomposition of $B^j = \bigoplus_{l=1}^{v} B_l^j$.
    \item[\textbf{Encoding.}] 
    To send the message $m\in [M]$, Alice first applies  the unitary $\cU_m(\cdot) = U_m^{\top} (\cdot) \overline{U}_m$ on $A_{j_m}^{j_m}$,  then she places $A_{j_m}^{j_m}$ in $A'$.  The resulting state in the $j_m^{\text{th}}$ tensor factor of the shared entanglement tensor product is 
    \begin{align}
        \Psi_{B_{j_m}^{j_m}A_{j_m}^{j_m}} &\;\;\,\overset{\cU_m}{\longrightarrow}\,\;\; (U_m^\top)_{A_{j_m}^{j_m}}  \Psi_{B_{j_m}^{j_m}A_{j_m}^{j_m}}(\overline{U}_t)_{A_{j_m}^{j_m}} 
        \\& \;\;\; \; = \; \;\;\;(U_m)_{B_{j_m}^{j_m}}  \Psi_{B_{j_m}^{j_m}A_{j_m}^{j_m}}(U_m^\dagger)_{B_{j_m}^{j_m}}
        \\&\overset{A_{j_m}^{j_m}\rightarrow A'}{\longrightarrow} \iota_{A^{j_m}\rightarrow  A'}\left((U_m)_{B_{j_m}^{j_m}}  \Psi_{B_{j_m}^{j_m}A_{j_m}^{j_m}}(U_m^\dagger)_{B_{j_m}^{j_m}}\right),
    \end{align}
    where we used the property of the maximally entangled state $U_{B^j}\otimes \overline{U}_{A^j}\ket{\Psi}_{B^jA^j} = \ket{\Psi}_{B^jA^j}$, and  $\iota_{A^{j_m} \rightarrow A'}$ is a unitary channel since $A^{j_m}\simeq R\simeq A'$. Note that the state $\Psi_{B^{j_m}_{j_m}A^{j_m}_{j_m}}$ on $B^{j_m}_{j_m}A^{j_m}_{j_m}$ can be considered as a matrix on $B^{j_m}_{j_m}A^{j_m}$ by inclusion (i.e. by padding zero block matrices).
    Next, Alice traces out her remaining part of the shared entanglement and the global state is then 
    \begin{align}
        \sigma_{B^1 \cdots B^{v} A'}(m) &=  (U_m)_{B_{j_m}^{j_m}} \cdot\iota_{A^{j_m}\rightarrow  A'}\left(\Psi_{B_{j_m}^{j_m}A_{j_m}^{j_m}}\right)\cdot(U_m^\dagger)_{B_{j_m}^{j_m}} \otimes \bigotimes_{s\in [v]\setminus j_m} \Psi_{B^s_{s}} 
        \\&=(U_m)_{B_{j_m}^{j_m}} \cdot\iota_{A^{j_m}\rightarrow  A'}\left(\Psi_{B_{j_m}^{j_m}A_{j_m}^{j_m}}\right)\cdot(U_m^\dagger)_{B_{j_m}^{j_m}} \otimes \bigotimes_{s\in [v]\setminus j_m} \bid_{B^s_{s}}.
    \end{align}
    More precisely, by denoting $\cV_m(\cdot)= U_m(\cdot)U_m^\dagger$, the encoding channel is 
    \begin{align}    \cE^m_{A_{1}^{1}\cdots A_{v}^{v}\rightarrow A'}\left(\cdot \right) &= \iota_{A^{j_m}\rightarrow  A'}\circ (\cU_m)_{A_{j_m}^{j_m}\rightarrow A_{j_m}^{j_m}}  (\cdot) \otimes \bigotimes_{s\in [v]\setminus j_m} \ptr{(B^s_{s})'}{\cdot}
    \\&= (\cV_m)_{B_{j_m}^{j_m}\rightarrow B_{j_m}^{j_m}}\circ \iota_{A^{j_m}\rightarrow  A'}   (\cdot) \otimes \bigotimes_{s\in [v]\setminus j_m} \ptr{(B^s_{s})'}{\cdot},
    \end{align}
   and the global state can be written as 
   \begin{align}
       &\cE^m_{A_{1}^{1}\cdots A_{v}^{v}\rightarrow A'}\left(\bigotimes_{j=1}^{v} \Psi_{B^j_{j} A^j_{j}}\right) 
       \\&= \iota_{A^{j_m}\rightarrow  A'}\circ (\cV_m)_{B_{j_m}^{j_m}\rightarrow B_{j_m}^{j_m}}  (\Psi_{B^{j_m}_{j_m}A^{j_m}_{j_m}}) \otimes \bigotimes_{s\in [v]\setminus j_m} \ptr{(B^s_{s})'}{\Psi_{B^s_{s}(B^s_{s})'} }.
   \end{align}
    \item[\textbf{Transmission.}] Alice transmits the system $A'$ over   the channel $\cN_{A'\rightarrow B}$. The global state becomes 
    \begin{align}
         \zeta_{B^1 \cdots B^{v} B}(m)
        &=(\cV_m)_{B_{j_m}^{j_m}\rightarrow B_{j_m}^{j_m}} \circ \cN_{A'\rightarrow B}\circ \iota_{A^{j_m}\rightarrow  A'}\left(\Psi_{B_{j_m}^{j_m}A_{j_m}^{j_m}}\right) \otimes \bigotimes_{s\in [v]\setminus j_m} \bid_{B^s_{s}}. 
    \end{align}
    \item[\textbf{Decoding.}] Let $\Lambda_{RB}$ be an optimal solution to 
    \begin{align}\label{eq:ns-prog-q-proof}
     \suc^{\rm{NS}}(\cN, M,  \rho_R) &=\sup_{\Lambda_{RB}} \left\{\tr{\Lambda_{RB} \cdot (J_{\cN})_{RB}} | \Lambda_B = \tfrac{1}{M}\mathbb{I}_B, \,0\mle  \Lambda_{RB} \mle \rho_R \otimes  \mathbb{I}_B   \right\}.
\end{align}
    Bob performs the measurement 
    \begin{align}
        &\bigg\{\left\|\textstyle\sum_{l=1}^M O^{\bm{j}, \, \bm{U}}(l)\right\|_{\infty}^{-1}\cdot \,  O^{\bm{j}, \, \bm{U}}(m)\bigg\}_{m=1}^M \;\bigcup\; \bigg\{\dI - \textstyle\sum_{m=1}^{M}\left\|\textstyle\sum_{l=1}^M O^{\bm{j}, \, \bm{U}}(l)\right\|_{\infty}^{-1}\cdot \,  O^{\bm{j}, \, \bm{U}}(m)\bigg\}\,, \label{eq:povm-q}
        \\ & \qquad \text{with }\,   O^{\bm{j}, \, \bm{U}}_{B^1\cdots B^{v}B}(m) = \frac{|B_{j_m}^{j_m}|}{p_{j_m}}\cdot (U_m)_{B_{j_m}^{j_m}}\Pi_{B^{j_m}}^{j_m} \Lambda_{B^{j_m} B} \Pi_{B^{j_m}}^{j_m}(U_m^\dagger)_{B_{j_m}^{j_m}} \otimes \bigotimes_{s\in [v]\setminus j_m} \dI_{B^s}.
    \end{align}
\end{itemize}

Recall that, for $s\in [v]$,  $B^s = \bigoplus_{j=1}^v B^s_j$,   and  $\Pi_{B^{j_m}}^{j_m}$ is the orthogonal projector onto the Hilbert subspace $B_{j_m}^{j_m}$ of $B^{j_m}$.  $O^{\bm{j}, \, \bm{U}}(m)$ should be viewed (by padding zero blocks when necessary) as an operator on the space $B^1B^2\cdots B^{v}B$ so  the dimension of $O^{\bm{j}, \, \bm{U}}(m)$  is  $|R|^{v} \cdot |B|$.  Note that $\textstyle\sum_{m=1}^{M}\left\|\textstyle\sum_{l=1}^M O^{\bm{j}, \, \bm{U}}(l)\right\|_{\infty}^{-1}\cdot \,  O^{\bm{j}, \, \bm{U}}(m)\mle \dI$ which confirms that \eqref{eq:povm-q} is a valid POVM. If Bob observes an outcome  `$m'$', corresponding to the measurement operator $\left\|\textstyle\sum_{l=1}^M O^{\bm{j}, \, \bm{U}}(l)\right\|_{\infty}^{-1}\cdot \,  O^{\bm{j}, \, \bm{U}}(m')$, he decodes the message and returns $m'$. Otherwise, if the outcome corresponds to  $\dI - \textstyle\sum_{m=1}^{M}\left\|\textstyle\sum_{l=1}^M O^{\bm{j}, \, \bm{U}}(l)\right\|_{\infty}^{-1}\cdot \,  O^{\bm{j}, \, \bm{U}}(m)$, Bob returns $0$. 

Let us analyze this scheme. Denote by $Z$ the random variable $Z= {\left\|\sum_{m=1}^M O^{\bm{j}, \, \bm{U}}(m)\right\|_{\infty} }$. The success probability of sending the message $m\in [M]$ conditioned on $(j_1, \dots, j_M)$ and $(U_1, \dots, U_M)$ is:
\begin{align}
   & \suc(\cN, M, m)
   \\&= \tr{\tfrac{1}{Z}\cdot O^{\bm{j}, \, \bm{U}}_{B^1 \cdots B^{v}B}(m) \cdot   \zeta_{B^1 \cdots B^{v} B}(m)}
    \\&= \tfrac{|B_{j_m}^{j_m}|}{Zp_{j_m}}\tr{ (U_m)_{B_{j_m}^{j_m}}\Pi_{B^{j_m}}^{j_m} \Lambda_{B^{j_m} B} \Pi_{B^{j_m}}^{j_m}(U_m^\dagger)_{B_{j_m}^{j_m}}\!\cdot\!   (U_m)_{B_{j_m}^{j_m}} \cN_{A'\rightarrow B}\!\circ\! \iota_{A^{j_m}\rightarrow  A'}\!\left(\!\Psi_{B_{j_m}^{j_m}A_{j_m}^{j_m}}\!\right)\!(U_m^\dagger)_{B_{j_m}^{j_m}} }
    \\&= \tfrac{1}{Zp_{j_m}}\tr{ \Pi_{B^{j_m}}^{j_m} \Lambda_{B^{j_m} B} \Pi_{B^{j_m}}^{j_m} \cdot\cN_{A'\rightarrow B}\circ \iota_{A^{j_m}\rightarrow  A'}\left(|B_{j_m}^{j_m}|\cdot \Psi_{B_{j_m}^{j_m}A_{j_m}^{j_m}}\right) }
       \\&\overset{(a)}{=} \tfrac{1}{Zp_{j_m}}\tr{ \Pi_{B^{j_m}}^{j_m} \Lambda_{B^{j_m} B} \Pi_{B^{j_m}}^{j_m} \cdot\cN_{A'\rightarrow B}\circ \iota_{A^{j_m}\rightarrow  A'}\left(|B^{j_m}|\cdot \Pi_{B^{j_m}}^{j_m}  \Psi_{B^{j_m}A^{j_m}}\Pi_{B^{j_m}}^{j_m} \right) }
      \\&\overset{(b)}{=} \tfrac{1}{Zp_{j_m}}\tr{ \Pi_{B^{j_m}}^{j_m} \Lambda_{B^{j_m} B} \Pi_{B^{j_m}}^{j_m} \cdot\cN_{A'\rightarrow B}\circ \iota_{A^{j_m}\rightarrow  A'}\left(|B^{j_m}| \cdot \Psi_{B^{j_m}A^{j_m}}\right) }
      \\& =\tfrac{1}{Zp_{j_m}}\tr{ \Pi_{B^{j_m}}^{j_m} \Lambda_{B^{j_m} B} \Pi_{B^{j_m}}^{j_m} \cdot(J_{\cN})_{B^{j_m} B} }, \label{eq:suc1}
\end{align}
\sloppy where in $(a)$ we used   $ |B_{j_m}^{j_m}|\cdot \Psi_{B_{j_m}^{j_m}A_{j_m}^{j_m}} = |B^{j_m}|\cdot \Pi_{B^{j_m}}^{j_m} \Psi_{B^{j_m}A^{j_m}}\Pi_{B^{j_m}}^{j_m}$  proved below;  in $(b)$ we used  that $\Pi_{B^{j_m}}^{j_m}$ is a projector onto  $B_{j_m}^{j_m}$ that commutes with $\cN_{A'\rightarrow B}\circ \iota_{A^{j_m}\rightarrow  A'}$. 
\\To prove $ |B_{j_m}^{j_m}|\Psi_{B_{j_m}^{j_m}A_{j_m}^{j_m}} = \Pi_{B^{j_m}}^{j_m}|B^{j_m}| \Psi_{B^{j_m}A^{j_m}}\Pi_{B^{j_m}}^{j_m}$, we note that for a basis element $\ket{x}$ of $B^{j_m}$ (an eigenvector of $\rho$), we have either $\Pi_{B^{j_m}}^{j_m} \ket{x} = 0$ or $\Pi_{B^{j_m}}^{j_m} \ket{x} = \ket{x}$ in which case $\ket{x}\in B_{j_m}^{j_m}$ so 
\begin{align}
   |B^{j_m}|\cdot  \Pi_{B^{j_m}}^{j_m}  \Psi_{B^{j_m}A^{j_m}}\Pi_{B^{j_m}}^{j_m}  &= \Pi_{B^{j_m}}^{j_m}\sum_{i,j=1}^{|R|} \ket{x_i}\bra{x_j}_{B^{j_m}} \otimes \ket{x_i}\bra{x_j}_{A^{j_m}}\Pi_{B^{j_m}}^{j_m}
    \\&= \sum_{\substack{i,j \in [|R|]: \\ x_i,x_j \in B_{j_m}^{j_m}}} \ket{x_i}\bra{x_j}_{B_{j_m}^{j_m}} \otimes \ket{x_i}\bra{x_j}_{A_{j_m}^{j_m}}
    \\&= |B_{j_m}^{j_m}|\cdot \Psi_{B_{j_m}^{j_m}A_{j_m}^{j_m}}.
\end{align}

Let $\cG=\{Z\le \log(2v M e^{v}|R|^{v}|B|)\}$ denotes the ‘good' event that $Z$ is bounded. We have that from \eqref{eq:suc1}:
\begin{align}
    &\exs{\bm{j}\sim p, \bm{U}\sim \cL}{\suc(\cN, M, m)}
    \\&\ge  \exs{\bm{j}\sim p, \bm{U}\sim \cL}{\suc(\cN, M, m)\cdot \bid(\cG)}
    \\&= \exs{\bm{j}\sim p, \bm{U}\sim \cL}{\tfrac{1}{Zp_{j_m}}\tr{ \Pi_{B^{j_m}}^{j_m} \Lambda_{B^{j_m} B} \Pi_{B^{j_m}}^{j_m} \cdot (J_{\cN})_{B^{j_m} B} }\cdot \bid(\cG)}
    \\&\ge  \exs{\bm{j}\sim p, \bm{U}\sim \cL}{\frac{1}{p_{j_m}}\cdot\frac{1}{\log(2v M e^{v}|R|^{v}|B|)}\cdot \tr{ \Pi_{B^{j_m}}^{j_m} \Lambda_{B^{j_m} B} \Pi_{B^{j_m}}^{j_m} \cdot(J_{\cN})_{B^{j_m} B} }\cdot \bid(\cG)}
     \\&=  \exs{\bm{j}\sim p, \bm{U}\sim \cL}{\frac{1}{p_{j_m}}\cdot\frac{1}{\log(2v M e^{v}|R|^{v}|B|)}\cdot \tr{ \Pi_{B^{j_m}}^{j_m} \Lambda_{B^{j_m} B} \Pi_{B^{j_m}}^{j_m} \cdot(J_{\cN})_{B^{j_m} B} }}\label{eq:rdm1}
      \\&\quad -  \exs{\bm{j}\sim p, \bm{U}\sim \cL}{\frac{1}{p_{j_m}}\cdot\frac{1}{\log(2v M e^{v}|R|^{v}|B|)}\cdot \tr{ \Pi_{B^{j_m}}^{j_m} \Lambda_{B^{j_m} B} \Pi_{B^{j_m}}^{j_m} \cdot(J_{\cN})_{B^{j_m} B} }\cdot \bid(\cG^c)}.\label{eq:rdm2}
\end{align}
The expectation \eqref{eq:rdm1} can be lower bounded as follows:
\begin{align}
    &\exs{\bm{j}\sim p, \bm{U}\sim \cL}{\frac{1}{p_{j_m}}\cdot\frac{1}{\log(2v M e^{v}|R|^{v}|B|)}\cdot \tr{ \Pi_{B^{j_m}}^{j_m} \Lambda_{B^{j_m} B} \Pi_{B^{j_m}}^{j_m} \cdot(J_{\cN})_{B^{j_m} B} }}
     \\& =\sum_{j=1}^{v} p_j \frac{1}{p_{j}}\cdot\frac{1}{\log(2v M e^{v}|R|^{v}|B|)} \cdot \tr{ \Pi^{j}_R \Lambda_{R B} \Pi^{j}_R \cdot(J_{\cN})_{RB} }
    \\&= \frac{1}{\log(2v M e^{v}|R|^{v}|B|)}\cdot \operatorname{Tr}\bigg[ \sum_{j=1}^{v} \Pi^{j}_R \Lambda_{R B} \Pi^{j}_R \cdot(J_{\cN})_{RB} \bigg]
    \\&\ge \frac{1}{\log(2v M e^{v}|R|^{v}|B|)}\cdot \frac{1}{v}\cdot \tr{ \Lambda_{R B}  \cdot (J_{\cN})_{RB} }
    \\&= \frac{1}{\log(2v M e^{v}|R|^{v}|B|)}\cdot\frac{1}{v}\cdot\suc_{}^{\rm{NS}}(\cN, M, \rho),\label{eq:event1}
\end{align}
where we used $B^{j_m} \simeq R$ and  in the  inequality we used the pinching inequality \eqref{eq:pinching-ineq}
\begin{equation}
  v \cdot \sum_{j=1}^{v} \Pi^{j}_R \Lambda_{R B} \Pi^{j}_R = v \cdot \cP_{\rho}(\Lambda_{R B}) \mge \Lambda_{R B}.  
\end{equation}
To bound the expectation  \eqref{eq:rdm2}, we use concentration of random matrices as  in \cite{Fawzi19}.  Since $\Lambda_{RB}$ is a  solution to the program 
    \eqref{eq:ns-prog-q-proof}, we have $ \Lambda_{RB} \mle \rho_R \otimes  \mathbb{I}_B$ so $ \Pi_{B^{j_m}}^{j_m} \Lambda_{B^{j_m} B} \Pi_{B^{j_m}}^{j_m}\mle p_{j_m} \mathbf{1}_{B^{j_m}_{j_m}}\otimes \dI_B $ thus
\begin{align}
    &\exs{\bm{j}\sim p, \bm{U}\sim \cL}{\frac{1}{p_{j_m}}\cdot\frac{1}{\log(2v M e^{v}|R|^{v}|B|)}\cdot \tr{ \Pi_{B^{j_m}}^{j_m} \Lambda_{B^{j_m} B} \Pi_{B^{j_m}}^{j_m} \cdot(J_{\cN})_{B^{j_m} B} }\cdot \bid(\cG^c)}
    \\&\le \frac{1}{\log(2v M e^{v}|R|^{v}|B|)}\cdot\exs{\bm{j}\sim p, \bm{U}\sim \cL}{\tr{ \mathbf{1}_{B^{j_m}_{j_m}}\otimes \dI_B \cdot(J_{\cN})_{B^{j_m} B} }\cdot \bid(\cG^c)}
     \\&= \frac{1}{\log(2v M e^{v}|R|^{v}|B|)}\cdot\exs{\bm{j}\sim p, \bm{U}\sim \cL}{\tr{ \mathbf{1}_{B^{j_m}_{j_m}} \cdot \dI_{B^{j_m} } }\cdot \bid(\cG^c)}
    \\&= \frac{1}{\log(2v M e^{v}|R|^{v}|B|)}\cdot
    \mathbb{P}_{\bm{j}\sim p, \bm{U}\sim \cL} \left[\left\|\textstyle\sum_{m=1}^M O^{\bm{j}, \, \bm{U}}(m)\right\|_{\infty} >\log(2v M e^{v}|R|^{v}|B|)\right],\label{eq:event2}
\end{align}
where we recall the definition of the random matrix $O^{\bm{j}, \, \bm{U}}_{B^1 \cdots B^{v}B}(m)$:
\begin{align}
   O^{\bm{j}, \, \bm{U}}_{B^1 \cdots B^{v}B}(m) = \frac{|B_{j_m}^{j_m}|}{p_{j_m}}\cdot (U_m)_{B_{j_m}}\Pi_{B^{j_m}}^{j_m} \Lambda_{B^{j_m} B} \Pi_{B^{j_m}}^{j_m}(U_m^\dagger)_{B_{j_m}} \otimes \bigotimes_{s\in [v]\setminus j_m} \dI_{B^s}.
\end{align}
Moreover, $ \Pi_{B^{j_m}}^{j_m} \Lambda_{B^{j_m} B} \Pi_{B^{j_m}}^{j_m}\mle p_{j_m} \mathbf{1}_{B_{j_m}^{j_m}}\otimes \dI_B $ implies
\begin{align}
    0\mle O^{\bm{j}, \, \bm{U}}_{B^1 \cdots B^{v}B}(m) &= \tfrac{|B_{j_m}^{j_m}|}{p_{j_m}}\cdot (U_m)_{B_{j_m}}\Pi_{B^{j_m}}^{j_m} \Lambda_{B^{j_m} B} \Pi_{B^{j_m}}^{j_m}(U_m^\dagger)_{B_{j_m}} \otimes \bigotimes_{s\in [v]\setminus j_m} \dI_{B^s}
    \\&\mle |B_{j_m}^{j_m}|\cdot (U_m)_{B^{j_m}_{j_m}}\mathbf{1}_{B^{j_m}_{j_m}}\otimes \dI_B (U_m^\dagger)_{B^{j_m}_{j_m}} \otimes \bigotimes_{s\in [v]\setminus j_m} \dI_{B^s}
    \\&= \dI_{B_{j_m}^{j_m}}\otimes \dI_B\otimes \bigotimes_{s\in [v]\setminus j_m} \dI_{B^s}
    \\&\mle \dI_{B^{j_m}}\otimes \bigotimes_{s\in [v]\setminus j_m} \dI_{B^s}\otimes \dI_B
    \\&= \dI_{B^1\cdots B^{v}}\otimes \dI_B\,,
\end{align}
and thus $0\le\lambda_{\min}(O^{\bm{j}, \, \bm{U}}(m))$ as well as $\lambda_{\max}(O^{\bm{j}, \, \bm{U}}(m))\le 1 =L$.  Furthermore,  the expectation of the random matrix $O^{\bm{j}, \, \bm{U}}(m)$ can be  controlled  as follows:
\begin{align}
   & \exs{\bm{j}\sim p, \bm{U}\sim \cL}{O^{\bm{j}, \, \bm{U}}_{B^1 \cdots B^{v}B}(m)}
   \\&= \sum_{j_m=1}^v p_{j_m} \exs{U_m \sim \cL(B_{j_m}^{j_m})}{ \tfrac{ |B^{j_m}_{j_m}|}{p_{j_m}}\cdot (U_m)_{B_{j_m}^{j_m}}\Pi_{B^{j_m}}^{j_m} \Lambda_{B^{j_m} B} \Pi_{B^{j_m}}^{j_m}(U_m^\dagger)_{B_{j_m}^{j_m}}\otimes \bigotimes_{s\in [v]\setminus j_m} \dI_{B^s}}
   \\&= \sum_{j_m=1}^v \mathbf{1}_{B^{j_m}_{j_m}}\otimes\ptr{B^{j_m}_{j_m}}{ {|B^{j_m}_{j_m}|}\cdot \Pi_{B^{j_m}}^{j_m} \Lambda_{B^{j_m} B} \Pi_{B^{j_m}}^{j_m}\otimes \bigotimes_{s\in [v]\setminus j_m} \dI_{B^s}}
    \\&\mle  \sum_{j_m=1}^v  \dI_{B^{j_m}_{j_m}} \otimes \ptr{B^{j_m}}{\Lambda_{B^{j_m} B}\otimes \bigotimes_{s\in [v]\setminus j_m} \dI_{B^s}}
    \\&\mle   \sum_{j_m=1}^v  \dI_{B^{j_m}} \otimes \Lambda_{B} \otimes \bigotimes_{s\in [v]\setminus j_m} \dI_{B^s}
    \\&= \frac{v}{M}\dI_{B^1\cdots B^{v}}\otimes \dI_B.
\end{align}
Hence, we deduce that 
$\mu_{\max} = \lambda_{\max}\left(\sum_{m=1}^M \exs{\bm{j}\sim p, \bm{U}\sim \cL}{O^{\bm{j}, \, \bm{U}}_{B^1 \cdots B^{v}B}(m)}\right)\le \lambda_{\max}\left(v\dI_{B^1\cdots B^{v}}\otimes \dI_B\right)= v$. 
Therefore by the matrix Chernoff inequality ($L=1$ and $\mu_{\max}\le v$) (see \cite[Theorem 5.1.1]{Tropp2015May} reproduced in   Theorem \ref{thm:rdm-matrix}) we have for $ \gamma = (1+\delta)\mu_{\max} =  \log(2v M |R|^{v}|B|)$ that
\begin{align}
    \mathbb{P}_{\bm{j}\sim p, \bm{U}\sim \cL} \left[\big\|\textstyle\sum_{m=1}^M O^{\bm{j}, \, \bm{U}}(m)\big\|_{\infty} >(1+\delta)\mu_{\max} \right]&\le |B^1 \cdots B^{v}B| \cdot \left(\frac{e^{\delta}}{(1+\delta)^{1+\delta}}\right)^{\mu_{\max}}
    \\&\overset{(a)}{\le} |R|^{v}|B| \exp(-\gamma)
      \\&\overset{(b)}{\le} \frac{1}{2}\cdot\frac{1}{v}\cdot  \suc^{\rm{NS}}(\cN, M,\rho),\label{eq:event3}
\end{align}
where we used in $(a)$ $\mu_{\max}\le v$ and  the inequality $\gamma\log \gamma -\gamma -\gamma\ln(v) \ge  \gamma$ valid for all $\gamma\ge e^2v$ (this is implied by the choice of $\gamma$ and the assumption  $\log|A|\ge e^2$); in $(b)$ we used $\suc_{}^{\rm{NS}}(\cN, M, \rho)  \ge \frac{1}{M}$ and we choose $\gamma$ such that 
\begin{align}
    &|R|^{v}|B| \exp(-\gamma)\le \frac{1}{2}\cdot\frac{1}{v}\cdot \frac{1}{M}
   \Leftrightarrow \gamma \ge  \log(2v M |R|^{v}|B|).
\end{align}
Consequently, from \eqref{eq:event1},  \eqref{eq:event2}, and \eqref{eq:event3}, we have that
\begin{align}
    &\exs{\bm{j}\sim p, \bm{U}\sim \cL}{\suc(\cN,M, m)}
    \\&\ge \frac{1}{v\log(2v M e^{v}|R|^{v}|B|)}\cdot  \suc^{\rm{NS}}(\cN, M,\rho) 
   - \frac{1}{2v\log(2v M e^{v}|R|^{v}|B|)}\cdot \suc^{\rm{NS}}(\cN, M,  \rho) 
    \\&= \frac{1}{2v \log(2v M e^{v}|R|^{v}|B|)}\cdot \suc_{}^{\rm{NS}}(\cN,M,\rho).
\end{align}
Finally since $R \simeq A$ we conclude that:
\begin{align}
    \suc^{\rm{EA}}(\cN,M) &\ge \frac{1}{M}\sum_{m=1}^M  \exs{\bm{j}\sim p, \bm{U}\sim \cL}{\suc(\cN,M, m)}
    \\&\ge \frac{1}{2v \log(2v M e^{v}|A|^{v}|B|)}\cdot\suc^{\rm{NS}}(\cN, M, \rho) .
\end{align}
\end{proof}

%%%%%%%%%%%%%%%%%%%%%%%%%%%%%%%%%%%%%%%%%%%%%%%%

\subsection{Application: strong converse exponent for coding over quantum channels}\label{sec:qq-application-sce}

In this section, we apply the approximation result with multiplicative error obtained in Section \ref{sec:qq-multiplicative} to establish the entanglement-assisted strong converse exponent in the quantum setting. 
To this end, we show that the coding strong converse exponent is the same for entanglement-assisted and non-signaling strategies.
\begin{prop}\label{prop:qq-EE-EA-NS}
    Let $\cN$ be a quantum channel. For all $r\ge 0$, we have  that
    \begin{align}
   E^{\rm{EA}}(\cN, r) = E^{\rm{NS}}(\cN, r).
\end{align}
\end{prop}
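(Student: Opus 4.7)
The plan is to establish both inequalities separately. The easy direction $E^{\rm{EA}}(\cN,r) \ge E^{\rm{NS}}(\cN,r)$ follows immediately from \eqref{eq:NS>EA}: since $\suc^{\rm{NS}}(\cN^{\otimes n}, e^{nr}) \ge \suc^{\rm{EA}}(\cN^{\otimes n}, e^{nr})$, taking $-\tfrac{1}{n}\log$ and $n\to\infty$ yields the claim.

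For the reverse direction, I would apply Proposition~\ref{prop:qq-rounding} to the tensor product channel $\cN^{\otimes n}$ with $M_n := \lceil e^{nr}\rceil$. For any input state $\rho_{R^n}$ having $v_n$ distinct eigenvalues (and $n\log|A| \ge e^2$), the proposition gives
\begin{align}
\suc^{\rm{EA}}(\cN^{\otimes n}, M_n) \ge \frac{\suc^{\rm{NS}}(\cN^{\otimes n}, M_n, \rho_{R^n})}{2 v_n \log\!\left(2 v_n M_n e^{v_n} |A|^{n v_n} |B|^n\right)}.
\end{align}
After applying $-\tfrac{1}{n}\log$, the prefactor contributes an additive $O\!\left(\tfrac{v_n\log n + v_n\log|A|}{n}\right)$ correction, which vanishes whenever $v_n = \operatorname{poly}(n)$. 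The task therefore reduces to exhibiting a permutation-invariant $\rho_{R^n}$ with $\operatorname{poly}(n)$ distinct eigenvalues whose NS value is within a $\operatorname{poly}(n)$ factor of the unrestricted optimum $\suc^{\rm{NS}}(\cN^{\otimes n}, M_n)$.

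The canonical candidate is the universal de Finetti state $\tau_{R^n} := \int \sigma_R^{\otimes n}\, d\mu(\sigma)$. By the postselection technique of~\cite{christandl2009postselection}, every permutation-invariant state satisfies $\rho'_{R^n} \mle K_n \cdot \tau_{R^n}$ with $K_n = \operatorname{poly}(n)$, and by Schur-Weyl duality $\tau_{R^n}$ itself has only $v_n = \binom{n + |R|^2 - 1}{|R|^2 - 1} = \operatorname{poly}(n)$ distinct eigenvalues, as noted in~\cite{Hayashi2016Oct}. The permutation-covariance of $\cN^{\otimes n}$ and concavity of $\rho\mapsto \suc^{\rm{NS}}(\cN^{\otimes n}, M_n, \rho)$ ensure the existence of a permutation-invariant (near-)maximizer $\rho^*_{R^n}$. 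To transfer the value from $\rho^*$ to $\tau$, the cleanest route is via the MC program~\eqref{MC-program}: if $(\rho^*_{R^n}, \Lambda^*)$ is MC-feasible with value $V$, then $(\tau_{R^n}, \Lambda^*/K_n)$ is also MC-feasible, since $\Lambda^*/K_n \mle M_n \rho^*_{R^n}/K_n \otimes \dI \mle M_n \tau_{R^n}\otimes \dI$ and $(\Lambda^*/K_n)_B \mle \dI/K_n \mle \dI$, with value $V/K_n$. Converting back from MC to NS costs at most a $(1-1/M_n)$ factor by Proposition~\ref{prop:NSvsMC} (whose proof applies for a fixed $\rho$), so $\suc^{\rm{NS}}(\cN^{\otimes n}, M_n, \tau_{R^n}) \ge \operatorname{poly}(n)^{-1}\cdot \suc^{\rm{NS}}(\cN^{\otimes n}, M_n)$.

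The main obstacle I anticipate is exactly this constraint bookkeeping in the transfer step: the NS constraint $\Lambda_B = \dI/M_n$ is not preserved under operator-domination rescaling, which is precisely why the detour through MC (with the looser $\Lambda_B \mle \dI$) is essential, followed by Proposition~\ref{prop:NSvsMC} to return to NS. Combining the rounding bound with the de Finetti comparison yields $\suc^{\rm{EA}}(\cN^{\otimes n}, e^{nr}) \ge \operatorname{poly}(n)^{-1}\cdot \suc^{\rm{NS}}(\cN^{\otimes n}, e^{nr})$, and taking $-\tfrac{1}{n}\log$ followed by $n\to\infty$ gives $E^{\rm{EA}}(\cN, r) \le E^{\rm{NS}}(\cN, r)$, completing the proof.
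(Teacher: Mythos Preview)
Your proposal is correct and follows the same three-step architecture as the paper: symmetrize, pass to the de Finetti state via the postselection bound, then apply Proposition~\ref{prop:qq-rounding}. Two minor remarks: first, the detour through the MC program is unnecessary\,---\,after rescaling, the NS equality constraint becomes $(\Lambda^*/K_n)_{B^n}=\tfrac{1}{K_nM_n}\dI_{B^n}$, which \emph{is} the NS constraint for message size $K_nM_n$, and the paper exploits exactly this (together with monotonicity in $M$) to stay inside the NS program throughout; second, your eigenvalue count is off\,---\,the de Finetti state has at most $|Y_{n,|R|}|\le (n+1)^{|R|-1}$ distinct eigenvalues, whereas $\binom{n+|R|^2-1}{|R|^2-1}$ is the postselection constant $g(n,|R|)$\,---\,though both are polynomial in $n$ so your conclusion is unaffected.
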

As a direct corollary of this proposition and the exact characterization of the non-signaling strong converse exponent in Corollary \ref{cor:SCE-MC-NS}, we can conclude the entanglement-assisted strong converse exponent.

\begin{cor}\label{cor:SCE-EA-NS}
    Let  $\cN$ be a quantum channel. For all $r\ge 0$, we have  that
 \begin{align}
     E^{\rm{EA}}(\cN, r) =  \sup_{\alpha\ge 0}  \frac{\alpha}{1+\alpha}\left( r- \widetilde{\cI}_{1+\alpha}(\cN) \right). 
    \end{align}
\end{cor}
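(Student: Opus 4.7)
Since Corollary \ref{cor:SCE-MC-NS} already gives the explicit formula for $E^{\rm{NS}}(\cN,r)$, the task reduces to proving Proposition \ref{prop:qq-EE-EA-NS}, i.e., $E^{\rm{EA}}(\cN,r) = E^{\rm{NS}}(\cN,r)$. The direction $E^{\rm{EA}}(\cN,r) \ge E^{\rm{NS}}(\cN,r)$ is immediate from the relaxation $\suc^{\rm{EA}} \le \suc^{\rm{NS}}$ recorded in \eqref{eq:NS>EA}. For the reverse, I plan to establish
\begin{equation}
\suc^{\rm{EA}}(\cN^{\otimes n}, e^{nr}) \ge \frac{1}{\operatorname{poly}(n)}\cdot \suc^{\rm{NS}}(\cN^{\otimes n}, e^{nr}),
\end{equation}
from which $E^{\rm{EA}}(\cN,r) \le E^{\rm{NS}}(\cN,r)$ follows by taking $-\frac{1}{n}\log$ and a limit (the $\log\operatorname{poly}(n)/n$ term vanishes).

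The strategy is to apply Proposition \ref{prop:qq-rounding} to $\cN^{\otimes n}$ with a cleverly chosen input state. Using the NS-optimal $\rho_{R^n}^*$ directly would be disastrous since it can have up to $|A|^n$ distinct eigenvalues, yielding an exponential prefactor. Instead I would use a universal de Finetti state $\tau_{R^n} = \int d\sigma\,\sigma_R^{\otimes n}$. Two facts about $\tau_n$ are the engine of the proof. First, by Schur--Weyl duality $\tau_n$ commutes with the $S_n$-action by permutation and with $U^{\otimes n}$ for all $U\in\mathbb{U}(R)$, hence acts as a scalar on each isotypic block of $R^n \cong \bigoplus_\lambda V_\lambda \otimes W_\lambda$. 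The Young diagrams with $n$ boxes and at most $|R|$ rows number $\operatorname{poly}(n)$ \cite{Hayashi2016Oct}, so $\tau_n$ has $v_n = \operatorname{poly}(n)$ distinct eigenvalues. Second, the Christandl--K\"onig--Renner postselection technique \cite{christandl2009postselection} asserts that every permutation-invariant state $\rho_{R^n}$ satisfies $\rho_{R^n} \mle c_n \tau_{R^n}$ with $c_n = \binom{n+|R|^2-1}{n} = \operatorname{poly}(n)$. A permutation-averaging argument (the NS program and $\cN^{\otimes n}$ are jointly equivariant under $S_n$ acting on $R^n$ and $B^n$) allows me to take the NS-optimal $\rho^*_{R^n}$ permutation-invariant, so $\rho^*_{R^n} \mle c_n \tau_n$.

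Given such a near-optimal pair $(\rho^*_{R^n}, \Lambda^*_{R^n B^n})$ for $\suc^{\rm{NS}}(\cN^{\otimes n}, M)$, observe that $\Lambda^*/c_n \mle \tau_n \otimes \dI_{B^n}$ and $(\Lambda^*/c_n)_{B^n} = \dI_{B^n}/(c_n M)$, so $\Lambda^*/c_n$ is feasible for $\suc^{\rm{NS}}(\cN^{\otimes n}, c_n M, \tau_n)$ with value $\tfrac{1}{c_n}\suc^{\rm{NS}}(\cN^{\otimes n}, M)$. Plugging this into Proposition \ref{prop:qq-rounding} applied to $\cN^{\otimes n}$ with state $\tau_n$ and $c_n M$ messages yields
\begin{equation}
\suc^{\rm{EA}}(\cN^{\otimes n}, c_n M) \ge \frac{1}{2\, v_n\, c_n \log\!\left(2\, v_n\, c_n\, M\, e^{v_n}\, |A|^{n v_n}\, |B|^n\right)}\cdot \suc^{\rm{NS}}(\cN^{\otimes n}, M).
\end{equation}
Every factor on the right is $\operatorname{poly}(n)$ when $M = e^{nr}$. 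Since the EA success probability is monotone nonincreasing in the number of messages (one can keep only the $e^{nr}$ messages with the largest success probabilities), I further deduce $\suc^{\rm{EA}}(\cN^{\otimes n}, e^{nr}) \ge \suc^{\rm{EA}}(\cN^{\otimes n}, c_n e^{nr})$, closing the argument.

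The main obstacle I foresee is threading the postselection bound $\rho^* \mle c_n \tau_n$ through the NS program while preserving the equality constraint $\Lambda_B = \dI_B/M$; the device of absorbing $c_n$ into the number of messages resolves this at the cost of a vanishing rate shift. A secondary delicate point is keeping $v_n$ polynomial: for a generic permutation-symmetric state $v_n$ can still be exponential in $n$, so the additional $\mathbb{U}(R)^{\otimes n}$ symmetry of $\tau_n$ and the ensuing Schur--Weyl block decomposition are indispensable. Combining the resulting Proposition \ref{prop:qq-EE-EA-NS} with Corollary \ref{cor:SCE-MC-NS} then yields Corollary \ref{cor:SCE-EA-NS}.
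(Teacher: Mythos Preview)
Your proposal is correct and follows essentially the same route as the paper's proof of Proposition~\ref{prop:qq-EE-EA-NS}: reduce to a permutation-invariant optimizer, dominate it by the de~Finetti state via the Christandl--K\"onig--Renner bound, and apply Proposition~\ref{prop:qq-rounding} using that the de~Finetti state has only polynomially many distinct eigenvalues by Schur--Weyl duality. The only cosmetic difference is bookkeeping: the paper applies the rounding at $M=e^{nr}$ and uses $\suc^{\rm{NS}}(\cN^{\otimes n},M,\rho^{\rm{dF}})\ge g(n,|R|)^{-1}\suc^{\rm{NS}}(\cN^{\otimes n},M)$ directly (Eq.~\eqref{eq:tilde-succ}), whereas you apply the rounding at $c_nM$ and then invoke monotonicity of $\suc^{\rm{EA}}$ in the number of messages---both orderings give the same $\operatorname{poly}(n)^{-1}$ prefactor.
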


We note that \cite{Li2023Nov} is the first to prove this result. Our proof strategy differs significantly and may be of independent interest.

\begin{proof}[Proof of Proposition \ref{prop:qq-EE-EA-NS}] We first  reduce the problem to a permutation invariant state $\rho_R$ in the non-signaling program of $\suc^{\rm{NS}}(\cN^{\otimes n},M, \rho_{R^n})$ \eqref{eq:ns-program-rho}. Next,  we reduce the problem to the case where $\rho_{R^n} = \rho_{R^n}^{\rm{dF}}$ is the de Finetti state \cite{christandl2009postselection}. Finally we apply the rounding result proven in Proposition \ref{prop:qq-rounding}.\\

\textbf{Reduction to permutation invariant states.} 
By Lemma~\ref{lem:red-perm-invt-states}, the optimization in 
\begin{align}
    \suc^{\rm{NS}}(\cN^{\otimes n}, M) = \sup_{\rho_{R^n}, \Lambda_{R^n B^n}} \quad & \tr{\Lambda_{R^n B^n} \cdot (J_{\cN})_{RB}^{\otimes n}} \\
    \text{subject to} \quad
    & \rho_{R^n} \in \cS(R^n), \\
    & \Lambda_{B^n} = \tfrac{1}{M} \mathbb{I}_{B^n}, \\
    & 0 \mle \Lambda_{R^n B^n} \mle \rho_{R^n} \otimes \mathbb{I}_{B^n}.
\end{align}

can be restricted to permutation invariant states $\rho_{R^n}$.  We denote the set of permutation invariant states over $R^{\otimes n}$ by $\cS_n(R)$. We note that similar symmetry arguments were used by \cite{Matthews2014Sep,Leung2015Jun} to reduce the NS programs of the coding size and fidelity.\\

\textbf{Reduction to the de Finetti state.} Let  $g(n,d) = \binom{n+d^2-1}{n}\le (n+1)^{d^2-1}$. Any permutation invariant state $\rho_{R^n}\in \cS_n(R)$ satisfies $ \rho_{R^n} \mle g(n,|R|) \rho_{R^n}^{\rm{dF}}$ where $\rho_{R^n}^{\rm{dF}}$ is the de Finetti state from \cite{christandl2009postselection} and stated later on in \eqref{eq:definetti}. So, we find
\begin{align}
    &\suc_{}^{\rm{NS}}(\cN^{\otimes n},M) \\&=\sup_{\rho_{R^n} \in \cS_n(R)} \sup_{\Lambda_{R^nB^n}} \left\{\tr{\Lambda_{R^nB^n} \cdot (J_{\cN})_{RB}^{\otimes n}} \;\middle|\; \Lambda_{B^n} = \tfrac{1}{M}\mathbb{I}_{B^n},\, 0\mle  \Lambda_{R^nB^n} \mle \rho_{R^n} \otimes  \mathbb{I}_{B^n}\right\}
           \\&\le  \sup_{\Lambda_{R^nB^n}} \left\{\tr{\Lambda_{R^nB^n} \cdot (J_{\cN})_{RB}^{\otimes n}} \; \middle| \; \Lambda_{B^n} = \tfrac{1}{M}\mathbb{I}_{B^n}, \,0\mle  \Lambda_{R^nB^n} \mle g(n,|R|) \rho_{R^n}^{\rm{dF}} \otimes  \mathbb{I}_{B^n}   \right\}
           \\&= \sup_{\Lambda_{R^nB^n}} \left\{g(n,|R|)\tr{\Lambda_{R^nB^n}\cdot (J_{\cN})_{RB}^{\otimes n}} \; \middle| \; \Lambda_{B^n} = \tfrac{1}{g(n,|R|) M}\mathbb{I}_{B^n}, \,0\mle  \Lambda_{R^nB^n} \mle  \rho_{R^n}^{\rm{dF}} \otimes  \mathbb{I}_{B^n}   \right\}
           \\&= g(n,|R|)\cdot \suc_{}^{\rm{NS}}(\cN^{\otimes n}, g(n,|R|)M ,  \rho^{\rm{dF}}) 
           \\&\le g(n,|R|) \cdot \sup_{\rho_{R^n} \in \cS(R^n)}  \suc_{}^{\rm{NS}}(\cN^{\otimes n}, g(n,|R|)M ,  \rho) 
          \\&\le g(n,|R|)\cdot \suc_{}^{\rm{NS}}(\cN^{\otimes n}, g(n,|R|) M).
\end{align}
Hence, the non-signaling success probability  $\suc_{}^{\rm{NS}}(\cN^{\otimes n},M)$ is comparable   to  $\suc_{}^{\rm{NS}}(\cN^{\otimes n}, M, \rho^{\rm{dF}})$:
\begin{align}
     \tfrac{1}{g(n,|R|)}\suc^{\rm{NS}}(\cN^{\otimes n},M)&\le \tfrac{1}{g(n,|R|)}\suc^{\rm{NS}}(\cN^{\otimes n}, \tfrac{M}{g(n,|R|)}) 
     \\&\le \suc_{}^{\rm{NS}}(\cN^{\otimes n},M,  \rho^{\rm{dF}}) \le  \suc^{\rm{NS}}(\cN^{\otimes n},M).\label{eq:tilde-succ}
\end{align}
In particular to round the NS success probability it suffices to consider $\rho_R = \rho_R^{\rm{dF}}$ in the NS program \eqref{NS-program} albeit with a rounding loss of $\frac{1}{g(n,|R|)}$.\\

\textbf{Application of the rounding protocol of Proposition \ref{prop:qq-rounding}. } 
Note that the de Finetti state can be written as \cite[Lemma 1]{Hayashi2016Oct}:
\begin{align}\label{eq:definetti}
    \rho^{\rm{dF}}_{R_1\cdots R_n} = \bigoplus_{\lambda\in Y_{n,|R|}} p_{\lambda} \frac{\Pi_{\lambda}}{\tr{\Pi_{\lambda}}},
\end{align}
where $Y_{n,|R|}$ is the set of Young diagrams of size $n$ and depth $|R|$ that satisfies $|Y_{n,|R|}|\le (n+1)^{|R|-1}$, $\{\Pi_{\lambda}\}_{\lambda\in Y_{n,|R|}}$ are orthogonal projectors and $\{p_{\lambda}\}_{\lambda\in Y_{n,|R|}}$ is a probability distribution. We can then apply Proposition \ref{prop:qq-rounding}  for $\cN^{\otimes n}$, $M=e^{nr}$,  $\rho_R= \rho_R^{\rm{dF}}$ and $v= |Y_{n,|R|}|$ to obtain:
    \begin{align}
         &\suc^{\rm{EA}}(\cN^{\otimes n},e^{nr}) 
         \\&\ge \frac{1}{2|Y_{n,|R|}| \log(2|Y_{n,|R|}| e^{nr} e^{|Y_{n,|R|}|}|A|^{n|Y_{n,|R|}|}|B|^n)}\cdot\suc^{\rm{NS}}(\cN^{\otimes n},e^{nr}, \rho^{\rm{dF}})
         \\&\ge \frac{1}{2g(n,|A|)|Y_{n,|A|}|\log(2|Y_{n,|A|}| e^{nr} e^{|Y_{n,|A|}|}|A|^{n|Y_{n,|A|}|}|B|^n)}\cdot\suc^{\rm{NS}}(\cN^{\otimes n},e^{nr}),
    \end{align}
where we used \eqref{eq:tilde-succ} and $R\simeq A$ in the last inequality. Note that using $g(n,|A|) = \binom{n+|A|^2-1}{n}\le (n+1)^{|A|^2-1}$ and $|Y_{n,|A|}|\le (n+1)^{|A|-1}$ we have 
\begin{align}
    &\frac{1}{n}\log \left[2g(n,|A|)|Y_{n,|A|}|\log\left(2|Y_{n,|A|}| e^{nr}e^{|Y_{n,|A|}|} |A|^{n|Y_{n,|A|}|}|B|^n\right) \right]
    \\&\le  \frac{\log2}{n}+ (|A|^2+|A|-2)\frac{\log(n+1)}{n}
    \\&\quad + \frac{1}{n}\log\left[\log(2)+ (|A|-1)\log(n+1) +nr +(n+1)^{|A|} \log(e|A|) + n \log(|B|) \right] 
    \\&\le  \frac{\log2}{n}+ (|A|^2+|A|-2)\frac{\log(n+1)}{n}+ \frac{1}{n}\log[O((n+1)^{|A|}) ] 
    \\&\le  \frac{\log2}{n}+ (|A|^2+|A|-2)\frac{\log(n+1)}{n}+ O(|A|)\frac{\log(n+1)}{n} 
    \underset{n \rightarrow \infty}{\longrightarrow} 0. 
\end{align}
Hence, we find
\begin{align}
        \suc^{\rm{NS}}(\cN^{\otimes n},e^{nr})\ge  \suc^{\rm{EA}}(\cN^{\otimes n},e^{nr}) &\ge e^{-o(n)}\cdot\suc^{\rm{NS}}(\cN^{\otimes n},e^{nr}).
    \end{align}
Finally, we get
\begin{align}
    E^{\rm{EA}}(\cN, r) &= \lim_{n \rightarrow \infty} - \frac{1}{n}\log\suc^{\rm{EA}}(\cN^{\otimes n},e^{nr})
    \\&= \lim_{n \rightarrow \infty} -\frac{1}{n}\log\suc^{\rm{NS}}(\cN^{\otimes n},e^{nr})=  E^{\rm{NS}}(\cN, r).
\end{align}
\end{proof}

%%%%%%%%%%%%%%%%%%%%%%%%%%%%%%%%%%%%%%%%%%%%%%%%

\section{Conclusion}

We explored entanglement-assisted quantum channel coding from an algorithmic point of view. We developed approximation algorithms for the entanglement-assisted success probability using non-signaling relaxation in the quantum-classical and in the quantum setting. As an application, we provided an alternative proof for the characterization of the entanglement-assisted strong converse exponent. The achievability proof proceeds by connecting the entanglement-assisted (EA), non-signaling (NS), and meta-converse (MC) strong converse exponents. The latter is then deduced from the inherent relation between the meta-converse and composite hypothesis testing. While this technique worked effectively for strong converse exponents, it remains unclear whether the EA, NS, and MC error exponents are equal. Moreover, obtaining tight approximation bounds in the fully quantum setting remains an open problem.

%%%%%%%%%%%%%%%%%%%%%%%%%%%%%%%%%%%%%%%%%%%%%%%%

\section*{Acknowledgments}

MB and AO acknowledge funding by the European Research Council (ERC Grant Agreement No. 948139), MB acknowledges support from the Excellence Cluster - Matter and Light for Quantum Computing (ML4Q).

\printbibliography

%%%%%%%%%%%%%%%%%%%%%%%%%%%%%%%%%%%%%%%%%%%%%%%%

\appendix

\section{Dual formulation of the meta-converse}\label{app:dual-MC}

\begin{prop} \label{prop-dual-MC}
Let  $\cN$ be a  quantum channel. The optimal one-shot  meta-converse  success probability satisfies: 
\begin{align}
   \suc_{}^{\rm{MC}}(\cN,M)  &= \sup_{\rho_R \in \cS(R)}\inf_{Z_B\mge 0}  \tr{(\rho_R^{1/2} ( J_{\cN})_{RB}\rho_R^{1/2} - M\rho_R\otimes Z_B)_+} + \tr{Z_B}
   \\&= \sup_{{\rho}_{R} \in \cS(R)}\sup_{0\mle O\mle \dI} \left\{\tr{\rho_R^{1/2} ( J_{\cN})_{RB}\rho_R^{1/2} \cdot O} \; \middle| \;  \sup_{\sigma_B\in \cS(B)}\tr{(\rho_R\otimes \sigma_B) O}\le \tfrac{1}{M} \right\} \label{eq:MC-dual2}
     \\&= \sup_{{\rho}_{R} \in \cS(R)} \inf_{\sigma_B\in \cS(B)}\sup_{0\mle O\mle \dI} \left\{\tr{\rho_R^{1/2} ( J_{\cN})_{RB}\rho_R^{1/2} \cdot O} \; \middle| \; \tr{(\rho_R\otimes \sigma_B) O}\le \tfrac{1}{M} \right\}. \label{eq:MC-dual3}
\end{align}
\end{prop}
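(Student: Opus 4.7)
The plan is to derive the three expressions from the primal SDP \eqref{MC-program} by combining three ingredients: a change of variables, Lagrangian duality for SDPs, and a reparametrization of a positive semi-definite dual variable as a scalar times a state. Strong duality for each step will be secured by Slater's condition; for example, the primal SDP is strictly feasible at $\rho_R = \dI_R/|R|$ together with $\Lambda_{RB} = \dI_{RB}/(2|R|)$, so every minimax exchange I will perform is justified.

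First I would establish \eqref{eq:MC-dual2} by a change of variables. Fix $\rho_R \in \cS(R)$ and substitute $\Lambda_{RB} = M\,\rho_R^{1/2}\,O_{RB}\,\rho_R^{1/2}$, with $O_{RB}$ understood to live on the support of $\rho_R\otimes\dI_B$ (via a Moore--Penrose inverse if $\rho_R$ is not full rank; this is consistent because $\Lambda_{RB}\mle M\rho_R\otimes\dI_B$ already forces $\Lambda_{RB}$ onto that support). The two-sided constraint $0\mle \Lambda_{RB}\mle M\rho_R\otimes\dI_B$ becomes $0\mle O_{RB}\mle \dI_{RB}$, and the objective becomes $\tr{\rho_R^{1/2}(J_\cN)_{RB}\rho_R^{1/2}\cdot O_{RB}}$. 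For the remaining constraint $\Lambda_B\mle \dI_B$, cyclicity gives $\Lambda_B = M\,\ptr{R}{(\rho_R\otimes\dI_B)O_{RB}}$; combining this with the identity $\tr{\sigma_B\,\ptr{R}{(\rho_R\otimes\dI_B)O_{RB}}} = \tr{(\rho_R\otimes\sigma_B)O_{RB}}$ and the elementary fact that $X_B\mle Y_B$ iff $\tr{\sigma_B X_B}\le \tr{\sigma_B Y_B}$ for every $\sigma_B\in \cS(B)$ converts $\Lambda_B\mle\dI_B$ into $\sup_{\sigma_B\in\cS(B)}\tr{(\rho_R\otimes\sigma_B)O_{RB}}\le 1/M$, which is exactly the constraint in \eqref{eq:MC-dual2}.

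Next I would derive the first expression in the proposition from \eqref{eq:MC-dual2} via Lagrangian duality. Writing $J':=\rho_R^{1/2}(J_\cN)_{RB}\rho_R^{1/2}$, the inner maximization reads $\sup_O\{\tr{J'O}\mid 0\mle O\mle\dI,\; M\,\ptr{R}{(\rho_R\otimes\dI_B)O}\mle\dI_B\}$. Introducing a PSD Lagrange multiplier $Z_B\mge 0$ for the last (operator) constraint, the Lagrangian simplifies to $\tr{(J'-M\rho_R\otimes Z_B)O} + \tr{Z_B}$. Strong duality allows exchanging $\sup_O$ and $\inf_{Z_B\mge 0}$, and the identity $\max_{0\mle O\mle\dI}\tr{XO}=\tr{X_+}$ for Hermitian $X$ collapses the inner optimization to yield $\inf_{Z_B\mge 0}\{\tr{(J'-M\rho_R\otimes Z_B)_+} + \tr{Z_B}\}$, matching the first expression in the proposition.

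Finally, \eqref{eq:MC-dual3} would follow from the first expression by reparametrizing $Z_B$ as $Z_B = (\lambda/M)\sigma_B$ with $\lambda\ge 0$ and $\sigma_B\in\cS(B)$ (the case $Z_B = 0$ corresponds to $\lambda=0$; this parametrization covers every $Z_B\mge 0$ as $\lambda,\sigma_B$ vary). The first expression then becomes $\inf_{\sigma_B\in\cS(B)}\inf_{\lambda\ge 0}\{\tr{(J'-\lambda\rho_R\otimes\sigma_B)_+} + \lambda/M\}$. The inner $\inf_\lambda$ is precisely the Lagrangian dual of the scalar-constrained SDP $\sup_{0\mle O\mle\dI}\{\tr{J'O}\mid \tr{(\rho_R\otimes\sigma_B)O}\le 1/M\}$, for which Slater's condition holds (e.g., $O = \eps\,\dI$ for small $\eps>0$ is strictly interior and strictly feasible), so strong duality replaces the inner $\inf_\lambda$ by the primal value and yields \eqref{eq:MC-dual3}. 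The main technical obstacle throughout is the careful verification of Slater-type conditions to justify strong duality and the minimax exchanges; this is routine given the compactness and convexity of all feasible sets, but it is the bookkeeping that actually needs to be done.
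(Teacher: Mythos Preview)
Your proof is correct, but it follows a different route from the paper. The paper takes the first equality (the $\inf_{Z_B\mge 0}$ expression) as given, citing \cite[Proposition 6.1]{Oufkir24}, and then derives \eqref{eq:MC-dual2} and \eqref{eq:MC-dual3} from it: it rewrites $\tr{X_+}=\sup_{0\mle O\mle \dI}\tr{XO}$, applies Sion's minimax theorem (Lemma~\ref{lem-Sion}) to swap $\inf_{Z_B}$ and $\sup_O$, and then reparametrizes $Z_B=z\sigma_B$ to turn the $\inf_{z\ge 0}$ into the hard constraint $\sup_{\sigma_B}\tr{(\rho_R\otimes\sigma_B)O}\le 1/M$. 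You instead start from the primal SDP \eqref{MC-program}, obtain \eqref{eq:MC-dual2} directly via the change of variables $\Lambda_{RB}=M\rho_R^{1/2}O\rho_R^{1/2}$, then recover the first expression by Lagrangian duality with multiplier $Z_B$, and finally get \eqref{eq:MC-dual3} by the same $Z_B=(\lambda/M)\sigma_B$ reparametrization combined with strong duality for a scalar-constrained SDP. The two arguments share the same building blocks (the positive-part identity and the $Z_B\leftrightarrow(\lambda,\sigma_B)$ trick), but your version is self-contained---it does not outsource the first equality to an external reference---and uses Slater's condition in place of Sion. One minor remark: your opening sentence about Slater for the full primal at $(\rho_R,\Lambda_{RB})=(\dI_R/|R|,\dI_{RB}/(2|R|))$ is not actually what you need; the strong-duality steps you invoke later are for the \emph{inner} problems with $\rho_R$ fixed, and there you correctly point to $O=\eps\dI$ as a strict interior point.
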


\begin{proof}
The first equality was proven in \cite[Proposition 6.1]{Oufkir24}.
Let us  prove \eqref{eq:MC-dual2}. We have that 
\begin{align}
  \suc^{\rm{MC}}(\cN,M) &=  \sup_{\rho_R\in \cS(R)}\inf_{Z_B\mge 0} \tr{\big(\rho_R^{1/2} ( J_{\cN})_{RB}\rho_R^{1/2} - M\rho_R\otimes Z_B\big)_+}+\tr{Z_B}
  \\&=\sup_{\rho_R\in \cS(R)}\inf_{Z_B \mge 0} \sup_{0\mle O\mle \dI} \tr{\big(\rho_R^{1/2} ( J_{\cN})_{RB}\rho_R^{1/2} -  M\rho_R\otimes Z_B\big)\cdot O} + \tr{Z_B}
   \\&=\sup_{\rho_R\in \cS(R)} \sup_{0\mle O\mle \dI} \inf_{Z_B \mge 0}\tr{\big(\rho_R^{1/2} ( J_{\cN})_{RB}\rho_R^{1/2} -  M\rho_R\otimes Z_B\big)\cdot O} + \tr{Z_B},
\end{align}
 where we used Sion's minimax theorem (see Lemma \ref{lem-Sion}) in the last equality. By writing $Z_B\mge 0 $ as  $Z_B= z\sigma_B$ where $z=\tr{Z}\ge 0$ and $\sigma_B \in \cS(B)$ we have 
\begin{align}
  &\suc^{\rm{MC}}(\cN,M) 
  \\&=  \sup_{\rho_R\in \cS(R)} \sup_{0\mle O\mle \dI} \inf_{Z_B \mge 0}\tr{\big(\rho_R^{1/2} ( J_{\cN})_{RB}\rho_R^{1/2} -  M\rho_R\otimes Z_B\big)\cdot O} + \tr{Z_B}
  \\&=\sup_{\rho_R\in \cS(R)}\sup_{0\mle O\mle \dI} \inf_{z\mge 0}\tr{(\rho_R^{1/2} ( J_{\cN})_{RB}\rho_R^{1/2})\cdot O }+z\left(1- M\sup_{\sigma_B \in \cS(B)}\tr{(\rho_R\otimes \sigma_B)\cdot O}\right)
     \\& =\sup_{\rho_R\in \cS(R)}\sup_{\substack{0\mle O\mle \dI \\ \sup_{\sigma_B \in \cS(B)}\tr{(\rho_R\otimes \sigma_B)\cdot O}\le \frac{1}{M}}} \inf_{z\mge 0}\tr{(\rho_R^{1/2} ( J_{\cN})_{RB}\rho_R^{1/2})\cdot O }
     \\&\qquad\qquad\qquad\qquad\qquad\qquad\qquad\qquad\qquad +z\left(1- M\sup_{\sigma_B \in \cS(B)}\tr{(\rho_R\otimes \sigma_B)\cdot O}\right)
       \\&=\sup_{\rho_R\in \cS(R)}\sup_{\substack{0\mle O\mle \dI \\ \sup_{\sigma_B \in \cS(B)}\tr{(\rho_R\otimes \sigma_B)\cdot O}\le \frac{1}{M}}} \tr{(\rho_R^{1/2} ( J_{\cN})_{RB}\rho_R^{1/2})\cdot O }.
\end{align}
 The proof of \eqref{eq:MC-dual3} is similar: 
\begin{align}
  &\suc^{\rm{MC}}(\cN,M) 
  \\&=  \sup_{\rho_R\in \cS(R)}\sup_{0\mle O\mle \dI} \inf_{\sigma_B \in \cS(B)}\inf_{z\mge 0}\tr{(\rho_R^{1/2} ( J_{\cN})_{RB}\rho_R^{1/2})\cdot O }+z\left(1- M\tr{(\rho_R\otimes \sigma_B)\cdot O}\right)
     \\& =\sup_{\rho_R\in \cS(R)}\inf_{\sigma_B \in \cS(B)}\sup_{\substack{0\mle O\mle \dI \\ \tr{(\rho_R\otimes \sigma_B)\cdot O}\le \frac{1}{M}}} \inf_{z\mge 0}\tr{(\rho_R^{1/2} ( J_{\cN})_{RB}\rho_R^{1/2})\cdot O }
       \\&\qquad\qquad\qquad\qquad\qquad\qquad\qquad\qquad\;  +z\left(1- M\tr{(\rho_R\otimes \sigma_B)\cdot O}\right)
       \\&=\sup_{\rho_R\in \cS(R)}\inf_{\sigma_B \in \cS(B)}\sup_{\substack{0\mle O\mle \dI \\ \tr{(\rho_R\otimes \sigma_B)\cdot O}\le \frac{1}{M}}} \tr{(\rho_R^{1/2} ( J_{\cN})_{RB}\rho_R^{1/2})\cdot O }.
\end{align}
\end{proof}

%%%%%%%%%%%%%%%%%%%%%%%%%%%%%%%%%%%%%%%%%%%%%%%%

\section{Technical statements}

\begin{lem}[\cite{sion1958general}]\label{lem-Sion}
    Let $X$ be a compact convex subset of a linear topological space and $Y$ a convex subset of a linear topological space. If $f$  is a real-valued function on $X \times Y$ with
    \begin{itemize}
        \item $f(x, \cdot)$ is upper semi-continuous and quasi-concave on $Y$, $ \forall x\in X $, and
        \item $f(\cdot,y)$ is lower semi-continuous and quasi-convex on $X$, $ \forall y\in Y $
    \end{itemize}
    then
\begin{equation}
    \min_{x\in X} \sup_{y\in Y} f(x,y) =  \sup_{y\in Y} \min_{x\in X}f(x,y).
\end{equation}
\end{lem}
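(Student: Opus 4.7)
The weak duality inequality $\sup_{y \in Y}\min_{x \in X} f(x,y) \le \min_{x \in X}\sup_{y \in Y} f(x,y)$ is immediate, since for any $x_0 \in X$ and $y_0 \in Y$ one has $\min_{x} f(x,y_0) \le f(x_0,y_0) \le \sup_{y} f(x_0,y)$, and taking $\sup$ over $y_0$ on the left and $\min$ over $x_0$ on the right preserves the inequality. For the reverse direction I would argue by contradiction: suppose there exists $\alpha \in \mathbb{R}$ with
\begin{equation}
\sup_{y \in Y}\min_{x \in X} f(x,y) \;<\; \alpha \;<\; \min_{x \in X}\sup_{y \in Y} f(x,y),
\end{equation}
and aim to produce a single $x^* \in X$ with $f(x^*,y) \le \alpha$ for every $y \in Y$, directly contradicting the right inequality.

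The natural construction is via sublevel sets. For each $y \in Y$, define $A_y = \{x \in X : f(x,y) \le \alpha\}$. Lower semi-continuity of $f(\cdot,y)$ makes $A_y$ closed, quasi-convexity makes it convex, and compactness of $X$ then makes each $A_y$ compact. The left strict inequality guarantees $A_y \neq \emptyset$ for every $y$, because by compactness and lower semi-continuity the minimum $\min_x f(x,y)$ is attained and strictly below $\alpha$. Hence it suffices to prove the finite intersection property for $\{A_y\}_{y \in Y}$; compactness of $X$ then yields $\bigcap_{y \in Y} A_y \neq \emptyset$, furnishing the desired $x^*$.

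The finite intersection property is established by induction on the number of sets, and the crux is the two-point case. Given $y_1, y_2 \in Y$, I would parametrise the segment $y_t = (1-t)y_1 + t y_2 \in Y$ (using convexity of $Y$) for $t \in [0,1]$ and analyse how the family $\{A_{y_t}\}_t$ interacts with $A_{y_1}$ and $A_{y_2}$. Quasi-concavity and upper semi-continuity of $f(x,\cdot)$ imply that if $f(x,y_1)>\alpha$ and $f(x,y_2)>\alpha$ then $f(x,y_t)>\alpha$ for all $t \in [0,1]$, which controls how $A_{y_t}$ can move with $t$. Assuming for contradiction $A_{y_1} \cap A_{y_2} = \emptyset$, I would partition $[0,1]$ into $T_i = \{t : A_{y_t} \cap A_{y_i} \neq \emptyset\}$ for $i = 1,2$, show that both $T_1, T_2$ are non-empty and open in $[0,1]$ (openness uses the semi-continuity hypotheses together with compactness), note that they cover $[0,1]$ (using the two-point principle above applied to any $x$), and derive disjointness from the standing assumption, which contradicts connectedness of $[0,1]$. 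The inductive step passes from $n$ to $n+1$ points by invoking the two-point case with $y_{n+1}$ against a convex combination of $y_1,\dots,y_n$ chosen so that the inductive hypothesis applies on the resulting subfamilies.

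The main obstacle is calibrating the two-point case: establishing openness, disjointness, and coverage of the partition $\{T_1,T_2\}$ of $[0,1]$ requires carefully interleaving upper semi-continuity and quasi-concavity in $y$ with lower semi-continuity and quasi-convexity in $x$, which is precisely where all four hypotheses of the lemma are simultaneously used. Since this is the classical minimax theorem of Sion, in the paper it is simply invoked by citation to \cite{sion1958general}, and the sketch above mirrors Sion's original (KKM-free) proof.
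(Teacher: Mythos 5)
The paper does not actually prove this lemma: it is Sion's minimax theorem, invoked purely by citation to \cite{sion1958general}, so there is no in-paper argument to compare against. Your outline follows the standard elementary (Komiya-style) route: weak duality, contradiction via an intermediate value $\alpha$, closed convex compact sublevel sets $A_y$, reduction of the finite intersection property by induction to a two-point case, and connectedness of the segment $[y_1,y_2]$. That architecture is correct and is indeed how the theorem is proved without KKM machinery.

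The genuine gap is in the two-point case as you have set it up. With the single threshold $\alpha$ and $T_i=\{t: A_{y_t}\cap A_{y_i}\neq\emptyset\}$, neither openness nor closedness of $T_1,T_2$ follows from the hypotheses: to show, say, closedness you would take $t_n\to t$ with $t_n\in T_1$, pick $x_n\in A_{y_{t_n}}\cap A_{y_1}$, extract a limit point $x\in A_{y_1}$ by compactness, and then need $f(x,y_t)\le\alpha$; but you only control $f(\cdot,y)$ from below for fixed $y$ and $f(x,\cdot)$ from above for fixed $x$, and these pull in opposite directions once both arguments move simultaneously (there is no joint semicontinuity hypothesis). The standard repair is a two-level argument: since $A_{y_1}\cap A_{y_2}=\emptyset$ means $\min_x\max\{f(x,y_1),f(x,y_2)\}>\alpha$ (the minimum is attained because the maximum of two lower semicontinuous functions is lower semicontinuous and $X$ is compact), pick $\beta$ with $\alpha<\beta<\min_x\max\{f(x,y_1),f(x,y_2)\}$, set $B_i=\{x:f(x,y_i)\le\beta\}$ (still disjoint, closed, convex), and replace your $T_i$ by $I_i=\{t:A_{y_t}\subseteq B_i\}$. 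Coverage and disjointness then follow as you indicate from quasi-concavity together with connectedness of the convex sets $\{x:f(x,y_t)\le\beta\}$, and closedness of $I_i$ now does follow from upper semicontinuity of $t\mapsto f(x,y_t)$ alone, because the strict gap between $\alpha$ and $\beta$ absorbs the loss in the limit. With that modification (and Komiya's device of restricting to $\{x:f(x,y_{n+1})\le\alpha\}$ in the inductive step), your sketch becomes a complete proof.
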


\begin{lem}\label{lem:red-perm-invt-states}
Let $n, M\in \mathbb{N}$ and $\cN$ be a quantum channel.
The optimization in 
  \begin{align}
         &  \suc^{\rm{NS}}(\cN^{\otimes n},M)
         \\&= \sup_{\rho_{R^n} \in \cS(R^n)} \sup_{\Lambda_{R^nB^n}} \left\{\tr{\Lambda_{R^nB^n} \cdot (J_{\cN})_{RB}^{\otimes n}} \;\middle|\; \Lambda_{B^n} = \tfrac{1}{M}\mathbb{I}_{B^n},\, 0\mle  \Lambda_{R^nB^n} \mle \rho_{R^n} \otimes  \mathbb{I}_{B^n}\right\}
\end{align}
can be restricted to permutation invariant states $\rho_{R^n}$. 
\end{lem}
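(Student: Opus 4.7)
The plan is a standard symmetrization argument, exploiting that the objective $(J_{\cN})_{RB}^{\otimes n}$ and the constraint set are invariant under the simultaneous permutation action of the symmetric group $S_n$ on $R^n$ and $B^n$.

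Let $W_\pi^R$ and $W_\pi^B$ denote the permutation unitaries on $R^n$ and $B^n$ associated with $\pi \in S_n$. Given any feasible pair $(\rho_{R^n}, \Lambda_{R^n B^n})$ for the program, I would define the symmetrized operators
\begin{align}
\tilde{\rho}_{R^n} &= \frac{1}{n!}\sum_{\pi \in S_n} W_\pi^R \, \rho_{R^n} \, (W_\pi^R)^\dagger, \\
\tilde{\Lambda}_{R^n B^n} &= \frac{1}{n!}\sum_{\pi \in S_n} (W_\pi^R \otimes W_\pi^B) \, \Lambda_{R^n B^n} \, (W_\pi^R \otimes W_\pi^B)^\dagger.
\end{align}
By construction $\tilde{\rho}_{R^n}$ is permutation invariant, so it remains only to verify that $(\tilde{\rho}, \tilde{\Lambda})$ is feasible and achieves the same objective value.

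Next, I would check the three constraints. Positivity $\tilde{\Lambda} \mge 0$ is immediate as a sum of PSD operators. For the marginal, since conjugation by $W_\pi^B$ preserves the identity, $\tilde{\Lambda}_{B^n} = \frac{1}{n!}\sum_\pi W_\pi^B \Lambda_{B^n} (W_\pi^B)^\dagger = \frac{1}{M}\mathbb{I}_{B^n}$. For the upper bound, conjugation preserves the Löwner order, so $\Lambda \mle \rho_{R^n}\otimes \mathbb{I}_{B^n}$ yields $(W_\pi^R \otimes W_\pi^B)\Lambda(W_\pi^R \otimes W_\pi^B)^\dagger \mle W_\pi^R \rho_{R^n} (W_\pi^R)^\dagger \otimes \mathbb{I}_{B^n}$ for each $\pi$; averaging gives $\tilde{\Lambda} \mle \tilde{\rho}_{R^n}\otimes \mathbb{I}_{B^n}$.

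Finally, I would match the objective. The key point is that $(J_{\cN})_{RB}^{\otimes n}$ commutes with $W_\pi^R \otimes W_\pi^B$ for every $\pi$, because permuting the $n$ tensor factors on both $R$ and $B$ simultaneously leaves the tensor product of identical copies of $(J_{\cN})_{RB}$ invariant. Using this and the cyclicity of the trace,
\begin{align}
\tr{\tilde{\Lambda} \cdot (J_{\cN})_{RB}^{\otimes n}} = \frac{1}{n!}\sum_{\pi \in S_n} \tr{\Lambda \cdot (J_{\cN})_{RB}^{\otimes n}} = \tr{\Lambda \cdot (J_{\cN})_{RB}^{\otimes n}}.
\end{align}
Thus, the supremum restricted to permutation invariant $\rho_{R^n}$ equals the full supremum. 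There is no real obstacle here; the only subtlety worth noting is the joint (rather than independent) action of $S_n$ on $R^n$ and $B^n$, which is essential both for preserving the objective and for preserving the constraint $\Lambda \mle \rho \otimes \mathbb{I}$ after symmetrization.
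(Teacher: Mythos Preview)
Your proof is correct and follows essentially the same symmetrization argument as the paper: both exploit the permutation invariance of $(J_{\cN})_{RB}^{\otimes n}$ under the joint $S_n$ action on $R^nB^n$, verify that the constraint set is preserved under this action, and use convexity/linearity to pass to a permutation-invariant optimizer. The only cosmetic difference is that you explicitly average over $S_n$ to produce the symmetrized pair $(\tilde\rho,\tilde\Lambda)$, whereas the paper first checks convexity of the feasible set and invariance of the objective under each individual permutation, then concludes that the supremum is attained at a permutation-invariant pair; these are two presentations of the same idea.
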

\begin{proof}
Define the linear objective function $f(\Lambda_{RB}, \rho_{R})=\tr{\Lambda_{RB} \cdot (J_{\cN})_{RB}} $ and the set \[\cC=\{(\Lambda_{RB}, \rho_{R})|\Lambda_B = \tfrac{1}{M}\mathbb{I}_B, \;  0\mle  \Lambda_{RB} \mle \rho_R \otimes  \mathbb{I}_B, \; \rho_R \in \cS(R)\}.\] 
$\cC$ is a convex set, indeed take $(\Lambda_{RB}, \rho_{R})\in \cC$ and $(\Gamma_{RB}, \sigma_{R})\in \cC$ and $\lambda\in [0,1]$ we have that 
\begin{itemize}
    \item $(\lambda \Lambda +(1-\lambda)\Gamma)_B = \lambda \Lambda_{B} +(1-\lambda)\Gamma_{B} = \lambda \frac{1}{M}\dI_{B} +(1-\lambda)\frac{1}{M}\dI_{B}=\frac{1}{M}\dI_{B}$,
    \item $0\mle \lambda \Lambda_{RB} +(1-\lambda)\Gamma_{RB} \mle  \lambda \rho_{R}\otimes \dI_B +(1-\lambda)\sigma_{R}\otimes \dI_B = (\lambda \rho_{R} +(1-\lambda)\sigma_{R})\otimes \dI_B $,
    \item $\lambda \rho_{R} +(1-\lambda)\sigma_{R}\in \cS(R)$
\end{itemize}
so $\lambda(\Lambda_{RB}, \rho_{R})+(1-\lambda)(\Gamma_{RB}, \sigma_{R})\in \cC$. 

Let $n\in \mathbb{N}$ and   $\pi \in \fS_n$. Let $U^\pi_{\cH^n}$ be the corresponding permutation unitary on $\cH^{\otimes n}$. With $n$ uses of the channel $\cN$, define $f_n(\Lambda_{R^nB^n}, \rho_{R^n})=\tr{\Lambda_{R^nB^n} \cdot (J_{\cN})_{RB}^{\otimes n}}$. We have that 
\begin{align}
    f_n(U^\pi_{R^nB^n} \Lambda_{R^nB^n} (U^\pi_{R^nB^n})^{\dagger}, U^\pi_{R^n} \rho_{R^n} (U^\pi_{R^n})^{\dagger} ) &= \tr{U^\pi_{R^nB^n}\Lambda_{R^nB^n}(U^\pi_{R^nB^n})^{\dagger} \cdot (J_{\cN})_{RB}^{\otimes n}}
    \\&=\tr{\Lambda_{R^nB^n} \cdot(U^\pi_{R^nB^n})^{\dagger} (J_{\cN})_{RB}^{\otimes n}U^\pi_{R^nB^n}}
    \\&= \tr{\Lambda_{R^nB^n} \cdot (J_{\cN})_{RB}^{\otimes n}}
    \\&=f_n(\Lambda_{R^nB^n}, \rho_{R^n}).
\end{align}
Moreover $(U^\pi_{R^nB^n} \Lambda_{R^nB^n} (U^\pi_{R^nB^n})^{\dagger}, U^\pi_{R^n} \rho_{R^n} (U^\pi_{R^n})^{\dagger})\in \cC$ as 
\begin{itemize}
    \item $(U^\pi_{R^nB^n} \Lambda_{R^nB^n} (U^\pi_{R^nB^n})^{\dagger})_{B^n}  = U^\pi_{B^n} \Lambda_{B^n}  (U^\pi_B)^{\dagger} = U^\pi_{B^n} \frac{1}{M}\dI_{B^n} (U^\pi_B)^{\dagger} = \frac{1}{M}\dI_{B^n} $,
    \item $0\mle U^\pi_{R^nB^n} \Lambda_{R^nB^n} (U^\pi_{R^nB^n})^{\dagger} \mle U^\pi_{R^n}U^\pi_{B^n} (\rho_{R^n} \otimes \dI_{B^n} ) (U^\pi_{R^n})^{\dagger}(U^\pi_{B^n})^{\dagger} = U^\pi_{R^n} \rho_{R^n} (U^\pi_{R^n})^{\dagger}\otimes \dI_{B^n}$, 
    \item $U^\pi_{R^n} \rho_{R^n} (U^\pi_{R^n})^{\dagger}\in \cS(R)$.
\end{itemize}
Hence,  the function $f_n$ achieves its supremum over $\cC$ at a permutation invariant pair $(\Lambda_{R^nB^n}, \rho_{R^n})$. 
\end{proof}

%%%%%%%%%%%%%%%%%%%%%%%%%%%%%%%%%%%%%%%%%%%%%%%%

\section{Converse result for the meta-converse strong converse}\label{app:proof-converse}

Here, we prove Proposition \ref{prop:SCE-converse} which we restate.

\begin{prop*}[Restatement of Proposition \ref{prop:SCE-converse}]
     Let  $\cN$ be a quantum channel. For all $r\ge 0$, we have  that
 \begin{align}
     E^{\rm{MC}}(\cN, r) = \lim_{n\rightarrow \infty } - \frac{1}{n}\log \suc^{\rm{MC}}(\cN^{\otimes n}, e^{nr})  \ge \sup_{\alpha\ge 0}  \frac{\alpha}{1+\alpha}\left( r- \widetilde{\cI}_{1+\alpha}(\cN) \right). 
    \end{align}
\end{prop*}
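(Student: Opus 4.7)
The plan is to upper bound $\suc^{\rm{MC}}(\cN^{\otimes n},e^{nr})$ by an exponentially decaying quantity whose rate matches the claimed bound. I would start from the dual formulation of the meta-converse given in Proposition \ref{prop-dual-MC}, namely
\begin{align}
\suc^{\rm{MC}}(\cN^{\otimes n}, e^{nr}) = \sup_{\rho_{R^n}\in\cS(R^n)} \sup_{0\mle O\mle \dI} \Big\{ \tr{\rho_{R^n}^{1/2}(J_\cN)_{RB}^{\otimes n}\rho_{R^n}^{1/2}\, O}\;\Big|\; \sup_{\sigma_{B^n}} \tr{(\rho_{R^n}\otimes \sigma_{B^n})O}\le e^{-nr}\Big\}.
\end{align}
The idea is to exploit the constraint on $\tr{(\rho_{R^n}\otimes\sigma_{B^n})O}$ via the data-processing inequality for the sandwiched R\'enyi divergence applied to the two-outcome measurement $\{O,\dI-O\}$.

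The key inequality is that for any $\alpha>1$, any $\rho\mge 0$, $\sigma\mge 0$, and any $0\mle O\mle \dI$,
\begin{align}
\tr{\rho O}^{\alpha}\,\tr{\sigma O}^{1-\alpha} \le e^{(\alpha-1)\widetilde{D}_\alpha(\rho\|\sigma)},
\end{align}
which follows from DPI together with the fact that for the classical binary distribution $(\tr{\rho O},1-\tr{\rho O})$ versus $(\tr{\sigma O},1-\tr{\sigma O})$, dropping the second term in the R\'enyi sum gives the above. Rearranging yields
\begin{align}
\tr{\rho O} \le e^{\frac{\alpha-1}{\alpha}\widetilde{D}_\alpha(\rho\|\sigma)}\cdot \tr{\sigma O}^{(\alpha-1)/\alpha}.
\end{align}
Applying this with $\rho= \rho_{R^n}^{1/2}(J_\cN)_{RB}^{\otimes n}\rho_{R^n}^{1/2}$ and $\sigma=\rho_{R^n}\otimes\sigma_{B^n}$ for an arbitrary $\sigma_{B^n}\in\cS(B^n)$, and using the constraint $\tr{(\rho_{R^n}\otimes\sigma_{B^n})O}\le e^{-nr}$, I obtain
\begin{align}
\tr{\rho_{R^n}^{1/2}(J_\cN)_{RB}^{\otimes n}\rho_{R^n}^{1/2} O} \le \exp\!\left(-\tfrac{\alpha-1}{\alpha}\bigl(nr - \widetilde{D}_\alpha(\rho_{R^n}^{1/2}(J_\cN)_{RB}^{\otimes n}\rho_{R^n}^{1/2}\,\|\,\rho_{R^n}\otimes\sigma_{B^n})\bigr)\right).
\end{align}
Taking the infimum over $\sigma_{B^n}$ inside the exponent, then the supremum over $\rho_{R^n}$, gives the bound $\suc^{\rm{MC}}(\cN^{\otimes n},e^{nr})\le \exp\!\left(-\tfrac{\alpha-1}{\alpha}(nr-\widetilde{I}_\alpha(\cN^{\otimes n}))\right)$.

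The final step, and what I expect to be the technical crux, is invoking the \emph{additivity} of the sandwiched R\'enyi mutual information of a quantum channel, $\widetilde{I}_\alpha(\cN^{\otimes n}) = n\,\widetilde{I}_\alpha(\cN)$ for all $\alpha>1$. This is a non-trivial fact established in \cite{Gupta2015Mar} and is what ultimately allows the $n$ to factor out cleanly. Combining additivity with the bound above, dividing by $-n$, and optimizing over $\alpha\ge 1$ (reparametrizing $\alpha \leftarrow 1+\alpha$), yields
\begin{align}
-\tfrac{1}{n}\log \suc^{\rm{MC}}(\cN^{\otimes n},e^{nr}) \ge \sup_{\alpha\ge 0}\tfrac{\alpha}{1+\alpha}\bigl(r - \widetilde{I}_{1+\alpha}(\cN)\bigr),
\end{align}
and passing to the limit $n\to\infty$ gives the claim. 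Aside from additivity, the only other subtlety worth noting is ensuring that the support condition $\rho \ll \sigma$ holds so that $\widetilde{D}_\alpha$ is finite; this can always be arranged by perturbing $\sigma_{B^n}$ slightly and taking limits, and so does not affect the asymptotic rate.
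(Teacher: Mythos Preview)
Your proposal is correct and follows essentially the same route as the paper's proof: start from the dual formulation in Proposition~\ref{prop-dual-MC}, apply the data-processing inequality for the sandwiched R\'enyi divergence under the binary measurement $\{O,\dI-O\}$ to convert the constraint $\tr{(\rho_{R^n}\otimes\sigma_{B^n})O}\le e^{-nr}$ into an exponential upper bound on $\tr{\rho_{R^n}^{1/2}(J_\cN)^{\otimes n}\rho_{R^n}^{1/2}\,O}$, and finish with the additivity $\widetilde{I}_{1+\alpha}(\cN^{\otimes n})=n\,\widetilde{I}_{1+\alpha}(\cN)$ from \cite{Gupta2015Mar}. The only cosmetic differences are that the paper uses the version \eqref{eq:MC-dual3} of the dual (with $\inf_{\sigma_B}$ already pulled outside the sup over $O$) and works directly in the $1+\alpha$ parametrization, whereas you use \eqref{eq:MC-dual2} and reparametrize at the end; neither affects the argument.
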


\begin{proof}
    From Proposition \ref{prop-dual-MC} we have that the meta-converse  success probability satisfies: 
\begin{align}
   \suc_{}^{\rm{MC}}(\cN,M)  &= \sup_{{\rho}_{R} \in \cS(R)} \inf_{\sigma_B\in \cS(B)}\sup_{0\mle O\mle \dI} \left\{\tr{\rho_R^{1/2} ( J_{\cN})_{RB}\rho_R^{1/2} \cdot O} \; \middle| \; \tr{(\rho_R\otimes \sigma_B) O}\le \tfrac{1}{M} \right\}. 
\end{align}
Denote by $\rho_{RB} = \rho_R^{1/2} ( J_{\cN})_{RB}\rho_R^{1/2}$. Let $\alpha\ge 0$ and  $0\mle O\mle \dI $ be an observable such that $\tr{(\rho_R\otimes \sigma_B) O}\le \tfrac{1}{M}$. By the data processing property of the sandwiched R\'enyi divergence applied with the measurement channel $\cK(\cdot)= \tr{(\cdot)O}\proj{0}+ \tr{(\cdot)(\dI-O)}\proj{1}$:
\begin{align}
    \widetilde{D}_{1+\alpha}(\rho_{RB}\|\rho_R\otimes \sigma_B) &\ge \widetilde{D}_{1+\alpha}(\cK(\rho_{RB})\|\cK(\rho_R\otimes \sigma_B))
    \\&\ge \frac{1}{\alpha}\log\left( \tr{\rho_{RB}O}^{1+\alpha}\tr{(\rho_R\otimes \sigma_B) O}^{-\alpha} \right)
    \\&\ge \frac{1+\alpha}{\alpha}\log( \tr{\rho_{RB}O})+\log(M),\label{eq:sand}
\end{align}
where we used $\tr{(\rho_R\otimes \sigma_B) O}\le \tfrac{1}{M}$ in the last inequality. Hence, for all $\alpha\ge 0$
\begin{align}
   &\suc_{}^{\rm{MC}}(\cN,M)  \\&= \sup_{{\rho}_{R} \in \cS(R)} \inf_{\sigma_B\in \cS(B)}\sup_{0\mle O\mle \dI} \left\{\tr{\rho_R^{1/2} ( J_{\cN})_{RB}\rho_R^{1/2} \cdot O} \; \middle| \; \tr{(\rho_R\otimes \sigma_B) O}\le \tfrac{1}{M} \right\}
    \\ &\le  \sup_{{\rho}_{R} \in \cS(R)} \inf_{\sigma_B\in \cS(B)}  \exp\left(-\frac{\alpha}{1+\alpha} \left[ \log(M) - \widetilde{D}_{1+\alpha}\left(\rho_R^{1/2} ( J_{\cN})_{RB}\rho_R^{1/2}\big\|\rho_R\otimes \sigma_B\right) \right]\right)
       \\ &=   \exp\left(-\frac{\alpha}{1+\alpha} \left[ \log(M) - \widetilde{I}_{1+\alpha}\left(\cN\right) \right]\right).
\end{align}
Therefore 
\begin{align}
   \suc_{}^{\rm{MC}}(\cN^{\otimes n},e^{nr})  &\le   \inf_{\alpha\ge 0}  \exp\left(-\frac{\alpha}{1+\alpha} \left[ nr - \widetilde{I}_{1+\alpha}\left(\cN^{\otimes n}\right) \right]\right)
 \\&=   \inf_{\alpha\ge 0}   \exp\left(-n\frac{\alpha}{1+\alpha} \left[ r - \widetilde{I}_{1+\alpha}\left(\cN\right) \right]\right),
\end{align}
where we used the additivity of the sandwiched channel mutual information $\widetilde{I}_{1+\alpha}\left(\cN^{\otimes n}\right) = n\widetilde{I}_{1+\alpha}\left(\cN\right)$ \cite{Gupta2015Mar}. Finally 
\begin{align}
     - \frac{1}{n}\log \suc^{\rm{MC}}(\cN^{\otimes n}, e^{nr})  \ge \sup_{\alpha\ge 0}  \frac{\alpha}{1+\alpha}\left( r- \widetilde{\cI}_{1+\alpha}(\cN) \right) 
\end{align}
and the proposition follows by taking $n\rightarrow \infty$. 
\end{proof}

%%%%%%%%%%%%%%%%%%%%%%%%%%%%%%%%%%%%%%%%%%%%%%%%

\section{Matrix Chernoff inequality}\label{sec:rdm-matrix}

The following theorem is part of  \cite[Theorem 5.1.1]{Tropp2015May}.

\begin{theorem}\label{thm:rdm-matrix}
    Consider a finite sequence $\{X_k\}$ of independent, random, Hermitian matrices with common dimension $d$. Assume that 
    \[
        0\le \lambda_{\min}(X_k) \quad \text{and} \quad \lambda_{\max}(X_k)\le L \quad \text{for each index }  k.\]
    Introduce the random matrix 
    \[        Y=\sum_k X_k.\]
Define the   maximum eigenvalue $\mu_{\max}$ of the expectation $\ex{Y}$:
\[\mu_{\max} = \lambda_{\max}(\ex{Y}) = \lambda_{\max}\left(\sum_k \ex{X_k} \right).\] 
Then 
\[\pr{\lambda_{\max}(Y)\ge (1+\delta)\mu_{\max}} \le d\left[\frac{e^{\delta}}{(1+\delta)^{1+\delta}}\right]^{\mu_{\max}/L}.\]
\end{theorem}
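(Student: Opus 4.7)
The proof follows the standard matrix Laplace transform method pioneered by Ahlswede--Winter and refined by Tropp. The starting point is the matrix analog of Markov's inequality: for any $\theta > 0$ and any Hermitian random matrix $Y$,
\begin{equation}
\pr{\lambda_{\max}(Y) \ge t} \le e^{-\theta t}\,\ex{\tr{\exp(\theta Y)}}.
\end{equation}
I would reduce the problem to controlling $\ex{\tr{\exp(\theta Y)}}$ uniformly in $\theta>0$ and then optimize the choice of $\theta$ at the end. Because $Y=\sum_k X_k$ is a sum of independent matrices, the task is to pull the expectation inside the trace exponential in a way that respects non-commutativity.

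\textbf{Key steps.} First, I would invoke Lieb's concavity theorem, which states that for any fixed Hermitian $H$ the map $A \mapsto \tr{\exp(H+\log A)}$ is concave on positive-definite matrices. Applied iteratively to the sum $\theta Y = \sum_k \theta X_k$ together with Jensen's inequality for the conditional expectations, this gives the subadditivity of matrix cumulant generating functions:
\begin{equation}
\ex{\tr{\exp(\theta Y)}} \le \tr{\exp\!\Bigl(\textstyle\sum_k \log \ex{e^{\theta X_k}}\Bigr)}.
\end{equation}
Second, I would bound each matrix MGF using the scalar inequality $e^{\theta x}\le 1+\frac{e^{\theta L}-1}{L}x$, valid for $x\in[0,L]$ by convexity of $x \mapsto e^{\theta x}$ on that interval. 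Since $0 \mle X_k \mle L\cdot \dI$, the transfer rule for operator-monotone and operator-convex functions yields $\ex{e^{\theta X_k}} \mle \dI + \frac{e^{\theta L}-1}{L}\ex{X_k}$, and then by operator monotonicity of the logarithm $\log \ex{e^{\theta X_k}} \mle \frac{e^{\theta L}-1}{L}\ex{X_k}$.

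\textbf{Finishing up.} Combining the previous step with the fact that $A\mle B$ implies $\tr{\exp(A)}\le \tr{\exp(B)}$ (via Weyl monotonicity, once $A,B$ are summed appropriately), I obtain
\begin{equation}
\ex{\tr{\exp(\theta Y)}} \le \tr{\exp\!\Bigl(\tfrac{e^{\theta L}-1}{L}\ex{Y}\Bigr)} \le d\cdot \exp\!\Bigl(\tfrac{e^{\theta L}-1}{L}\mu_{\max}\Bigr),
\end{equation}
where the second inequality uses $\tr{\exp(M)}\le d\cdot \exp(\lambda_{\max}(M))$. Plugging this into the Markov bound with $t=(1+\delta)\mu_{\max}$ and choosing $\theta = \frac{1}{L}\log(1+\delta)$ minimizes the resulting exponent and produces the advertised bound $d\bigl[\frac{e^{\delta}}{(1+\delta)^{1+\delta}}\bigr]^{\mu_{\max}/L}$.

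\textbf{Main obstacle.} The non-trivial ingredient is Lieb's concavity theorem; without it one cannot linearly separate the independent summands inside the trace exponential, and the naive Golden--Thompson bound gives only a two-matrix splitting. Once Lieb is taken as a black box, the remaining steps are routine operator-monotonicity manipulations and a one-dimensional optimization.
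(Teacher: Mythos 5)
Your outline is correct and is precisely the standard proof of Tropp's matrix Chernoff bound (matrix Laplace transform, Lieb's concavity theorem to obtain subadditivity of the matrix cumulant generating function, the chord bound $e^{\theta x}\le 1+\frac{e^{\theta L}-1}{L}x$ on $[0,L]$ transferred to the matrix MGF, and the choice $\theta=\frac{1}{L}\log(1+\delta)$). The paper does not reprove this statement but imports it verbatim from the cited reference \cite[Theorem 5.1.1]{Tropp2015May}, whose proof is exactly the one you describe.
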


%%%%%%%%%%%%%%%%%%%%%%%%%%%%%%%%%%%%%%%%%%%%%%%%

\end{document}